\documentclass[letterpaper,twocolumn,10pt]{article}
\usepackage{usenix}

\usepackage{tikz}
\usepackage{amsmath}
\usepackage{graphicx} 
\usepackage{subfigure}
\usepackage{CJKutf8}
\usepackage{amsthm}
\usepackage{amssymb}
\usepackage{amsfonts}
\usepackage{amsmath}
\usepackage{cleveref}
\usepackage{booktabs}
\usepackage{multirow}
\usepackage{xspace}
\usepackage{bbding}

\usepackage{pifont}

\usepackage{tikz}

\newtheorem{definition}{Definition}[section]
\newtheorem{lemma}{Lemma}
\newtheorem{theorem}{Theorem}

\newcommand{\AHE}{\text{RHE}\xspace}

\newcommand{\skEn}{\mathtt{SkEnc}\xspace}
\newcommand{\pkEn}{\mathtt{PkEnc}\xspace}
\newcommand{\De}{\mathtt{Dec}\xspace}
\newcommand{\KeyGen}{\mathtt{KeyGen}\xspace}
\newcommand{\EvalAdd}{\mathtt{EvalAdd}\xspace}

\newcommand{\pkGen}{\mathtt{PkGen}\xspace}
\newcommand{\Setup}{\mathtt{Setup}\xspace}

\newcommand{\EvalMult}{\mathtt{EvalMult}\xspace}
\newcommand{\DeT}{\mathtt{Dec3}\xspace}
\newcommand{\EvkGen}{\mathtt{EvkGen}\xspace}
\newcommand{\Resize}{\mathtt{Resize}\xspace}

\newcommand{\negl}{\text{negligible}\xspace}

\newcommand{\wKeyGen}{\mathtt{WKeyGen}\xspace}
\newcommand{\Embed}{\mathtt{Embed}\xspace}
\newcommand{\Extract}{\mathtt{Extract}\xspace}
\newcommand{\emk}{\mathtt{emk}\xspace}
\newcommand{\exk}{\mathtt{exk}\xspace}

\newcommand{\AddRWatermark}{\text{ARWMark}\xspace}
\newcommand{\MultRWatermark}{\text{MRWMark}\xspace}

\newcommand{\sk}{\mathtt{sk}\xspace}
\newcommand{\pk}{\mathtt{pk}\xspace}
\newcommand{\evk}{\mathtt{evk}\xspace}

\newcommand{\skscheme}{\text{SkRHE}\xspace}
\newcommand{\pkscheme}{\text{PkRHE}\xspace}
\newcommand{\var}{\mathbf{D}\xspace}
\newcommand{\E}{\mathbf{E}\xspace}
\newcommand{\cov}{\mathbf{Cov}\xspace}

\crefname{table}{Table}{Tables}
\crefname{definition}{Def.}{Defs.}
\crefname{theorem}{Thm.}{Thms.}
\crefname{equation}{Eq.}{Eq.s}
\crefname{figure}{Fig.}{Figs.}
\crefname{appendix}{Appendix}{Appendices}
\crefname{algorithm}{Algorithm}{Algs.}

\usepackage{filecontents}

\begin{document}

\date{}

\title{Ciphertext-Native Watermarking for RLWE-Based Homomorphic Encryption}

\author{
{Yufei Zhou}\\
Sun Yat-sen University
\and
{Peijia Zheng}\\
Sun Yat-sen University
} 

\maketitle

\begin{abstract}
Homomorphic encryption (HE) schemes based on the Ring Learning with Errors (RLWE) problem have been rapidly developed and widely applied to secure computation tasks, such as privacy-preserving deep learning inference and database queries. However, existing HE schemes mainly focus on the feasibility and efficiency of homomorphic computation, while practical requirements including ciphertext copyright protection, provenance tracking, and computation supervision remain largely unexplored. In this work, we propose a watermarking technique for RLWE-based HE ciphertexts. By exploiting the algebraic structure of RLWE polynomials, we embed watermark information into ciphertext noise without affecting plaintext correctness. To address watermark degradation caused by homomorphic operations, we introduce two practical schemes. The first scheme, \AddRWatermark, leverages noise stratification to achieve robustness against homomorphic additive operations. The second scheme, \MultRWatermark, is based on the roots of a linear equation and supports zero-bit watermarking while remaining robust against both homomorphic addition and multiplication. We provide rigorous theoretical analysis demonstrating that the proposed schemes preserve the original security of HE while maintaining correctness and watermark robustness. Extensive experiments further validate the effectiveness and practicality of the proposed watermarking schemes.
\end{abstract}


\section{Introduction}
Homomorphic encryption (HE) enables computation directly over encrypted ciphertexts, thereby supporting non-interactive data processing~\cite{armknecht2015guide}. 
HE has been widely applied to privacy-preserving tasks, including medical data processing~\cite{lee2025comprehensive,yanez2026homomorphic} and natural language processing~\cite{moon2025thor,de2025encryptedllm}. 
Unlike non-homomorphic ciphertexts such as AES~\cite{rijmen2001advanced}, processed HE ciphertexts often embody additional value beyond the original encrypted data.
For example, after homomorphic image processing~\cite{challa2015secure}, the resulting ciphertext contains not only the original image information, but also the embedded processing logic and the computational effort contributed by the service provider. 
In this sense, homomorphic image processing adds value to the original encrypted image. 
Similarly, in privacy-preserving deep learning inference~\cite{moon2025thor,de2025encryptedllm}, encrypted inputs are evaluated by encrypted models to produce encrypted inference results. 
The resulting ciphertext is the output of the deployed model and therefore embodies the value of the model computation. 

Based on these observations, we argue that HE ciphertexts themselves are valuable digital assets. 
During transmission and computation, HE ciphertexts are continuously transformed and enriched, while their data flow involves multiple parties with potentially conflicting interests. 
As a result, HE ciphertexts require dedicated protection mechanisms. 
Unfortunately, existing HE schemes do not consider the protection of ciphertext ownership and associated rights.

For some applications, homomorphic hash functions~\cite{yao2018homomorphic}
or homomorphic message authentication codes (MACs)~\cite{feng2022multi}
are sufficient. However, these approaches require additional
authentication information that is maintained separately from the
original ciphertexts, introducing extra management overhead and
potential security concerns. Moreover, they are not designed to preserve
authentication information under general homomorphic evaluations. Noise
introduced during homomorphic computation, such as relinearization after
ciphertext multiplication~\cite{mono2023new} or approximate arithmetic
in CKKS~\cite{cheon2017homomorphic}, may invalidate the verification
process.

Similarly, zero-knowledge proofs (ZKPs)~\cite{bunz2018bulletproofs,yang2022non}
can verify the correctness of outsourced computations without revealing
private information. However, ZKPs mainly focus on final result
verification and cannot establish the provenance of intermediate
ciphertext states during homomorphic evaluation. For example, in a
multi-party homomorphic computation pipeline, a participant cannot prove
that its encrypted contribution was included in the computation, since
the intermediate ciphertexts remain encrypted throughout the process.
Therefore, existing approaches cannot provide ciphertext-level
provenance and accountability.

To address these limitations, we introduce watermarking techniques into
homomorphic ciphertexts to enable ciphertext provenance, ownership
attribution, and accountability. Unlike external authentication
mechanisms, our approach embeds watermark information directly into
ciphertexts and enables verification after homomorphic computations.
The embedded watermark remains transparent to the evaluator and does not
require any modification to the standard HE evaluation procedure.

However, embedding watermarks into homomorphic ciphertexts is
non-trivial, and existing watermarking techniques cannot be directly
applied to HE ciphertexts. HE ciphertexts follow strict mathematical
structures, and directly modifying ciphertext coefficients may cause
decryption failure. Even slight perturbations may corrupt the underlying
plaintext information, requiring watermark embedding to preserve
ciphertext correctness.

Moreover, HE ciphertexts are computationally indistinguishable from
uniform distributions, making traditional statistical watermarking
methods, such as difference expansion~\cite{peng2022reversible} and
histogram shifting~\cite{coatrieux2012reversible}, unsuitable for this
setting. In addition, homomorphic operations differ fundamentally from
transformations in conventional watermarking scenarios, and embedded
watermarks may be destroyed during homomorphic evaluation. In
particular, numerical relationships within or across ciphertexts can
change significantly after homomorphic addition or multiplication.

These challenges motivate us to design watermarking schemes that are
specifically tailored for homomorphic ciphertexts while preserving
decryption correctness, HE security, and robustness against
homomorphic operations.

Our contributions are summarized as follows.

\begin{itemize}

\item We formalize the problem of watermarking homomorphic ciphertexts
and define its security requirements, including ciphertext provenance,
ownership attribution, transparency, and robustness against homomorphic
operations.

\item We propose a 1-bit robust watermarking scheme,
\AddRWatermark, for additively homomorphic ciphertexts. The scheme
exploits the redundancy in ciphertext noise to embed additive
perturbations and employs correlation-based detection for robust
watermark extraction. It can be extended to support $t$-bit watermarks
while maintaining robustness under homomorphic addition.

\item We propose a zero-bit watermarking scheme,
\MultRWatermark, for fully homomorphic ciphertexts. The construction
leverages the solution space of linear equation systems to embed
watermark information and remains robust against both homomorphic
addition and multiplication.

\item We provide theoretical analysis and extensive experimental
evaluation of the proposed schemes. The results demonstrate their
security, robustness, and efficiency under various homomorphic
operations.

\end{itemize}





The implementation is available at the github
\href{https://github.com/pahjastia/Watermark_for_HE}{link}.

\section{Related Work}

\begin{table*}[t]
\centering
\caption{Comparison between existing approaches and homomorphic ciphertext watermarking.}
\label{tab:related_work_comparison}
\resizebox{\linewidth}{!}{
\begin{tabular}{lcccccc}
\toprule
\textbf{Approach} & \textbf{Target} & \textbf{Provenance} & \textbf{Transparency} & \textbf{HE Compatible} & \textbf{Survive HE Ops.} & \textbf{Extra Data} \\\midrule

Homomorphic MAC~\cite{feng2022multi} & Computation & $\times$ & $\times$ & Partial & Limited & Tag \\

ZKP \cite{bunz2018bulletproofs,yang2022non}& Computation Result & $\times$ & $\times$ & Partial & $\times$ & Proof \\

Multimedia Watermarking \cite{hartung1999multimedia,cox2002digital} & Digital Media & $\checkmark$ & $\times$ & $\times$ & $\times$ & No \\

Neural Network Watermarking \cite{uchida2017embedding} & ML Models & $\checkmark$ & Model-dependent & $\times$ & Model-dependent & No \\

Cryptographic Watermarking \cite{barak2012possibility,hopper2007weak,cohen2016watermarking} 
& Cryptographic Functions & $\checkmark$ & Not Designed & Not Designed & Not Designed & No \\\midrule

\textbf{Ours} & \textbf{HE Ciphertexts} & $\checkmark$ & $\checkmark$ & $\checkmark$ & $\checkmark$ & No \\\bottomrule
\end{tabular}
}
\end{table*}

\subsection{Verifiable and Authenticated Computation}

Verifiable computation and authenticated computation enable clients to
verify the correctness of outsourced computations performed by
untrusted servers. Homomorphic MACs and authenticated data structures
provide efficient verification mechanisms by attaching authentication
information to data or computation states
\cite{feng2022multi}. However, such authentication information is
typically maintained separately from the encrypted data and cannot
provide ownership attribution or provenance tracking for ciphertexts.

Zero-knowledge proof (ZKP)-based approaches allow a prover to convince a
verifier of the correctness of a computation without revealing private
information
\cite{bunz2018bulletproofs,yang2022non}. Nevertheless, existing ZKP techniques mainly focus on verifying the
correctness of the computation result and do not address the provenance
of intermediate encrypted states during homomorphic evaluation.

In contrast, we aim to protect the ownership and provenance of
homomorphic ciphertexts by embedding watermark information directly
into ciphertexts and enabling post-computation verification. The
embedded watermark remains transparent to the evaluator and introduces
no additional requirements during homomorphic computation.

\subsection{Digital and Cryptographic Watermarking}

Digital watermarking has been extensively studied for ownership
protection and content tracing in multimedia and machine learning
domains. Traditional multimedia watermarking schemes embed
identification information into digital media while preserving
perceptual quality
\cite{hartung1999multimedia,cox2002digital}. Neural network
watermarking further extends this idea to model ownership protection by
embedding watermarks into model parameters or behaviors
\cite{uchida2017embedding}. However, these techniques rely on
redundant structures in media signals or neural networks and cannot be
directly applied to homomorphic ciphertexts with strict algebraic
constraints.

Cryptographic function watermarking studies how to embed marks into
cryptographic functions while preserving their functionality and
preventing unauthorized mark removal
\cite{barak2012possibility,hopper2007weak,cohen2016watermarking}.
However, watermarking cryptographic functions and watermarking
homomorphic ciphertexts address fundamentally different objects. A
cryptographic function is a reusable algorithmic object, whereas a
homomorphic ciphertext is an individual encrypted instance that
undergoes continuous algebraic transformations during computation.
Existing cryptographic watermarking techniques therefore cannot be
directly applied to homomorphic ciphertexts, which require additional
guarantees including decryption correctness, security preservation, and
robustness against homomorphic operations.

Our work introduces ciphertext-native watermarking for RLWE-based
HE. Unlike previous watermarking approaches, our
scheme embeds ownership information into ciphertext structures while
preserving plaintext correctness, HE security, and watermark
detectability after homomorphic evaluations.

\cref{tab:related_work_comparison} summarizes the differences
between existing approaches and our proposed homomorphic ciphertext
watermarking schemes.

\section{Preliminaries}
In this section, we first introduce the RLWE problem and then present a simplified RLWE-based HE scheme.

\subsection{Basic Notation}
We define the polynomial ring $\mathbf{R}=\mathbb{Z}(x)/(x^N+1)$, where $N$ is a power of 2. For an integer $q$, we use $\mathbf{R}_q$ to denote $\mathbf{R}/q\mathbf{R}$. In other words, the elements in $\mathbf{R}_q$ are all polynomials with coefficient modulus $q$ and polynomial modulus $x^N+1$, and have integer coefficients. 

We use lowercase bold letters for polynomials. All polynomial operations are performed over $\mathbf{R}_q$ by default. We use $[i]$ to denote the $i$-th coefficient of a polynomial or the $i$-th element of a vector, where $i$ starts from $1$ by default.
For example, we use $\mathbf{a}[i]$ to denote the $i$-th coefficient of the polynomial $\mathbf{a}$. 
The inner product of two ring elements $\mathbf{a}$ and $\mathbf{b}$ is $\langle\mathbf{a},\mathbf{b}\rangle=\sum_{i=1}^{N}\mathbf{a}[i]\cdot \mathbf{b}[i] \mod q$. If there is no ambiguity, we use $\sum_i$ to denote $\sum_{i=1}^{N}$.  


For a positive integer $q$ and any $\mathbf{a} \in \mathbf{R}$, $\mathbf{a} \mod q$ will reduce the coefficients of $\mathbf{a}$ into the range $(-\frac{q}{2},\frac{q}{2}]$.
We use $\Vert \mathbf{a} \Vert=\max_{i} \vert \mathbf{a}[i] \vert$ to denote the infinite norm of $\mathbf{a}$. In other words, $\Vert \mathbf{a} \Vert$ is the  maximum absolute value of the coefficients of $\mathbf{a}$.

We say a function $f(x)$ from the natural numbers to the non-negative real numbers is $\negl$ if for all constants $c$ there exists an $N_0$ such that for all integers $n>N_0$ it holds that $f(n)<n^{-c}$~\cite{katz2015introduction}. An event is considered impossible (or computationally infeasible) if its probability is $\negl$. Two distributions are indistinguishable if the probability of distinguishing them is no more than 1/2 plus a negligible function. 

\subsection{The Ring Learning with Errors Problem}
The RLWE problem was introduced in~\cite{lyubashevsky2010ideal}. Here we give a special and simplified version of RLWE which is widely used in the design of HE schemes~\cite{brakerski2014leveled,cheon2017homomorphic}.

\begin{definition}
\label{def:RLWE}
    Let $\chi_s, \chi_e$ be two distributions over $\mathbf{R}_q$. The RLWE problem is to distinguish the following two distributions: 

    First, choose a random element $\mathbf{s}$ from distribution $\chi_s$, $m$ uniformly random elements $\mathbf{a}_1,\cdots,\mathbf{a}_m$ from $\mathbf{R}_q$, and $m$ elements $\mathbf{e}_1,\cdots,\mathbf{e}_m$ from distribution $\chi_e$. Compute $\mathbf{b}_i=\mathbf{a}_i\mathbf{s}+\mathbf{e}_i$ and output $(\mathbf{a}_i,\mathbf{b}_i)$, where $ i=1,\cdots,m$.
    
    Second, choose $2m$ uniformly random elements $\mathbf{a}_i^\prime,\cdots,\mathbf{a}_m^\prime,\mathbf{b}_1^\prime,\cdots,\mathbf{b}_m^\prime$ from $\mathbf{R}_q$. Output  $(\mathbf{a}_i^\prime,\mathbf{b}_i^\prime)$, where $ i=1,\cdots,m$.
\end{definition}
The RLWE assumption is that the RLWE problem is hard. 
As proved  in~\cite{lyubashevsky2010ideal}, the well-established shortest vector problem (SVP) \cite{micciancio2002shortest} over ideal lattices can be reduced to it. In other words, the RLWE problem is at least as hard as the SVP problem over ideal lattices.

\subsection{Homomorphic Encryption}
\label{sec:homomorphic_encryption}
We first present the definitions of additive homomorphism and multiplicative homomorphism:
\begin{definition}
\label{def:HE}
    Let $m_1,m_2$ be two plaintexts, $\operatorname{Enc}(\cdot)$ be the encryption function and $\operatorname{Dec}(\cdot)$ be the decryption function. An encryption scheme is additive homomorphic when $\operatorname{Dec}(\operatorname{Enc}(m_1) + \operatorname{Enc}(m_2))=m_1+m_2$. Similarly, an encryption scheme is multiplicative homomorphic when  $\operatorname{Dec}(\operatorname{Enc}(m_1) \times \operatorname{Enc}(m_2))=m_1 \times m_2$.
\end{definition}

Next, we introduce the RLWE-based HE scheme \AHE. The scheme supports two ciphertext forms: secret-key encryption \skscheme\ and public-key encryption \pkscheme.
The detailed procedures of the secret-key encryption scheme \skscheme are described as follows:
\begin{itemize}
    \item $\Setup(\lambda,p)$: Given the security parameter $\lambda$ and plaintext modulus $p$, determine the polynomial ring $\mathbf{R}_q$, the error distribution $\chi_e$, and the secret-key distribution $\chi_s$. Typically, $\chi_e$ is a bounded discrete Gaussian distribution, while $\chi_s$ is the ternary distribution $\{-1,0,1\}^N$~\cite{albrecht2021homomorphic}. The ciphertext modulus $q$ is chosen such that $\gcd(p,q)=1$.
    
    \item $\KeyGen$: Sample $\mathbf{s}$ from $\chi_s$. Output the secret key $\sk=(1,\mathbf{s})$.
    
    \item $\skEn(\sk,\mathbf{m})$: The message $\mathbf{m} $ is an element in $\mathbf{R}_p$. Sample $\mathbf{a}$ uniformly from $\mathbf{R}_q$. Then sample $\mathbf{e}$ from error distribution $\chi_e$.
    Compute $\mathbf{b}=\mathbf{a}\cdot \mathbf{s}+p\mathbf{e}+\mathbf{m}$. Output the ciphertext $c_t=(\mathbf{b},-\mathbf{a})$.
    
    \item $\De(\sk,c_t)$: Let ciphertext $c_t=(\mathbf{c}_1,\mathbf{c}_2)$. Output the plaintext $\mathbf{m}=(\mathbf{c}_1+\mathbf{s}\cdot \mathbf{c}_2) \mod p$.
    
    \item $\EvalAdd(c_t,c_t^\prime)$: Let $c_t=(\mathbf{c}_1,\mathbf{c}_2)$ be the ciphertext of $\mathbf{m}_1$ and $c_t^\prime=(\mathbf{c}^\prime_1,\mathbf{c}^\prime_2)$ be the ciphertext of $\mathbf{m}_2$. Output the ciphertext $(\mathbf{c}_1+\mathbf{c}^\prime_1,\mathbf{c}_2+\mathbf{c}^\prime_2)$, which is a ciphertext of the sum of $\mathbf{m}_1$ and $\mathbf{m}_2$.
\end{itemize}

The details of \pkscheme, together with the security analysis of \AHE, are provided in Appendix~\ref{appendix:ahe}.

\section{Homomorphic Ciphertext Watermarking}






We define homomorphic ciphertext watermarking as the process of modifying a ciphertext to embed information associated with that ciphertext. The embedded information is referred to as a watermark. A homomorphic ciphertext watermarking scheme typically consists of the following three algorithms:
\begin{itemize}
    \item $\wKeyGen(\lambda)$: Generate an embedding key $\emk$ and an extracting key $\exk$ that satisfy the security parameter $\lambda$. Output the keys $\emk, \exk$.

    \item $\Embed(\emk, w, c_t)$: Embed the watermark $w$ into the ciphertext $c_t$ with embedding key $\emk$. Output the embedded ciphertext $c_t^\prime$.

    \item $\Extract(\exk, c_t)$: Extract the watermark from the ciphertext $c_t$ with the help of the extracting key $\exk$. Output the embedded watermark or false when no watermark is embedded into $c_t$. 
\end{itemize}

The embedding key $\emk$ contains all secret information required for watermark embedding, while the extraction key $\exk$ contains the secret information required for watermark extraction.
When the embedded watermark is a zero-bit watermark, $\Extract$ outputs either true or false, indicating whether the ciphertext contains a watermark. When the embedded watermark encodes one or multiple bits, $\Extract$ outputs the embedded watermark value or false.

We illustrate the workflow of homomorphic ciphertext watermarking in \cref{fig:watermark_flow}. The embedding algorithm $\Embed$ uses the embedding key $\emk$ to embed the watermark information $w$ into the ciphertext $c_{t,0}$, producing the watermarked ciphertext $c_{t,1}$. The ciphertext $c_{t,1}$ is then processed through a sequence of homomorphic operations or perturbed by additional noise, resulting in $c_{t,1}^\prime$.
The extraction algorithm $\Extract$ then uses the extraction key $\exk$ to recover the watermark $w^\prime$ from $c_{t,1}^\prime$.

Although the workflow is straightforward to describe, establishing a well-defined homomorphic ciphertext watermarking scheme requires clarifying the properties that such a scheme should satisfy. Next, we first review several representative examples of robust watermarking in other domains, and then derive the properties required for watermarking in homomorphic ciphertexts.

\textbf{Multimedia Watermarking. }
Multimedia watermarking is a technique for embedding information into multimedia content~\cite{hartung1999multimedia,cox2002digital}. The embedding process is generally required not to affect the quality of the original media. For example, image watermarking schemes~\cite{wan2022comprehensive} should not introduce noticeable distortion to the image after embedding. At the same time, the embedded information is expected to remain robust under standard image processing operations, such as geometric transformations (e.g., cropping, scaling, and rotation)~\cite{hu2020cover} and compression during transmission~\cite{zhang2024robust,chen2020jsnet}. Watermarking schemes for other multimedia carriers follow similar requirements~\cite{ahvanooey2020anitw,zhang2023m,luo2023dvmark}.

\textbf{Neural Network Watermarking. }
Neural network watermarking embeds information into neural network models~\cite{uchida2017embedding}. Although the parameters of the watermarked model may differ significantly from those of the original model, the model performance should not degrade substantially after watermark embedding. In practice, neural network models are often fine-tuned or pruned to better fit downstream tasks. Therefore, robustness requires that the embedded watermark remain detectable even after fine-tuning or minor pruning~\cite{zhang2021deep,yasui2022coded}.

\textbf{Cryptographic Functions Watermarking. }
Cryptographic function watermarking embeds information into cryptographic functions~\cite{barak2012possibility,hopper2007weak,cohen2016watermarking}. The embedding process should preserve the functionality of the original function. That is, the outputs of the watermarked function should be computationally indistinguishable from those of the original function on most inputs. Robustness further requires that, as long as the modified function remains indistinguishable from the watermarked one, the probability of removing the watermark is negligible~\cite{goyal2019watermarking,nishimaki2020equipping,kitagawa2024watermarking}.

From the above examples, we identify two fundamental properties of watermarking. First, watermark embedding should preserve the essential properties of the host object, which we refer to as fidelity. Second, the embedded watermark should remain robust against noise introduced by operations on the host object, which we denote as robustness. Therefore, homomorphic ciphertext watermarking must simultaneously satisfy both fidelity and robustness.
In addition, homomorphic ciphertext watermarking must preserve the security guarantees inherited from the underlying encryption scheme. 

\begin{figure}[!t]
\centering
\includegraphics[width=\linewidth]{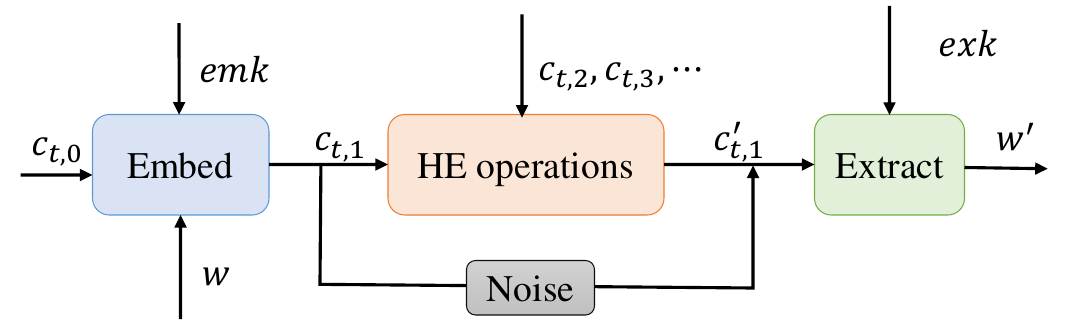}
\caption{Homomorphic ciphertext watermarking system. The HE operations usually are homomorphic additions or multiplications. Besides HE operations, other operations such as key switching may add noise to the ciphertext.}
\label{fig:watermark_flow}
\end{figure}




\subsection{Fidelity}
The property of fidelity for homomorphic ciphertext watermarking requires that watermark embedding does not alter the decryption result of the ciphertext. That is, the ciphertexts before and after watermark embedding should be equivalent with respect to decryption.
We formally define fidelity  as follows:
\begin{definition}
\label{def:fidelity}
    Let $c_t$ be a ciphertext of $\mathbf{m}$ encrypted with \AHE and $c_t^\prime$ be the ciphertext after embedding watermark into $c_t$ using a watermark scheme. Let $\sk$ be the secret key of \AHE. 
    The watermark scheme has fidelity when  $\De(\sk,c_t^\prime)=\De(\sk,c_t)$.
\end{definition}

\subsection{Robustness}

The robustness of homomorphic ciphertext watermarking ensures that, after a sequence of homomorphic operations, the embedded watermark can still be correctly extracted. The homomorphic operations mainly refer to homomorphic addition and homomorphic multiplication. In other words, even after the watermarked ciphertext is evaluated under a polynomial function, the embedded watermark remains correctly recoverable.
We define robustness for homomorphic ciphertext watermarking as follows:
\begin{definition}
     Let $c_{t,1}$ be the ciphertext of \AHE after embedding watermark $w$ into a \AHE ciphertext of $\mathbf{m}_1$ using a watermark scheme, $(c_{t,2},c_{t,3},\cdots)$ are \AHE ciphertexts of $(\mathbf{m}_2,\mathbf{m}_3,\cdots)$.
     Let $f(\cdot)$ be a polynomial function where only additions and multiplications are allowed. Let $c_{t,1}^\prime=f(c_{t,1},c_{t,2},\cdots)$ and $\mathbf{m}_1^\prime=f(\mathbf{m}_1,\mathbf{m}_2,\cdots)$. Let $\sk$ be the secret key of \AHE and  $\exk$ be the extracting key.  
     The watermark scheme has robustness, if $\Extract(\exk,c_{t,1}^\prime)=w$ when $\De(\sk,c_{t,1}^\prime)=\mathbf{m}_1^\prime$.
\end{definition}

In the robustness requirement, we only consider the correctness of watermark extraction under correct decryption. We do not discuss the behavior of watermark extraction when decryption fails. When decryption is incorrect, the decrypted output is typically a random plaintext, which has no meaningful interpretation in most applications; consequently, the extracted watermark in this case is also of limited relevance.

We also require that, after adding small perturbations that do not affect correct decryption, the embedded watermark can still be reliably extracted. In RLWE-based ciphertexts, noise management techniques such as key switching and modulus switching~\cite{acar2018survey} often introduce additional additive noise while preserving correct decryption. A well-designed homomorphic ciphertext watermarking scheme should therefore remain robust against such  noise.



\subsection{Security-Preserving}
The watermarked ciphertext must preserve the security guarantees of the original encryption scheme and should not reveal any information about the plaintext, since perfectly hiding the underlying plaintext is the most fundamental property of a ciphertext. In the setting of homomorphic ciphertext watermarking, the adversary may obtain the embedding key, the embedded watermark, or the watermarking algorithm itself. However, we assume that the adversary cannot obtain the secret key $\sk$ of the underlying HE scheme. We require that watermark embedding does not change the distribution of ciphertexts. In other words, ciphertexts with embedded watermarks should be computationally indistinguishable from ciphertexts without watermarks.
We formalize this property of watermarking as follows:
\begin{definition}
     Assume that $b$ is uniformly sampled from $\{0,1\}$. Let $c_{t,0}$ and $c_{t,1}$ be two ciphertexts of two equal size messages. Let $c_{t,b}$ be the ciphertext after embedding $w$ with a watermarking scheme.  Give $c_{t,b}$ to an adversary. The adversary can ask for the embedded ciphertext of any ciphertext except $c_{t,0}$ and $c_{t,1}$. 
     If every probabilistic polynomial-time adversary can guess \(b\) with probability at most \(1/2+\epsilon(\lambda)\), where \(\epsilon(\lambda)\) is a negligible function, then the watermarking scheme is security-preserving.
\end{definition}

In these properties, robustness captures the preservation of the embedded watermark: after a sequence of operations, the watermark should not be destroyed. In contrast, fidelity concerns the preservation of the hidden plaintext in the ciphertext; that is, the watermark embedding process should not affect the underlying plaintext. Finally, the security requirement ensures that the watermark embedding procedure does not leak any information about the plaintext.

\section{Watermarking Constructions for \AHE}

In this section, we present two proposed watermarking constructions for \AHE ciphertexts: an addition-robust watermarking scheme, \AddRWatermark, and a multiplication-robust watermarking scheme, \MultRWatermark.

\subsection{\AddRWatermark}
\label{sec:add_watermark}

\begin{figure}[!t]
\centering
\includegraphics[width=\linewidth]{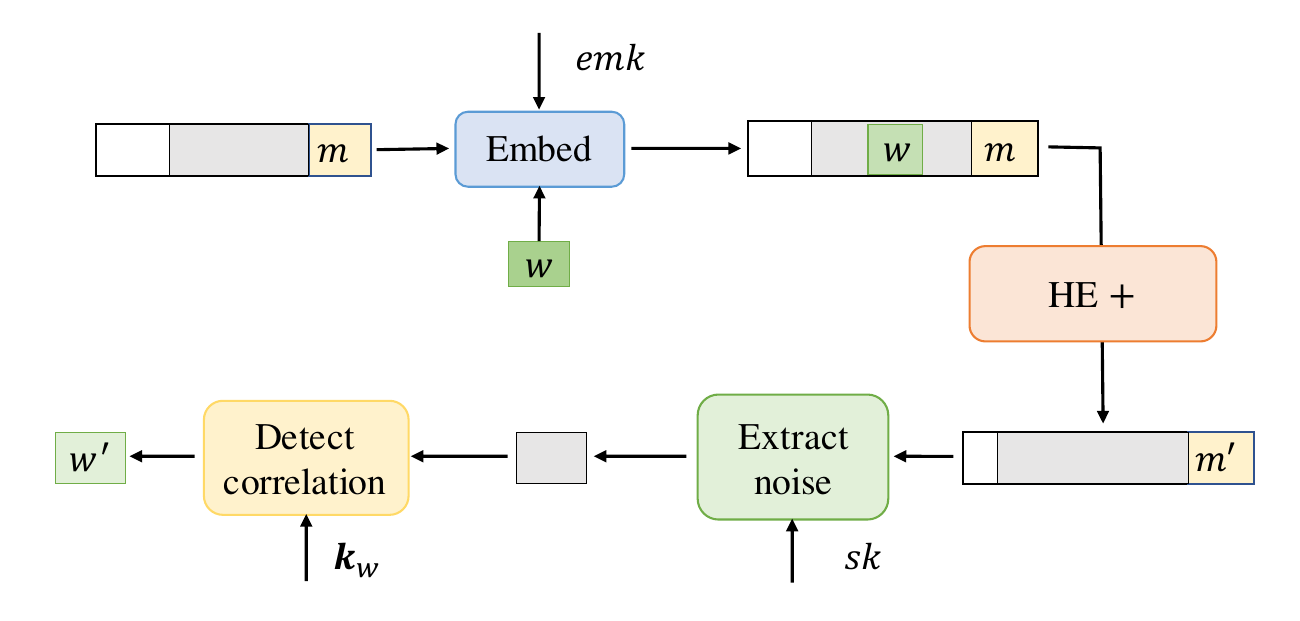}
\caption{Watermarking workflow of \AddRWatermark. The watermark $w$ is embedded into the noise part of ciphertexts. Only homomorphic additions are considered.}
\label{fig:framework1}
\end{figure}

In \AddRWatermark, the watermark is embedded into a single ciphertext. We embed the watermark into the decryption noise of the ciphertext, so that the plaintext hidden in the ciphertext is not affected. During decryption, the decryption noise can be extracted and used for watermark detection. The watermark in this scheme can only be detected by an entity that holds the decryption secret key $\sk$, which prevents unauthorized parties from performing watermark detection. The workflow  is illustrated in \cref{fig:framework1}.

We first describe how to embed a 1-bit watermark into a single ciphertext, and then extend the construction to the multi-ciphertext setting and the embedding of a $t$-bit watermark.

\subsubsection{1-Bit Watermark Embedding}

According to the decryption equation of \AHE, for an entity holding the secret key $\sk$, the noise and the plaintext in a ciphertext $c_t$ can be separated. Specifically, applying modulo-$p$ reduction to $<c_t,\sk>$ completely separates the noise from the plaintext, which is also required for correct decryption in \AHE.

Our main idea is to embed the watermark into the noise component of the ciphertext structure while keeping it separated from the plaintext, thereby preserving correct decryption, as illustrated in \cref{fig:framework1}. 

We assume that the parameters of \AHE have already been generated before watermark embedding, together with the secret key $\sk=(1,\mathbf{s})$.
Let the 1-bit watermark to be embedded be denoted by $w$, where $w$ is mapped to $\{-1,1\}$. We use $w_e$ to denote the mapped watermark value. Specifically, when $w=0$, we set $w_e=-1$, and when $w=1$, we set $w_e=1$. In addition, we use $w_e=0$ to indicate that no watermark is embedded.
We next describe the watermark embedding and detection algorithms in detail.


\textbf{Embedding.}
Let $\chi_w$ be the embedding key distribution over $\mathbf{R}_q$. To prevent brute-force attacks from recovering $\mathbf{k}_w$, $\chi_w$ should provide sufficient randomness. In practice, the encryption noise distribution $\chi_e$ can also be used as the embedding key distribution.
To embed the watermark as part of the ciphertext noise, we first sample $\mathbf{k}_w$ from $\chi_w$ as the embedding key $\emk$, and then determine the embedding intensity $I_w\in \mathbb{Z}^+$.

Similar to spread-spectrum watermarking \cite{cox1997secure}, which embeds watermarks into different frequency components, our scheme embeds the watermark into different noise ranges and uses $I_w$ to shift $\mathbf{k}_w$ across these ranges. A larger embedding intensity reduces the influence of encryption noise on the embedded watermark, thereby improving robustness.

Let the cover ciphertext be $c_t=(\mathbf{c}_1,\mathbf{c}_2)$ with plaintext modulus $p$.
The embedding equation is defined as follows:
\begin{align}
    \label{eq:s_insert_watermark}
    \mathbf{c}^\prime_1=\mathbf{c}_1+pw_eI_w\mathbf{k}_w,
\end{align}
where $p$ is used to ensure that the plaintext remains unaffected.
The watermarked result is $c_t^\prime=(\mathbf{c}^\prime_1,\mathbf{c}_2)$.

We modify only the first component of $c_t$ while keeping the second component unchanged, thereby preserving the decryption structure. The embedding process does not depend on $\sk$, which means that any party can perform watermark embedding on the ciphertext.

\textbf{Detection.}
If the detector knows the original ciphertext $c_t$, the watermark can be trivially extracted from $c_t^\prime-c_t$. However, such detection fails once the ciphertext is modified and therefore cannot provide robustness. We instead consider blind detection based on correlation, where the detector does not know $c_t$.

As illustrated in \cref{fig:framework1}, watermark detection consists of two steps: extracting noise from the ciphertext and detecting correlation.

\textit{Extracting Noise.}
This step requires the secret key $\sk$, and its procedure is similar to decryption.
Let the ciphertext to be detected be $c^\prime_t=(\mathbf{c}^\prime_1,\mathbf{c}^\prime_2)$.
We first divide by $p$ to separate the plaintext component, and then divide by the embedding intensity $I_w$ to extract the watermark from the corresponding noise range.
The extracted noise $\mathbf{e^\prime}$ is computed as follows:
\begin{align}
    \label{eq:extract_noise}
    \mathbf{e^\prime}=\lfloor \frac{\mathbf{c}^\prime_1+\mathbf{c}^\prime_2 \cdot \mathbf{s}}{I_wp} \rceil
\end{align}
where $\lfloor \cdot \rceil$ denotes the coefficient-wise rounding operation.

Since homomorphic operations increase the noise in the ciphertext, the extracted noise may contain both the embedded watermark and accumulated encryption noise. We therefore use correlation-based detection to determine whether the watermark exists.

\textit{Detecting Correlation.}
From a communication perspective, the watermark can be viewed as the transmitted signal, while the encryption noise corresponds to the noise introduced during transmission. According to the construction of \AHE, the encryption noise follows additive Gaussian noise. By the Neyman-Pearson lemma~\cite{neyman1933ix}, linear correlation \cite{miller1999computing} is the optimal method for detecting the presence of $\mathbf{k}_w$. The correlation value can be computed as follows:
\begin{align}
    \label{eq:compute_sm}
    \rho = \frac{1}{N}\times <\mathbf{e^\prime},\mathbf{k}_w>. 
\end{align}

We then determine the embedded information according to a predefined threshold $T_w$ $(T_w>0)$:
\begin{align}
    \label{eq:judge}
    w_e=\left\{
    \begin{array}{cc}
        -1, &\rho \leq -T_w\\
        0, &-T_w<\rho<T_w\\
        1, &\rho \geq T_w
    \end{array}
    \right.
\end{align}

If $w_e=0$, the output indicates that no watermark is embedded in $c_t$. If $w_e=-1$, the detected watermark bit is $0$. If $w_e=1$, the detected watermark bit is $1$.

Since homomorphic addition introduces Gaussian noise, the above detection procedure is robust under homomorphic addition. However, homomorphic addition also increases the variance of the Gaussian noise. Therefore, the threshold $T_w$ depends not only on the embedding intensity $I_w$ and the embedding key distribution $\chi_w$, but also on the expected number of homomorphic additions.






\subsubsection{Multi-Ciphertext Setting}
\label{sec:MCEmbed}

\textbf{Embedding.} 
Assume that there exist $m$ ciphertexts $c_{t,1}, c_{t,2}, \cdots, c_{t,m}$. We first generate a template vector $\mathbf{r}$ by uniformly random sampling, where $\mathbf{r} \in \{-1,1\}^m$.

Next, we compute $\mathbf{r^\prime} = w_e \mathbf{r}$. Each entry $\mathbf{r^\prime}[i]$ is then embedded into the corresponding ciphertext $c_{t,i}$ using \cref{eq:s_insert_watermark}.

\textbf{Detection.}
Given $m$ ciphertexts, we first compute the correlation of each ciphertext using  \cref{eq:compute_sm}. Let the resulting correlation values be $\rho_1, \rho_2, \cdots, \rho_m$.

We then compute the averaged correlation as follows:
\begin{align}
    \label{eq:ref_sm}
    \rho=\frac{1}{m}\sum_i \rho_i \cdot \mathbf{r}[i].
\end{align}

Finally, we apply \cref{eq:judge} to determine whether a watermark is embedded and to recover the embedded information based on the value of $\rho$ and the threshold $T_w$.

Since the overall correlation is obtained by averaging the single-ciphertext correlation values, its variance is reduced by a factor of $\frac{1}{m}$ compared to the single-ciphertext case, leading to a more robust detection procedure.

\subsubsection{t-Bit Embedding}
\label{sec:add_watermark_capcity}
The watermark $w$ may consist of $t$ bits rather than a single bit. A straightforward approach is to embed each bit into a separate ciphertext. However, this becomes impractical when the number of available ciphertexts is limited.

Our detection method relies on linear correlation. When multiple linearly independent embedding keys are used within the same ciphertext, their respective detections do not interfere with each other. Inspired by \cite{kim2017watermarking}, we therefore embed multi-bit watermarks into a single ciphertext to increase the embedding capacity.

Let the binary representation of $w$ be
$w = w_0 + w_1 \cdot 2 + w_2 \cdot 2^2 + \cdots + w_{t-1} \cdot 2^{t-1}$, where $w_i \in \{0,1\}$. Using the single-ciphertext embedding method, each bit $w_i$ is embedded into the same ciphertext $c_t$ with a corresponding embedding key $\mathbf{k}_{w,i}$. The embedding keys are chosen to be linearly independent, i.e., $\langle \mathbf{k}_{w,i}, \mathbf{k}_{w,j} \rangle = 0$ for $i \neq j$. In practice, perfect orthogonality is not required; it is sufficient that their inner products are sufficiently small, which simplifies the construction of embedding keys.

During extraction, each $w_i$ is recovered separately. The full watermark $w$ is correctly recovered only if all bits $w_i$ are successfully extracted.
This construction increases the watermarking capacity from 1 bit to $t$ bits.






\subsection{\MultRWatermark}
\label{sec:mult_watermark}

\begin{figure}[!t]
\centering
\includegraphics[width=\linewidth]{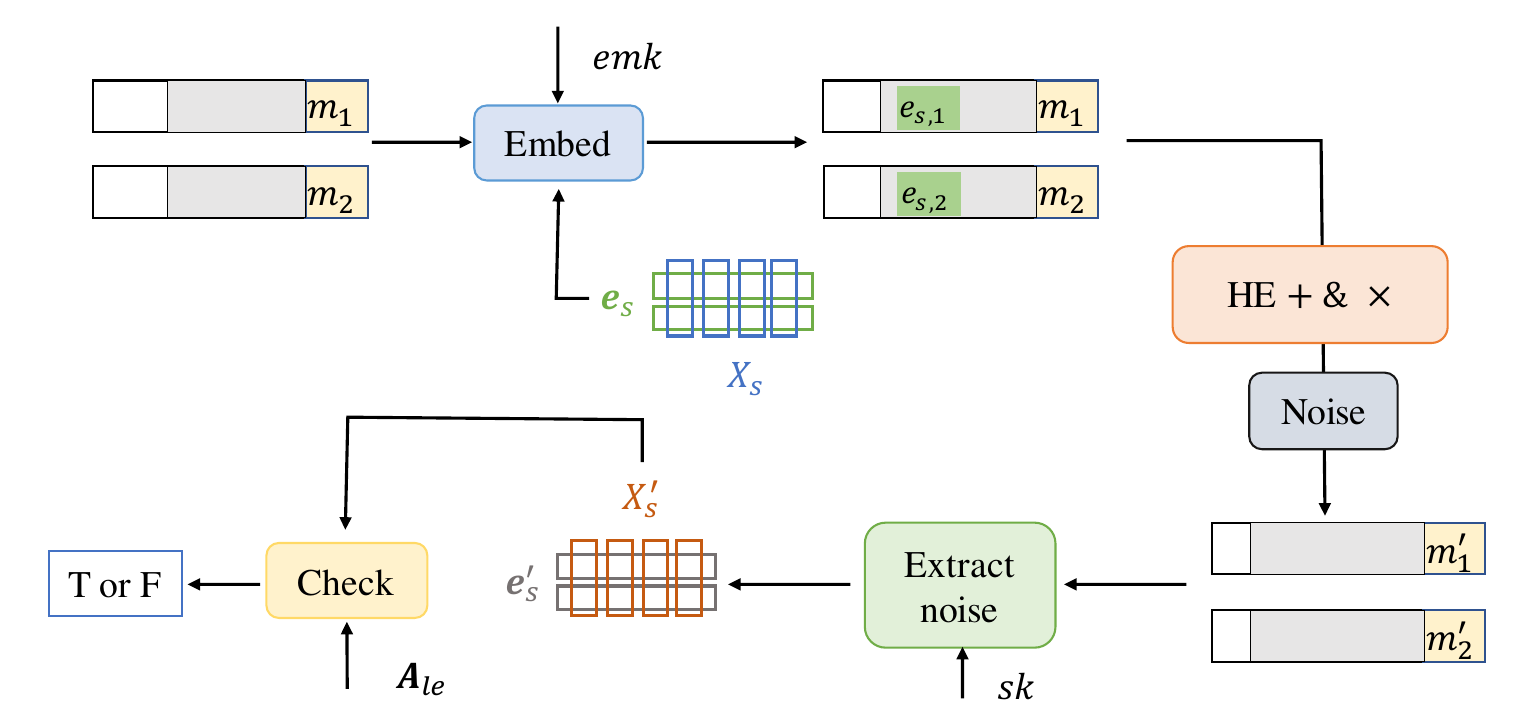}
\caption{Watermarking workflow of \MultRWatermark. The solutions  are rearranged and then embedded into multiple ciphertexts. After extracting, check whether the extracted vectors are the solutions to the system of linear equations. Output T for true or F for false.}
\label{fig:mult_watermark}
\end{figure}

When a multiplication is performed between two ciphertexts, the decryption noises from different ciphertexts are mixed together. This significantly complicates watermark detection, and linear correlation-based detection becomes almost ineffective for recovering the embedded watermark. In this section, we extend the previous embedding scheme to obtain a watermarking method that is robust against ciphertext multiplication.

The key idea is to construct a carefully designed embedding key such that the embedded watermark satisfies an invariant relation under ciphertext multiplication, enabling robust detection after homomorphic multiplication. It is worth noting that the proposed scheme in this subsection is a 0-bit watermarking scheme, meaning that it only detects the presence of a watermark without extracting any specific message.

Ciphertext multiplication introduces substantial noise, and the noise in \pkscheme is typically much larger than that in \skscheme. Therefore, we restrict our attention to embedding on \skscheme ciphertexts, where the noise level is relatively small. The watermark is carried by a set of ciphertexts, which are treated as a single logical unit, and the same operation is applied to each ciphertext. Such a scenario is common in practice, for example in multimedia processing, where identical batch operations are applied to different images.
We now present the embedding and detection procedures.




\subsubsection{Embedding}
By analyzing multiplication over $\mathbf{R}_q$, we observe that the coefficients of the product polynomial are linear combinations of coefficients from different terms of the original polynomials. Moreover, any linear combination of solutions to a homogeneous linear system remains a solution to the same system \cite{leon2006linear}.
Based on this observation, we embed the solution space of a homogeneous linear system into a set of ciphertexts, thereby obtaining a watermark that remains invariant under the linear combinations induced by ciphertext multiplication.


Assume that the homogeneous linear system $\mathbf{A}_{le}\cdot X_{le}=0$ has infinitely many solutions, where $X_{le}$ is a vector consisting of $m$ unknown variables. The matrix $\mathbf{A}_{le}$ has fewer than $m$ rows and exactly $m$ columns. Let $X_{s,1}, X_{s,2}, \cdots, X_{s,N}$ denote $N$ solutions of this system with relatively small norms, where different solutions are allowed to be identical but cannot be the zero vector. Note that each solution is a vector of length $m$. The embedding key $\emk$ is the matrix $\mathbf{A}_{le}$.


Next, we construct $m$ noise vectors $\mathbf{e}_{s,1}, \mathbf{e}_{s,2}, \cdots, \mathbf{e}_{s,m}$ as follows:
\begin{align}
    \label{eq:le_es_construct}
    \mathbf{e}_{s,i}[j]=X_{s,j}[i]
\end{align}
where $i=1,2,\cdots,m$ and $j=1,2,\cdots,N$.

We then embed these $m$ noise vectors into the corresponding $m$ ciphertexts $\mathbf{c}_{t,1}, \mathbf{c}_{t,2}, \cdots, \mathbf{c}_{t,m}$:
\begin{equation}
\label{eq:LE_insert_watermark}
    \begin{aligned}
        \mathbf{c}_{r,i}[0]=&\mathbf{c}_{t,i}[0]+pI_w\mathbf{e}_{s,i},  \\
        \mathbf{c}_{r,i}[1]=&\mathbf{c}_{t,i}[1]
    \end{aligned}
\end{equation}
where $i=1,2,\cdots,m$.




\subsubsection{Detection}
Detection in this scheme mainly consists of two steps. The first step extracts the embedded noise, and the second step determines whether the vectors reconstructed from the extracted noise are nonzero solutions of the original linear system.

We first use \cref{eq:extract_noise} to extract the noise from each ciphertext. The extracted noise is then transformed, following the inverse process of \cref{eq:le_es_construct}, to recover $N$ vectors. If all recovered vectors are zero vectors, we conclude that no watermark is embedded. If all recovered vectors are nonzero solutions to $\mathbf{A}_{le}\cdot X_{le}=0$, then the ciphertexts are determined to contain a watermark.

Since the extraction process requires the secret key $\sk$, watermark detection can only be performed by the holder of the secret key.



\subsubsection{Construction of the Embedding Key}
\label{sec:A_le_construct}
After adding random noise to a solution of a linear system, the resulting vector may no longer satisfy the original system. Therefore, during embedding, the watermark must be shifted outside the range of the intrinsic decryption noise in practical ciphertexts to enable reliable extraction. This requires the embedding intensity to be sufficiently large. However, the tolerable decryption noise budget is limited, so the additional noise introduced by watermark embedding should be kept as small as possible. Consequently, we require the linear system to admit integer solutions with small infinity norms.

Constructing a linear system that satisfies these requirements is not straightforward. We therefore construct the linear system in a reverse manner.

Assume that the target matrix $\mathbf{A}_{le}$ is a $k \times m$ matrix with $k<m$. We first randomly sample $m$ small integers from the range $[-B_w,B_w]$ to construct a base solution $X_1$. By construction, $X_1$ has a sufficiently small norm to satisfy the watermark embedding requirement.

Next, we treat each entry of $\mathbf{A}_{le}$ as a variable and construct the equation system
$\mathbf{A}_{le} \cdot X_1=0$.
This system contains $k$ equations but has $km$ variables, and therefore admits infinitely many solutions.

Since $\mathbf{A}_{le}$ is not required to be an integer matrix, it is easy to obtain a matrix $\mathbf{A}_{le}$ satisfying the above condition.

Clearly, $\mathbf{A}_{le} \cdot X_{le}=0$ has the solution $X_1$. Moreover, because $k<m$, the constructed linear system admits infinitely many solutions.




\subsubsection{Embedding Capacity}

\MultRWatermark does not support repeated embedding within the same set of ciphertexts. Therefore, the capacity extension method described in \cref{sec:add_watermark_capcity} cannot be directly applied. To address this limitation, we present an alternative approach for extending the watermarking capacity.

We first extend the 0-bit watermark to a 1-bit watermark, and then further generalize it to a $t$-bit watermark. We construct two linear systems with coefficient matrices $\mathbf{A}_{le,0}$ and $\mathbf{A}_{le,1}$, respectively. The solution spaces of these two systems are designed such that their nonzero solutions do not overlap. The solution space of $\mathbf{A}_{le,0}\cdot X_{le}=0$ is used to represent the watermark value $0$, while the solution space of $\mathbf{A}_{le,1}\cdot X_{le}=0$ represents the watermark value $1$. When the embedded watermark is $0$, we embed a vector from the solution space of $\mathbf{A}_{le,0}\cdot X_{le}=0$. When the embedded watermark is $1$, we embed a vector from the solution space of $\mathbf{A}_{le,1}\cdot X_{le}=0$. During detection, we determine whether the extracted vector is a solution of either of the two linear systems. This construction yields a 1-bit watermarking scheme.

Assume that a large number of ciphertexts are available. We select $t \times m$ ciphertexts and partition them into $t$ groups, each consisting of $m$ ciphertexts. Each group is used to embed one bit of information, thereby extending the capacity to $t$ bits. All $t$ groups use the same linear system construction; thus, it suffices to generate two linear systems whose nonzero solution spaces do not overlap.

\section{Theoretical Analysis}

We use $\E(\cdot)$ and $\var(\cdot)$ to denote expectation and variance, respectively. And $\cov(\cdot,\cdot)$ denotes the covariance.
Let $\chi_u$ be the uniform distribution over $\mathbf{R}_q$.
Let $\chi_s$ be a distribution over $\mathbf{R}_q$ with coefficients independently sampled from $\{-1,0,1\}$.
Let $\chi_e$ be a bounded discrete Gaussian distribution ~\cite{albrecht2021homomorphic} over $\mathbf{R}_q$, with mean $0$ and variance $\sigma^2$. We assume samples from $\chi_e$ are bounded by $B_e$. For simplicity, we set $\chi_w=\chi_e$.
We write $x \sim \chi$ to denote that $x$ follows distribution $\chi$. Due to space constraints, all lemmas are deferred to Appendix~\ref{appendix:lemmas}.

The secret key is $\sk=(1,\mathbf{s})$, and the public key is $\pk=(\mathbf{k}_1,\mathbf{k}_2)=(\mathbf{a}_k \cdot \mathbf{s}+p\mathbf{e}_k,-\mathbf{a}_k)$, where $\mathbf{a}_k \sim \chi_u$ and $\mathbf{e}_k \sim \chi_e$.
Let $w_e \in \{-1,0,1\}$ denote the embedded message for a single ciphertext, where $0$ indicates no embedding. In \AddRWatermark, the embedding key is sampled as $\mathbf{k}_w \sim \chi_e$. 





\subsection{Analysis of \AddRWatermark}
In this subsection, we mainly analyze the correctness of \AddRWatermark. 
The security-preserving, fidelity, and robustness analyses are provided in Appendix~\ref{appendix:other_analysis_add}.

\subsubsection{Correctness}
The detection of watermark information is not a continuous function, but rather a threshold-based decision. Therefore, we analyze the expectation and variance of the extracted correlation value under different embedding watermarks.

According to \cref{eq:extract_noise}, the noise term is divided by $pI_w$ and then rounded. Since $\Vert \mathbf{m} \Vert \leq p/2$, we have $\lfloor \frac{\mathbf{m}}{pI_w} \rceil= 0$. Therefore, in the following analysis, we ignore the effect of $\mathbf{m}$ on the extraction noise.
In addition, we neglect the rounding error introduced during noise extraction.

\begin{theorem}
\label{th:sk_e_v}
Let a ciphertext of \skscheme be
$c_t = (\mathbf{a} \cdot \mathbf{s} + p\mathbf{e}_1 + \mathbf{m}, -\mathbf{a})$,
where $\mathbf{a} \sim \chi_u$ and $\mathbf{e}_1 \sim \chi_e$.
After embedding the watermark $w_e$ using \AddRWatermark, the expectation of the correlation value $\rho_{sk}$ is $w_e\sigma^2$, and its variance is $\frac{\sigma^4}{NI_w^2} + \frac{2w_e^2\sigma^4}{N}$.
\end{theorem}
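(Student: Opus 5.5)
The plan is to reduce $\rho_{sk}$ to a normalized sum of $N$ i.i.d.\ per-coefficient terms, and then compute the mean and variance of a single term.

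\textbf{Step 1: the extracted noise.} By \cref{eq:s_insert_watermark}, the watermarked ciphertext is $c_t'=(\mathbf{a}\mathbf{s}+p\mathbf{e}_1+\mathbf{m}+pw_eI_w\mathbf{k}_w,\,-\mathbf{a})$. Hence
\[
\mathbf{c}_1'+\mathbf{c}_2'\mathbf{s}=p\mathbf{e}_1+\mathbf{m}+pw_eI_w\mathbf{k}_w ,
\]
with no reduction mod $q$ as long as decryption is correct. Dividing by $pI_w$ as in \cref{eq:extract_noise}, dropping $\mathbf{m}$ and ignoring rounding (as stipulated just before the theorem), gives
\[
\mathbf{e}'=\mathbf{e}_1/I_w+w_e\mathbf{k}_w .
\]
Plugging this into \cref{eq:compute_sm} yields
\[
\rho_{sk}=\frac1N\sum_i Z_i,\qquad Z_i=\frac{\mathbf{e}_1[i]\mathbf{k}_w[i]}{I_w}+w_e\,\mathbf{k}_w[i]^2 .
\]
Since $\mathbf{e}_1,\mathbf{k}_w\sim\chi_e$ are sampled independently with i.i.d.\ coefficients, the $Z_i$ are i.i.d. It therefore suffices to compute $\E(Z_i)$ and $\var(Z_i)$, and then use $\E(\rho_{sk})=\E(Z_i)$ and $\var(\rho_{sk})=\var(Z_i)/N$.

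\textbf{Step 2: expectation.} By independence and zero mean, $\E(\mathbf{e}_1[i]\mathbf{k}_w[i])=\E(\mathbf{e}_1[i])\,\E(\mathbf{k}_w[i])=0$. Also $\E(\mathbf{k}_w[i]^2)=\sigma^2$. Hence $\E(\rho_{sk})=w_e\sigma^2$.

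\textbf{Step 3: variance.} Write $e=\mathbf{e}_1[i]$ and $k=\mathbf{k}_w[i]$. Expand
\[
\var(Z_i)=\frac{\var(ek)}{I_w^2}+w_e^2\var(k^2)+\frac{2w_e}{I_w}\cov(ek,k^2).
\]
The three pieces are handled as follows.
\begin{itemize}
\item First term: $\var(ek)=\E(e^2)\E(k^2)=\sigma^4$, using independence and zero means.
\item Cross term: $\cov(ek,k^2)=\E(e)\E(k^3)-0=0$, since $\E(e)=0$ and $e$ is independent of $k$.
\item Middle term: $\var(k^2)=\E(k^4)-\sigma^4$.
\end{itemize}
Treating $\chi_e$ as Gaussian gives $\E(k^4)=3\sigma^4$, so $\var(k^2)=2\sigma^4$. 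Therefore
\[
\var(Z_i)=\frac{\sigma^4}{I_w^2}+2w_e^2\sigma^4,
\]
and dividing by $N$ gives the claimed variance $\frac{\sigma^4}{NI_w^2}+\frac{2w_e^2\sigma^4}{N}$.

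\textbf{Main obstacle.} The delicate step is the fourth-moment identity $\E(k^4)=3\sigma^4$. This holds exactly only for a continuous Gaussian, not for a bounded discrete Gaussian. I would justify it via a lemma from Appendix~\ref{appendix:lemmas}, approximating $\chi_e$ by $\mathcal N(0,\sigma^2)$, where the truncation at $B_e$ and the discreteness contribute only negligible corrections for the usual $\sigma\approx 3.2$. The same approximation regime also covers two other simplifications already made: the absence of wraparound mod $q$, and dropping the rounding error.
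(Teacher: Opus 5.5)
Your proof is correct and is essentially the paper's argument. The paper writes $\rho_{sk}=\frac{1}{NI_w}(\langle\mathbf{e}_1,\mathbf{k}_w\rangle+w_eI_w\langle\mathbf{k}_w,\mathbf{k}_w\rangle)$ and applies \cref{lm:xy_e}, \cref{lm:xx_e} (the chi-square step giving variance $2N\sigma^4$) and \cref{lm:cov} (zero cross-covariance), which is your per-coefficient computation summed over $i$; the Gaussian fourth-moment approximation you flag is the same one the paper makes silently in \cref{lm:xx_e} by treating each coefficient of $\chi_e$ as $\mathcal{N}(0,\sigma^2)$.
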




\begin{proof}
By \cref{eq:noise_in_sk} and \cref{eq:extract_noise}, the extracted noise from ciphertext $c_t$ is $\frac{\mathbf{e}_1}{I_w}+w_e\mathbf{k}_w$.

\begin{equation*}
    \begin{aligned}
         \rho_{sk} &=\frac{1}{I_wN}\cdot \langle \mathbf{e}_1+w_eI_w\mathbf{k}_w,\mathbf{k}_w\rangle\\
    &=\frac{1}{NI_w} (\langle\mathbf{e}_1,\mathbf{k}_w\rangle+w_eI_w\langle\mathbf{k}_w,\mathbf{k}_w\rangle)
    \end{aligned}
\end{equation*}

By \cref{lm:xy_e}, we have $\E(\langle \mathbf{e}_1,\mathbf{k}_w\rangle)=0$ and $\var(\langle\mathbf{e}_1,\mathbf{k}_w\rangle)=N\sigma^4$.
By \cref{lm:xx_e}, we have $\E(\langle \mathbf{k}_w,\mathbf{k}_w\rangle)=N\sigma^2$ and $\var(\langle\mathbf{k}_w,\mathbf{k}_w\rangle)=2N\sigma^4$.
Thus, we have
\begin{equation*}
    \begin{aligned}
        \E(\rho_{sk})&=\E(\frac{1}{NI_w} (<\mathbf{e}_1,\mathbf{k}_w>+w_eI_w<\mathbf{k}_w,\mathbf{k}_w>))\\
        &=\frac{1}{NI_w}\E(<\mathbf{e}_1,\mathbf{k}_w>)+\frac{w_e}{N}\E(<\mathbf{k}_w,\mathbf{k}_w>)\\
        &=w_e\sigma^2.
    \end{aligned}
\end{equation*}

By \cref{lm:cov}, we have $\cov(\langle \mathbf{e}_1,\mathbf{k}_w\rangle,\langle\mathbf{k}_w,\mathbf{k}_w\rangle)=0$. Therefore, we have
\begin{equation*}
    \begin{aligned}
        \var(\rho_{sk})&=\var(\frac{1}{NI_w} (\langle\mathbf{e}_1,\mathbf{k}_w\rangle+w_eI_w\langle\mathbf{k}_w,\mathbf{k}_w\rangle))\\
        &=\frac{1}{N^2I_w^2}\var(\langle\mathbf{e}_1,\mathbf{k}_w\rangle)+\frac{w_e^2}{N^2}\var(\langle\mathbf{k}_w,\mathbf{k}_w\rangle)\\
        &=\frac{\sigma^4}{NI_w^2}+\frac{2w_e^2\sigma^4}{N}.       
    \end{aligned}
\end{equation*}

\end{proof}

    


    

Next, we estimate the extraction correctness probability based on the expectation and variance.

We first analyze the case where the detection result is $1$ for a watermark embedded in a \skscheme ciphertext.
We use $P_r\{\cdot\}$ to denote the probability that an event occurs.
Since $\E(\rho_{sk})=\sigma^2$ and $\var(\rho_{sk})=\frac{\sigma^4}{NI_w^2}+\frac{2\sigma^4}{N}$ when $w_e=1$, we assume that the decision threshold satisfies $T_w<\sigma^2$.

When embedding $w_e=1$, by Chebyshev's Inequality \cite{cohen2015markov}, we have
\begin{equation*}
    \begin{aligned}
        P_r\{\rho_{sk}< T_w\}&<P_r\{\vert \rho_{sk}-\sigma^2 \vert>\sigma^2-T_w \}\\
    &<\frac{1}{(\sigma^2-T_w)^2}\cdot (\frac{\sigma^4}{NI_w^2}+\frac{2\sigma^4}{N}).
    \end{aligned}
\end{equation*}
That is, the True Positive Rate (TPR) is at least
\begin{equation*}
    1-\frac{1}{(\sigma^2-T_w)^2}\cdot (\frac{\sigma^4}{NI_w^2}+\frac{2\sigma^4}{N}).    
\end{equation*}

When $w_e=0$, we have $\E(\rho_{sk})=0$ and $\var(\rho_{sk})=\frac{\sigma^4}{NI_w^2}$.
Therefore,
\begin{equation*}
    \begin{aligned}
        P_r\{\rho_{sk}>= T_w\}&<P_r\{\vert \rho_{sk} \vert>T_w \}<\frac{\sigma^4}{T_w^2NI_w^2}.
    \end{aligned}
\end{equation*}

When $w_e=-1$, we have $\E(\rho_{sk})=-\sigma^2$ and $\var(\rho_{sk})=\frac{\sigma^4}{NI_w^2}+\frac{2\sigma^4}{N}$.
Thus,
\begin{equation*}
    \begin{aligned}
        P_r\{\rho_{sk}>= T_w\}&<P_r\{\vert \rho_{sk}+\sigma^2 \vert>T_w+\sigma^2 \}\\
    &<\frac{1}{(\sigma^2+T_w)^2}\cdot (\frac{\sigma^4}{NI_w^2}+\frac{2\sigma^4}{N}).
    \end{aligned}
\end{equation*}




Therefore, the False Positive Rate (FPR) is upper bounded by
\begin{equation*}
\frac{\sigma^4}{T_w^2NI_w^2} + \frac{1}{(\sigma^2+T_w)^2}\cdot \left(\frac{\sigma^4}{NI_w^2}+\frac{2\sigma^4}{N}\right).
\end{equation*}


We aim for a high TPR and a low FPR. Both objectives can be achieved by increasing $N$ and $I_w$.
When the detection result is $-1$, a similar derivation leads to the same conclusion.
When the embedding carrier is a \pkscheme ciphertext, the expectation of $\rho$ is the same as in \skscheme, which leads to analogous results.

Typically, $\sigma$ is treated as a fixed parameter, commonly set to $3.2$ \cite{albrecht2021homomorphic}.
Under this setting, the expectation of $\rho$ is fixed. Therefore, both TPR and FPR are determined by the variance of $\rho$. A smaller variance implies a higher TPR and a lower FPR.
By adjusting the parameters $N$ and $I_w$, we can control the variance of $\rho$, thereby increasing the TPR and decreasing the FPR.
This allows the different values of $w_e$ to be distinguished with high probability.
In particular, larger values of $N$ and $I_w$ lead to a smaller variance of $\rho$, resulting in a higher TPR, a lower FPR, and a clearer separation between different values of $w_e$.




\subsection{Analysis of \MultRWatermark}

Due to the space limit, we only analyze the correctness of \MultRWatermark in this subsection. The analyses of other properties are provided in Appendix~\ref{appendix:other_analysis_mult}.


\subsubsection{Correctness}
\begin{theorem}
Assume that there are $m$ fresh ciphertexts $c_{t,1},\cdots,c_{t,m}$.
When the watermark embedding intensity satisfies $I_w>2B_e+1$, the watermark embedded into these $m$ ciphertexts using \MultRWatermark can be correctly extracted.
\end{theorem}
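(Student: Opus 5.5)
We plan to follow a single fresh ciphertext through \cref{eq:LE_insert_watermark} and the extraction formula \cref{eq:extract_noise}, show that the rounding recovers the embedded integer vector $\mathbf{e}_{s,i}$ exactly, and then conclude from the construction in \cref{eq:le_es_construct}. Fix $i\in\{1,\dots,m\}$ and write $c_{t,i}=(\mathbf{a}_i\mathbf{s}+p\mathbf{e}_i+\mathbf{m}_i,-\mathbf{a}_i)$ with $\mathbf{e}_i\sim\chi_e$, so $\Vert\mathbf{e}_i\Vert\le B_e$. After embedding, the decryption inner product is
\begin{equation*}
\mathbf{c}_{r,i}[0]+\mathbf{c}_{r,i}[1]\cdot\mathbf{s}=pI_w\mathbf{e}_{s,i}+p\mathbf{e}_i+\mathbf{m}_i \pmod q .
\end{equation*}
We first check that no wrap-around modulo $q$ happens. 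The coefficients are bounded by $pI_wB_w+pB_e+p/2$, and we assume, as required by fidelity (the appendix analysis), that $q$ is large enough for this to lie in $(-q/2,q/2]$. Then we can treat the expression as an integer vector.

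Next we divide by $pI_w$. This gives $\mathbf{e}_{s,i}+\boldsymbol{\delta}_i$ with $\boldsymbol{\delta}_i=(p\mathbf{e}_i+\mathbf{m}_i)/(pI_w)$, and each coefficient satisfies $|\boldsymbol{\delta}_i[j]|\le (pB_e+p/2)/(pI_w)=(2B_e+1)/(2I_w)$. The hypothesis $I_w>2B_e+1$ makes this strictly less than $1/2$. Because $\mathbf{e}_{s,i}$ is an integer vector, coordinatewise rounding therefore returns exactly $\mathbf{e}_{s,i}$. This rounding step is the crux, and it is where the precise hypothesis is used. The only delicate point is the plaintext term. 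With the symmetric representative we have $\Vert\mathbf{m}_i\Vert\le p/2$, so the combined perturbation $p\mathbf{e}_i+\mathbf{m}_i$ is at most $p(B_e+1/2)$, which is exactly what the threshold $2B_e+1$ is calibrated to. We will state explicitly that the earlier simplification of ignoring $\mathbf{m}$ is subsumed by this bound rather than assumed.

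Finally we reassemble. Applying the inverse of \cref{eq:le_es_construct}, the $j$-th recovered vector has $i$-th entry $\mathbf{e}_{s,i}[j]=X_{s,j}[i]$, so the $N$ recovered vectors are exactly $X_{s,1},\dots,X_{s,N}$. By construction each of these is a nonzero solution of $\mathbf{A}_{le}X_{le}=0$; in the reverse construction of \cref{sec:A_le_construct} these are multiples or combinations of the base solution $X_1$. The detection rule therefore outputs true, which establishes correct extraction. A brief remark that the recovered vectors are nonzero, so they are not confused with the no-watermark case, completes the argument.
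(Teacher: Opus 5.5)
Your proposal is correct and follows essentially the same route as the paper: you compute $\langle \mathtt{sk}, c'_{t,i}\rangle = p\mathbf{e}_i+\mathbf{m}_i+pI_w\mathbf{e}_{s,i}$, bound the perturbation after division by $pI_w$ by $(2B_e+1)/(2I_w)<1/2$, and conclude that rounding returns $\mathbf{e}_{s,i}$ exactly. Your no-wrap-around check modulo $q$ and the reassembly into $X_{s,1},\dots,X_{s,N}$ are details the paper leaves implicit. Drop the aside that the $X_{s,j}$ are multiples of $X_1$: it is false in general, since the solution space has dimension at least $m-k$, and your argument does not need it.
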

    
\begin{proof}
Let $c_{t,i}=(\mathbf{a}_i \cdot \mathbf{s}+p\mathbf{e}_i+\mathbf{m}_i,-\mathbf{a}_i)$.
Let $\mathbf{e}_{s,i}$ denote the $i$-th noise term constructed from the solution $X_{s,1},X_{s,2},\cdots,X_{s,N}$.
The ciphertext after watermark embedding is
\begin{equation*}
c_{t,i}^\prime=(\mathbf{a}_i \cdot \mathbf{s}+p\mathbf{e}_i+\mathbf{m}_i+pI_w\mathbf{e}_{s,i},-\mathbf{a}_i).
\end{equation*}

We now prove that when $I_w>2B_e+1$, the noise extracted from $c_{t,i}^\prime$ according to \cref{eq:extract_noise} satisfies $\mathbf{e}_{s,i}^\prime=\mathbf{e}_{s,i}$.

The inner product between the secret key $\sk$ and $c_{t,i}^\prime$ is
\begin{equation*}
    \begin{aligned}
    \langle\sk,c_{t,i}^\prime\rangle&=\mathbf{a}_i \cdot \mathbf{s}+p\mathbf{e}_i+\mathbf{m}_i+pI_w\mathbf{e}_{s,i}+(-\mathbf{a}_i\cdot \mathbf{s})\\
    &=p\mathbf{e}_i+\mathbf{m}_i+pI_w\mathbf{e}_{s,i}.
    \end{aligned}
\end{equation*}

Since $\Vert \mathbf{e}_i \Vert \leq B_e$ and $\Vert \mathbf{m}_i \Vert <\frac{p}{2}$, when $I_w>2B_e+1$, we have
\begin{equation*}
\frac{1}{I_wp}(p\mathbf{e}_i+\mathbf{m}_i)<0.5.
\end{equation*}
Therefore,
\begin{equation*}
\mathbf{e}_{s,i}^\prime=\lfloor\frac{p\mathbf{e}_i+\mathbf{m}_i+pI_w\mathbf{e}_{s,i}}{I_wp}  \rceil=\mathbf{e}_{s,i}.
\end{equation*}





\end{proof}

\section{Application Scenario}

In this section, we describe a
representative application in privacy-preserving federated learning.
In federated learning with HE, clients encrypt their
local model updates and send the ciphertexts to an aggregation server.
The server performs aggregation directly over ciphertexts and returns the
aggregated ciphertext to the model owner. Since intermediate ciphertexts
remain encrypted during computation, a client cannot directly prove that
its encrypted update has contributed to the final aggregation result.
Meanwhile, the model owner may lack an effective mechanism to verify
whether a specific encrypted update has been used during aggregation.

Ciphertext watermarking provides a lightweight approach for contribution
verification. Before uploading an encrypted update, a client embeds a
watermark into its ciphertext. The watermark is transparent to the
aggregation server and does not affect the homomorphic evaluation
procedure. After receiving the aggregated ciphertext, the model owner
decrypts the result and extracts the watermark to verify the presence of
the client's encrypted update in the aggregation result. This provides
evidence of ciphertext contribution without revealing the underlying
plaintext data.

For standard federated averaging, where encrypted updates are aggregated
using homomorphic addition, the proposed \AddRWatermark scheme enables
watermark verification after aggregation. For weighted aggregation or
scenarios involving homomorphic multiplication, the proposed
\MultRWatermark scheme further supports verification under multiplicative
operations.

\section{Experimental Results}

We implemented the proposed watermark embedding schemes in Python. For the security of RLWE, we set the parameters according to the 128-bit security recommendations in \cite{albrecht2021homomorphic}. By default, we set $N=2048$, $p=65537$, and $\sigma=3.2$, with $q$ being a 55-bit prime number.
All experiments were conducted on a machine running Ubuntu 20.04.6 LTS Linux, equipped with an Intel Xeon Gold 6145 CPU (2.0 GHz) and 502 GB RAM.

\subsection{Evaluation for \AddRWatermark}
\subsubsection{Correctness Verification}
To ensure reliable watermark extraction, the distribution of the correlation value $\rho$ should be concentrated around its expectation. In addition, the distributions corresponding to different watermark values should be separated as much as possible.

\begin{figure}[!t]
	\centering  
	\subfigbottomskip=1pt 
	\subfigcapskip=1pt 
    \subfigure[$\rho_{sk}$ for \skscheme ciphertexts]{
    \label{fig:distribute_sk}
    \includegraphics[width=0.45\linewidth]{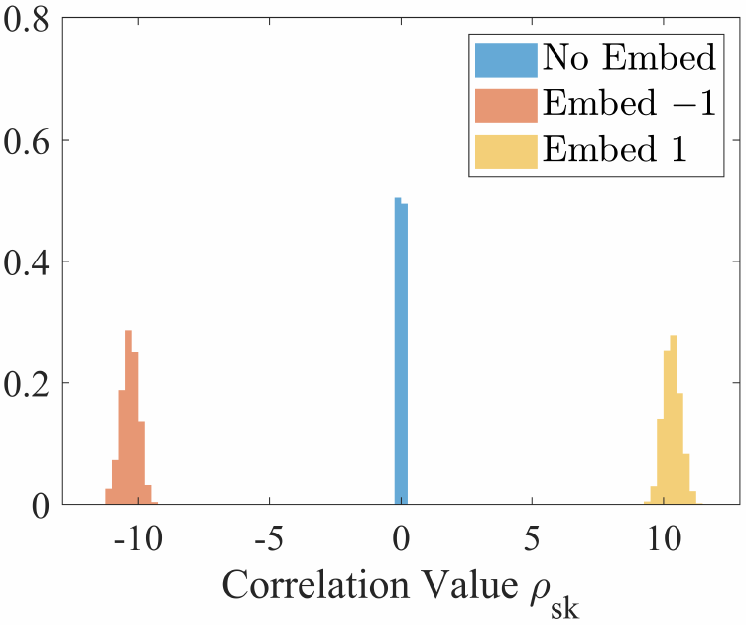}}
	\subfigure[$\rho_{pk}$ for \pkscheme ciphertexts]{
    \label{fig:distribute_pk}
    \includegraphics[width=0.45\linewidth]{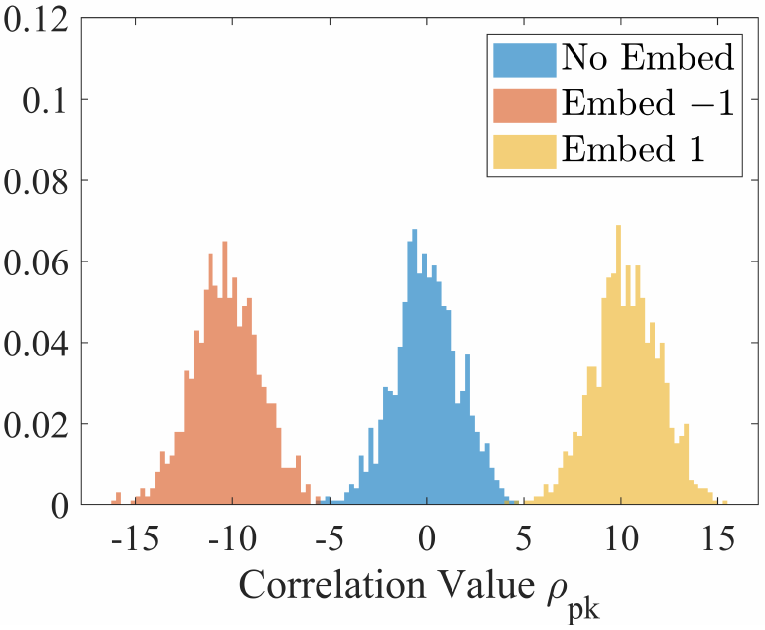}}
	\caption{
 Normalized frequency distribution histogram of $\rho$ with embedding intensity $I_w=7$.
}
\label{fig:distribute_i_7}
\end{figure}

In \cref{fig:distribute_i_7}, we present the histogram of $\rho$ frequencies to approximate its probability density distribution.  
For each scenario, we repeated the embedding process $1000$ times, then extracted the noise and computed the correlation value $\rho$. As shown in the figure, $\rho$ is concentrated around its expectation, consistent with our anticipated behavior, and its distribution resembles a normal distribution.

When $I_w=7$, according to \cref{fig:distribute_sk}, watermarks embedded in \skscheme ciphertexts can be clearly distinguished. However, for watermarks embedded in \pkscheme ciphertexts, the correlation values $\rho_{pk}$ partially overlap around $-5$ and $5$, as illustrated in \cref{fig:distribute_pk}.



\begin{figure}[!t]
	\centering  
	\subfigbottomskip=1pt 
	\subfigcapskip=1pt 
    \subfigure[$\rho_{sk}$ of \skscheme ciphertexts]{
    \label{fig:distribute_sk_i_100}
    \includegraphics[width=0.45\linewidth]{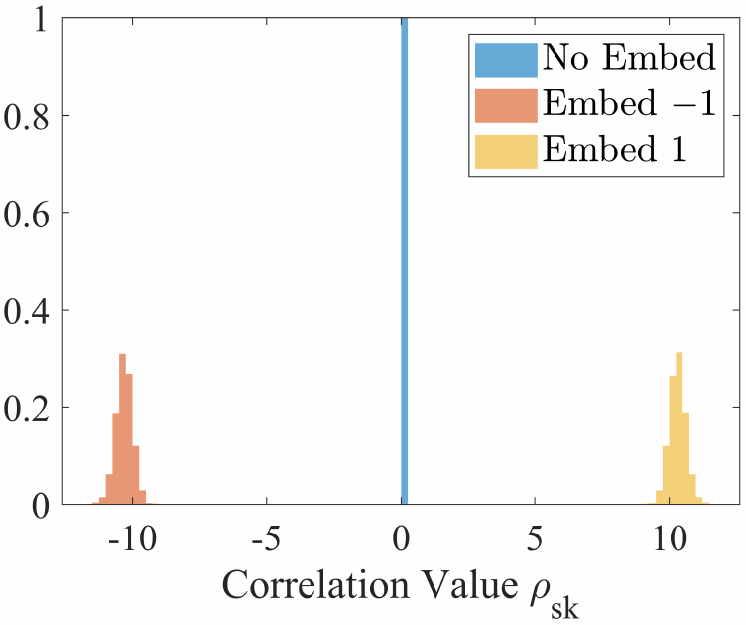}}
	\subfigure[$\rho_{pk}$ of \pkscheme ciphertexts]{
    \label{fig:distribute_pk_i_100}
    \includegraphics[width=0.45\linewidth]{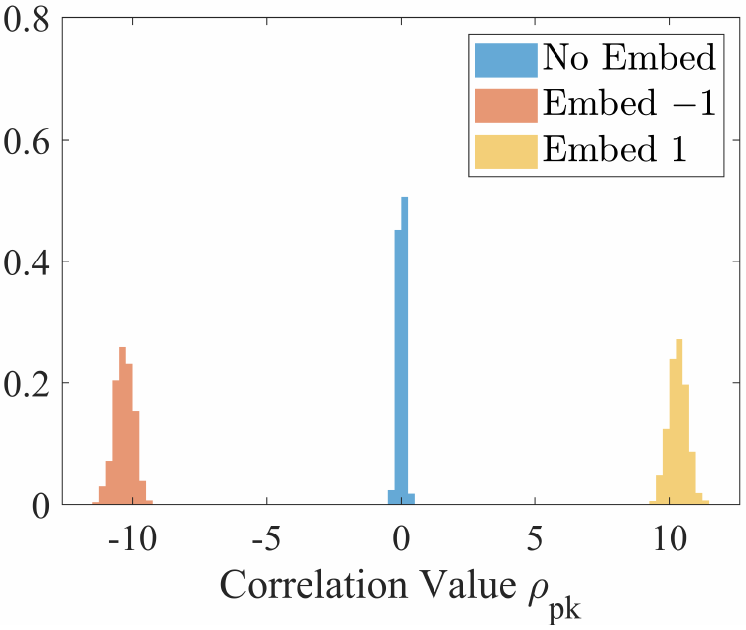}}
	\caption{
 Normalized frequency distribution histogram of $\rho$ with embedding intensity $I_w=100$.
}
\label{fig:distribute_i_100}
\end{figure}

As $I_w$ increases, the expectation of $\rho$ remains unchanged for each embedding case, while the variance decreases, resulting in a more concentrated distribution. As shown in \cref{fig:distribute_i_100}, a narrower distribution of $\rho$ facilitates the distinction between different embedding cases. In particular, as illustrated in \cref{fig:distribute_pk_i_100}, the correlation values $\rho_{pk}$ computed from \pkscheme ciphertexts no longer overlap.
According to \cref{fig:distribute_i_7,fig:distribute_i_100}, under the same parameter settings, the variance of $\rho_{sk}$ is smaller than that of $\rho_{pk}$. This observation is consistent with our theoretical analysis.


\begin{figure}[!t]
	\centering  
	\subfigbottomskip=1pt 
	\subfigcapskip=1pt 
    \subfigure[$\rho_{sk}$ of \skscheme ciphertexts]{
    \label{fig:intensity_sk}
    \includegraphics[width=0.45\linewidth]{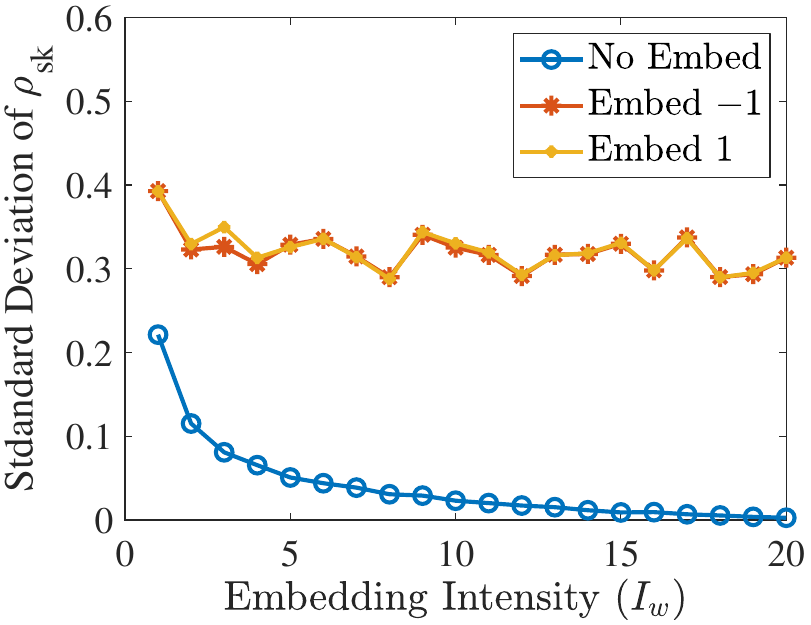}}
	\subfigure[$\rho_{pk}$ of \pkscheme ciphertexts]{
    \label{fig:intensity_pk}
    \includegraphics[width=0.45\linewidth]{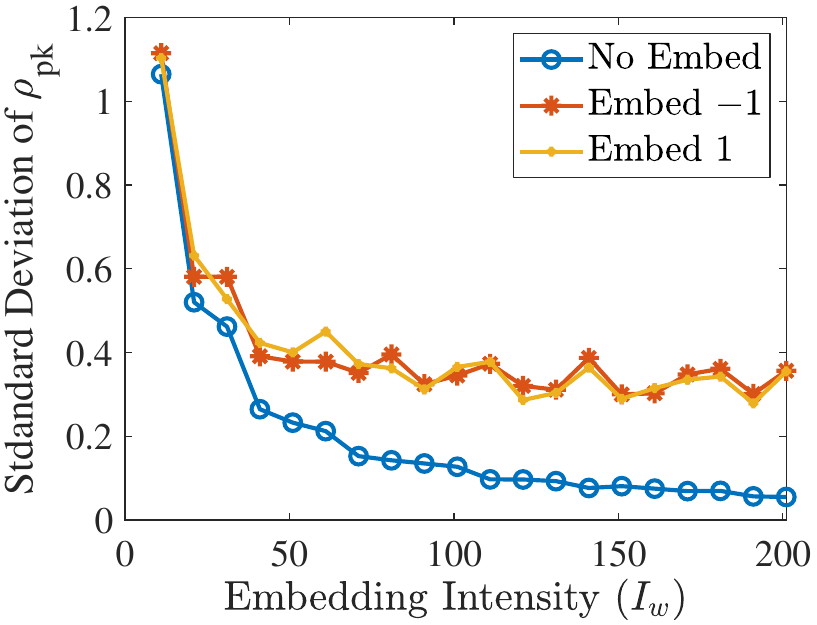}}
	\caption{
 Standard deviation of $\rho$ for different embedding intensities.
}
\label{fig:intensity}
\end{figure}

\subsubsection{Impact of Different Embedding Intensities}
For each $I_w$, we repeated the experiment $20$ times and computed the standard deviation to evaluate the impact of $I_w$ on the correlation value $\rho$, where a smaller standard deviation indicates a more concentrated distribution, which implies lower extraction noise and less interference during watermark extraction. 
We present the experimental results in \cref{fig:intensity}. As $I_w$ increases, the standard deviation of $\rho$ gradually decreases. In addition, the effect of $I_w$ is most pronounced on the extracted $\rho$ in the absence of any embedded watermark.
As shown in \cref{fig:intensity_sk}, for \skscheme ciphertexts, $I_w=7$ provides a reasonable trade-off. At this point, the standard deviation of $\rho_{sk}$ is sufficiently small. Since we need to ensure correct decryption, we do not want $I_w$ to be too large. For \pkscheme ciphertexts, $I_w = 100$ provides a better trade-off, as illustrated in \cref{fig:intensity_pk}.

\begin{figure}[!t]
	\centering  
	\subfigbottomskip=1pt 
	\subfigcapskip=1pt 
    \subfigure[$\rho_{sk}$ of \skscheme]{
    \label{fig:attack1_sk}
    \includegraphics[width=0.45\linewidth]{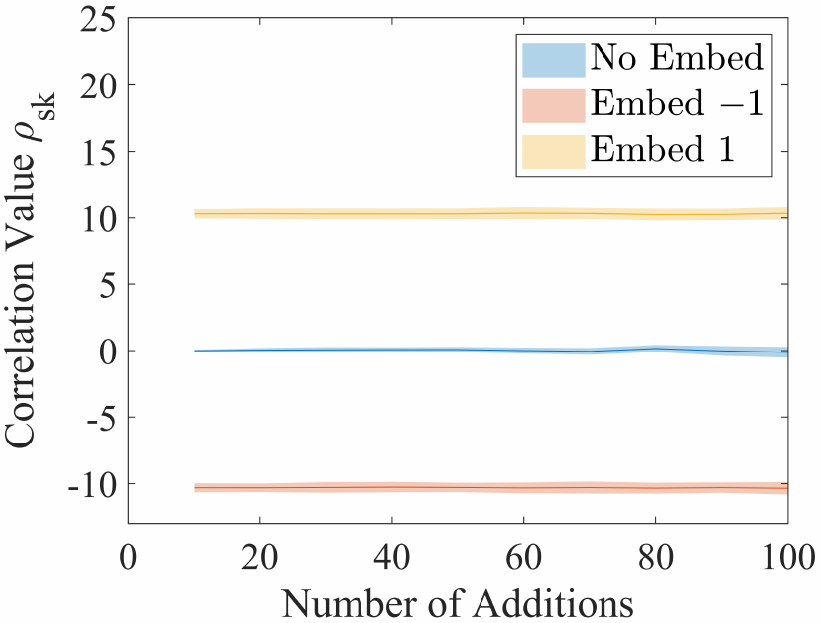}}
	\subfigure[$\rho_{pk}$ of \pkscheme]{
    \label{fig:attack1_pk}
    \includegraphics[width=0.45\linewidth]{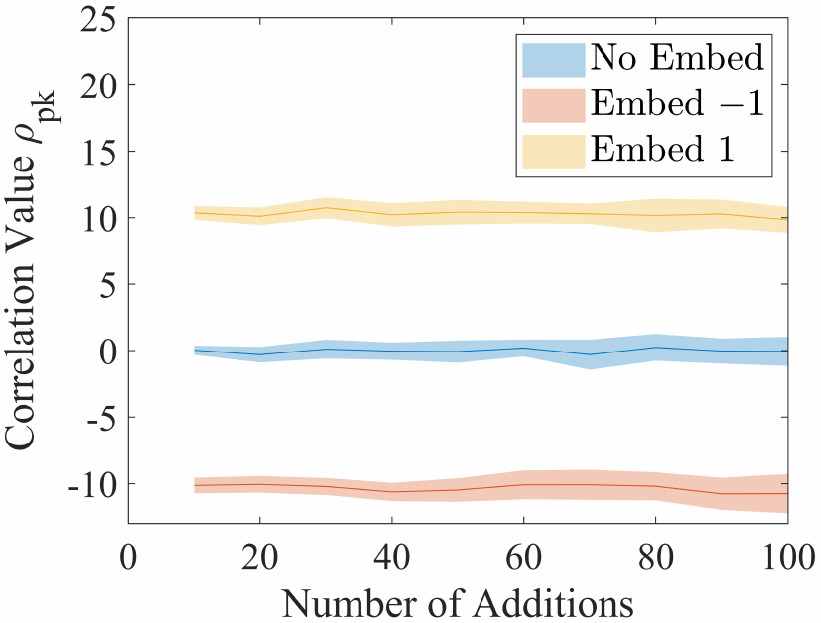}}
    \caption{Comparison of $\rho$ under different embedding values after adding non-watermarked ciphertexts.}

\label{fig:attack1}
\end{figure}

\begin{figure}[!t]
	\centering  
	\subfigbottomskip=1pt 
	\subfigcapskip=1pt 
    \subfigure[$\rho_{sk}$ of \skscheme]{
    \label{fig:attack3_sk}
    \includegraphics[width=0.45\linewidth]{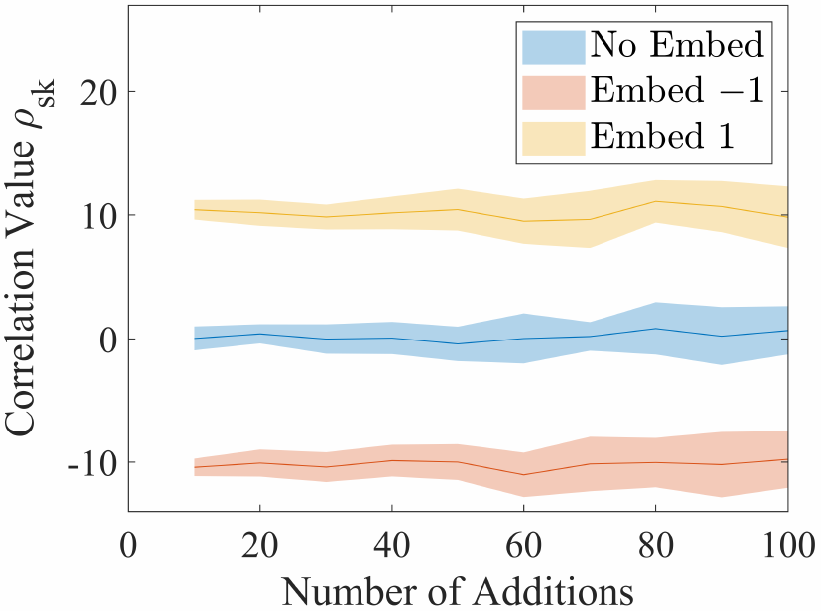}}
	\subfigure[$\rho_{pk}$ of \pkscheme]{
    \label{fig:attack3_pk}
    \includegraphics[width=0.45\linewidth]{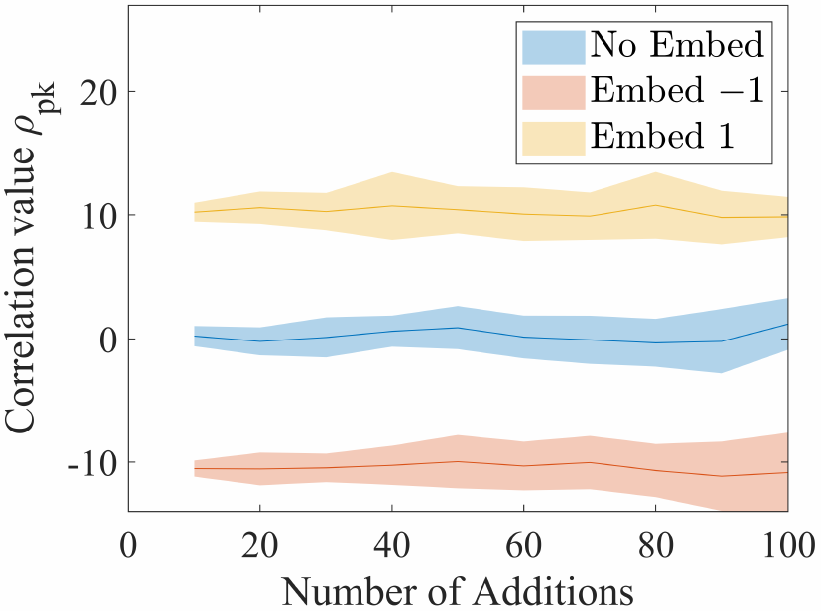}}
	\caption{
 Comparison of $\rho$ under different embedding values after adding other ciphertexts with different embedding keys.
}
\label{fig:attack3}
\end{figure}

\subsubsection{Robustness of the Watermark}

We repeat each experiment $20$ times independently. The mean $\pm$ standard deviation is used as the error band to improve the reliability of the evaluation. For \skscheme ciphertexts, we set $I_w = 7$, while for \pkscheme ciphertexts we set $I_w = 100$.

In \cref{fig:attack1}, we present the value of $\rho$ when multiple ciphertexts without embedded watermarks are added. The expectation of $\rho$ remains nearly unchanged. As the number of additions increases, the variance grows, which is particularly evident in both unwatermarked ciphertexts and \pkscheme ciphertexts, as shown in \cref{fig:attack1_pk}.

In \cref{fig:attack3}, we present the results obtained when adding ciphertexts containing watermarks generated under different embedding keys. Compared with the results in \cref{fig:attack1}, adding a single watermarked ciphertext leads to a more pronounced increase in the standard deviation of $\rho$. However, different embedding cases can still be clearly distinguished.

Overall, our watermark is robust against the noise introduced by homomorphic additions. Although the standard deviation of the extracted $\rho$ increases, there remains sufficient separation to ensure correct watermark extraction.




\subsubsection{Resistance to Random Noise Attack}

Since the goal of a watermark attacker is to remove the watermark without affecting the plaintext, we add noise that does not affect decryption, i.e., multiples of $p$. Let $B_a$ denote the norm of the attack noise. Specifically, the coefficients of the noise $\mathbf{e}$ are uniformly sampled from the interval $[-B_a, B_a]$. Given a ciphertext $c_t=(\mathbf{c}_1, \mathbf{c}_2)$, the attacked ciphertext is $c_t^\prime=(\mathbf{c}_1 + p\mathbf{e}, \mathbf{c}_2)$. 
We set the detection threshold to $T_w = 5$ and repeat the experiment $1000$ times for each $B_a$ to calculate the attack success rate. A watermark of $0$ or $1$ is randomly embedded for each trial; if the extracted watermark does not match the embedded one, the attack is considered successful.

The results in \cref{fig:add_noise} show that when the noise level is small, the impact on watermark extraction accuracy is negligible, indicating that \AddRWatermark is robust to random noise to a certain extent. When $B_a < 100$, the addition of random noise has almost no effect on watermark extraction. As $B_a$ increases, the attack success rate gradually increases. However, larger $I_w$ can be used to counteract stronger noise. Under the same noise level, a higher $I_w$ leads to a lower attack success rate. Although \AddRWatermark requires a larger $I_w$ when applied to \pkscheme ciphertexts, the same overall trend is observed for both \pkscheme and \skscheme ciphertexts as shown in \cref{fig:add_noise_pk}.

\begin{figure}[!t]
	\centering  
	\subfigbottomskip=1pt 
	\subfigcapskip=1pt 
    \subfigure[\skscheme ciphertexts]{
    \label{fig:add_noise_sk}
    \includegraphics[width=0.45\linewidth]{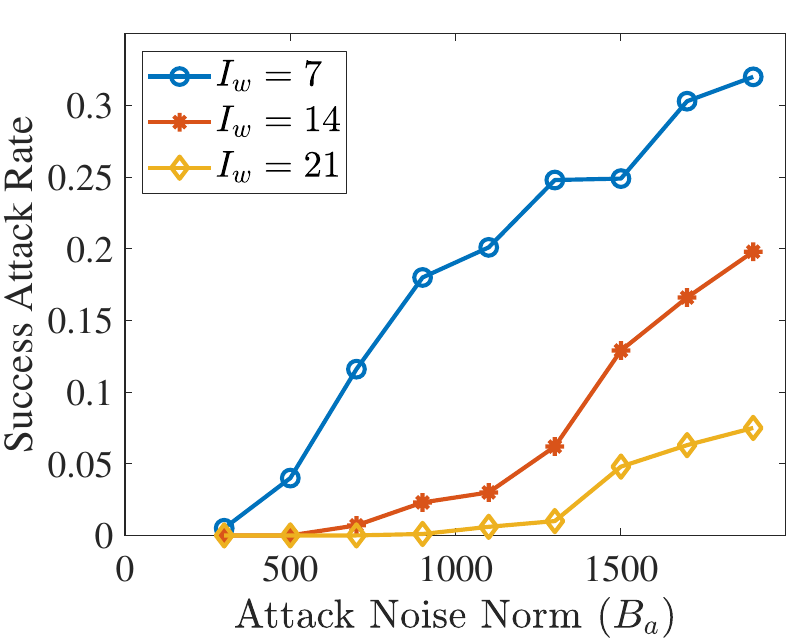}}
	\subfigure[\pkscheme ciphertexts]{
    \label{fig:add_noise_pk}
    \includegraphics[width=0.45\linewidth]{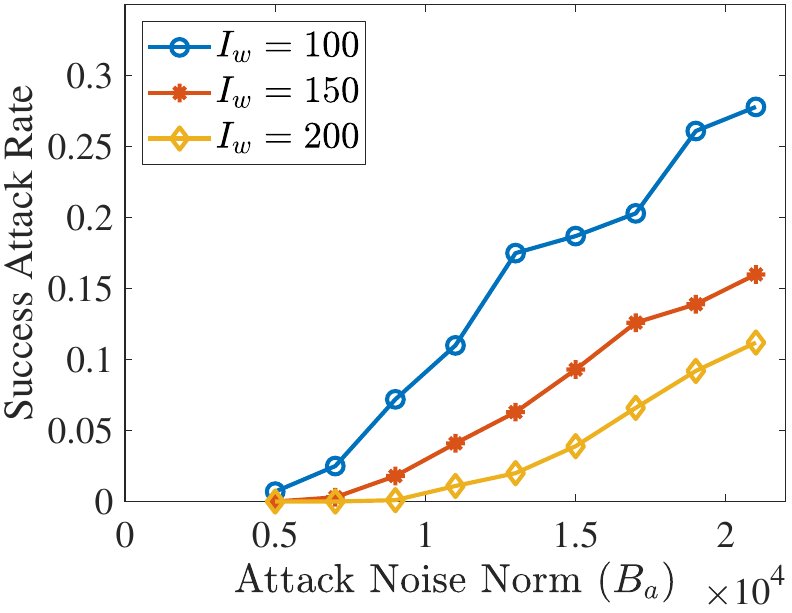}}
	\caption{
 Success attack rate for adding uniformly random noise. We set $T_w=5$.
}
\label{fig:add_noise}
\end{figure}

\subsubsection{Multi-Ciphertext Watermarking}

Since the noise in \pkscheme ciphertexts is relatively larger, we focus our discussion on \pkscheme. In \cref{fig:mw}, we show the standard deviation of $\rho$ for $I_w = 10$ and $I_w = 100$. From \cref{fig:intensity_pk}, it can be seen that when $I_w = 10$, the standard deviation of $\rho$ is similar across different embedding cases. This indicates that the noise-reduction effect of multi-ciphertext watermarking is consistent across different embedding conditions. 
As the number of ciphertexts increases, the standard deviation of $\rho$ decreases, reducing the interference of extraction noise on the watermark. 
For a larger $I_w$, the number of ciphertexts has a greater impact on the extracted $\rho$ when embedding information, as shown in \cref{fig:mw_100}.



\begin{figure}[!t]
	\centering  
	\subfigbottomskip=1pt 
	\subfigcapskip=1pt 
    \subfigure[$I_w=10$]{
    \label{fig:mw_10}
    \includegraphics[width=0.45\linewidth]{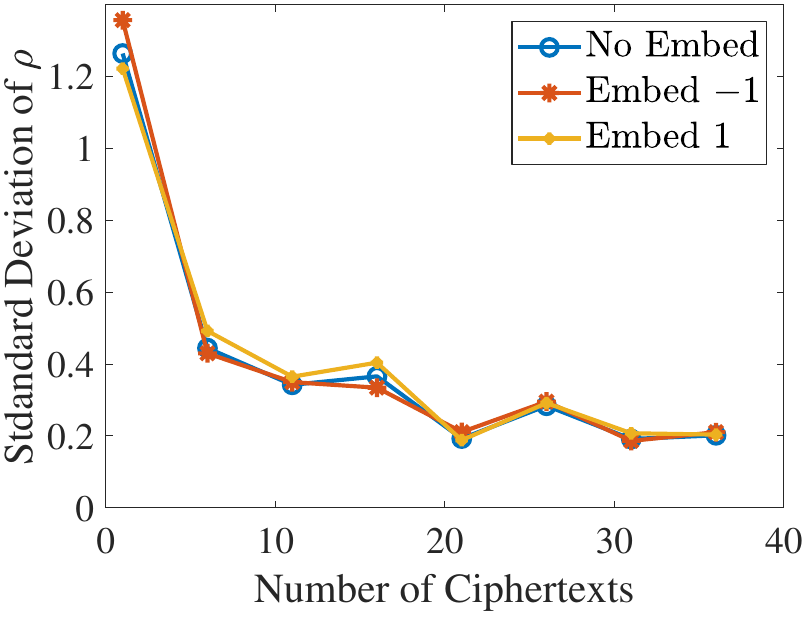}}
	\subfigure[$I_w=100$]{
    \label{fig:mw_100}
    \includegraphics[width=0.45\linewidth]{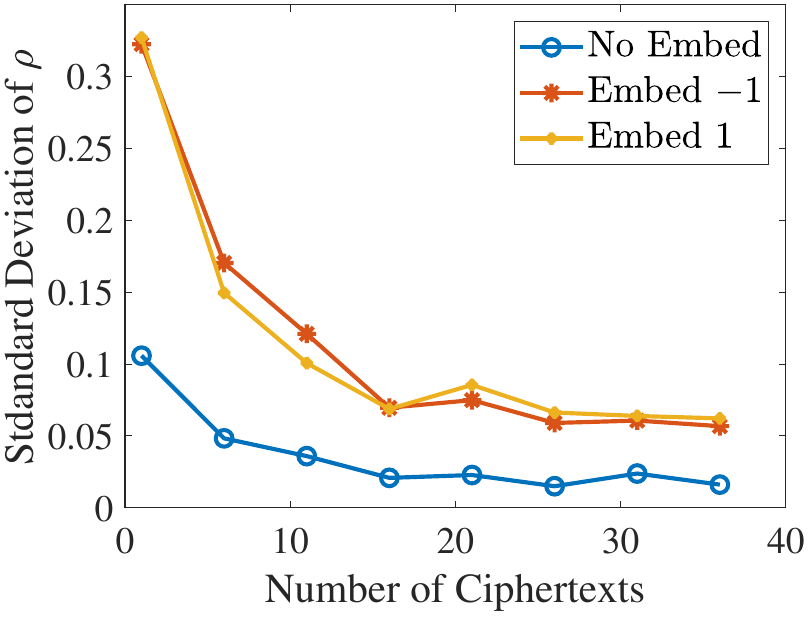}}
	\caption{
 Standard deviation of $\rho$ with different number of ciphertexts.
}
\label{fig:mw}
\end{figure}

\subsection{Evaluation for \MultRWatermark}
Due to the larger noise in \pkscheme ciphertexts, we only consider the case where the embedding carrier is a \skscheme ciphertext. We set $N = 2048$ and $p = 131$. During the experiments, plaintexts are sampled uniformly from the interval $[0, p)$. The number of ciphertexts is set to $m = 4$. The TPR of watermark extraction is estimated through $1000$ independent trials.

\subsubsection{Correctness Verification}
    

\begin{table}[!t]
    \caption{TPR of \MultRWatermark with different intensities.}
    \centering
    \begin{tabular}{ccccccc}
    \toprule
        $I_w$ & 24& 26& 28& 30& 32& 34  \\ \midrule
         TPR& 0.172& 0.652& 0.885& 0.975& 0.994& 1.000 \\
         \bottomrule
    \end{tabular}
    
    \label{tab:le_correct}
\end{table}

In \cref{tab:le_correct}, we present the TPR under different values of $I_w$. When $I_w$ is small, the encryption noise may overlap with the watermark noise, thereby corrupting the watermark. As $I_w$ increases, the TPR gradually improves. When $I_w = 34$, the embedded watermark can be extracted perfectly.

\subsubsection{Robustness to Addition Operations}

We evaluate the robustness of \MultRWatermark against homomorphic addition by adding ciphertexts without embedded watermarks to watermarked ciphertexts. In \cref{fig:add_le_test}, we present the experimental results. Adding additional ciphertexts increases the noise level; consequently, when $I_w$ is small, the extracted TPR is relatively low. As the number of addition operations increases, $I_w$ should be increased accordingly to maintain an acceptable TPR.

\subsubsection{Resistance to Random Noise Attack}

In \cref{fig:add_le_attack}, we present the impact of adding random noise. Similar to the attacks on \AddRWatermark, the noise added here is random but does not affect decryption. The noise coefficients are uniformly sampled from $[-B_a, B_a]$. When $B_e$ is low, watermark extraction is barely affected; however, as the noise increases, the watermark is increasingly impacted. 
As shown in \cref{fig:add_le_attack}, increasing $I_w$ can resist stronger noise. Compared with \AddRWatermark, the \MultRWatermark watermark is more sensitive to noise. As the noise increases, the attack success rate against \MultRWatermark rises rapidly. Therefore, \MultRWatermark generally requires a higher $I_w$.


\begin{figure}[!t]
	\centering  
	\subfigbottomskip=1pt 
	\subfigcapskip=1pt 
    \subfigure[HE additions]{
    \label{fig:add_le_test}
    \includegraphics[width=0.45\linewidth]{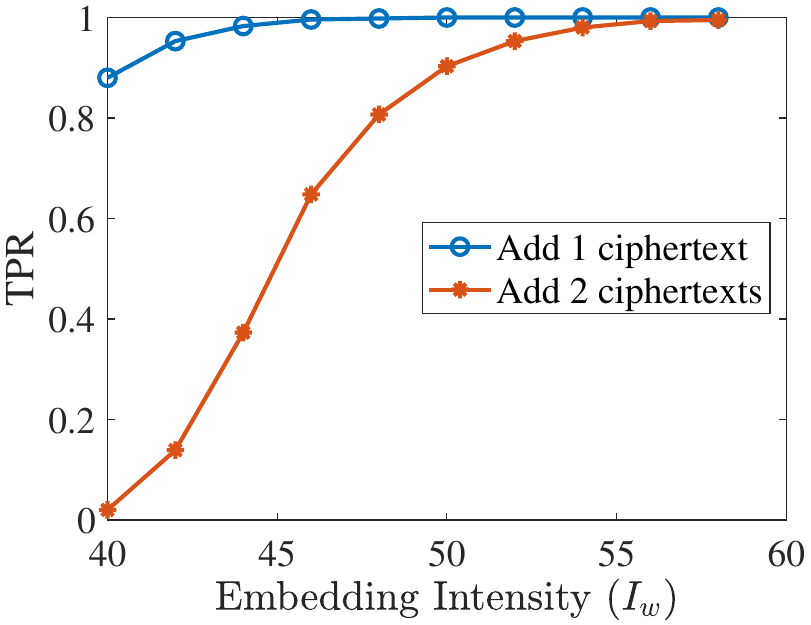}}
	\subfigure[Adding Noise]{
    \label{fig:add_le_attack}
    \includegraphics[width=0.45\linewidth]{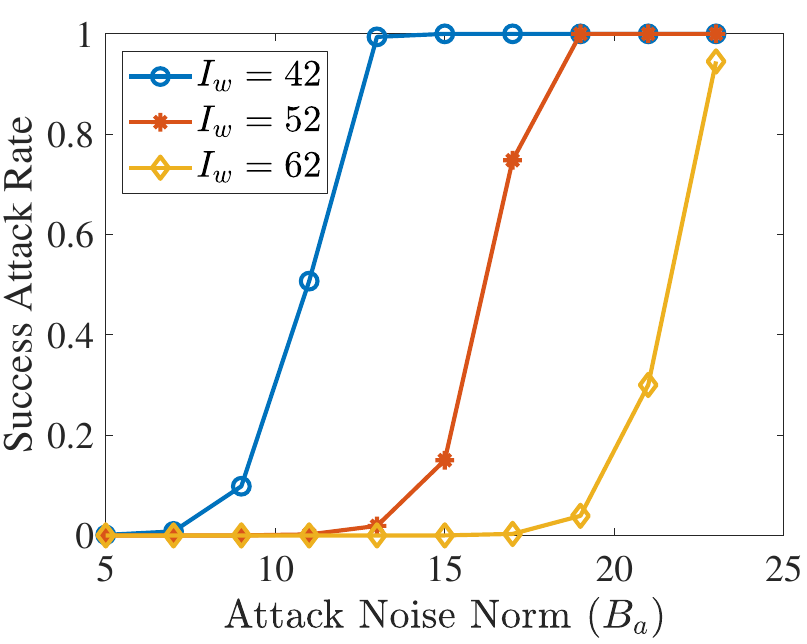}}
	\caption{
 Effect of addition operations on \MultRWatermark.
}
\label{fig:add_le}
\end{figure}

\subsubsection{Robustness to Multiplication Operations}

For homomorphic multiplication, we multiply a watermarked ciphertext by a ciphertext without a watermark. In addition, we consider the effect of the $\Resize$ operation. As shown in \cref{tab:le_mult_test}, maintaining watermark robustness under multiplication operations requires a relatively large $I_w$, since multiplication introduces additional noise. Even without applying $\Resize$, the watermark can be reliably extracted only when $I_w \geq 782327$. 

Since the $\Resize$ operation introduces further noise, a larger $I_w$ is required to ensure correct watermark extraction. In the experiments, we set the parameter of $\Resize$ to $T = 65537$. The results show that the TPR continuously improves as $I_w$ increases. When $I_w = 782327$, perfect extraction can be achieved without $\Resize$, while for ciphertexts processed with $\Resize$, perfect extraction is achieved when $I_w = 95617843$. Therefore, the use of $\Resize$ requires $I_w$ to be approximately two orders of magnitude larger to maintain reliable extraction performance.


\begin{table}[!t]
    \caption{TPR of \MultRWatermark with different intensities when multiplying by one ciphertext.}
    \centering
    \begin{tabular}{@{}ccccc@{}}
    \toprule
    \multirow{2}{*}{No $\Resize$}&
        $I_w$ &  608477& 695402& 782327\\ \cmidrule{2-5}
         &TPR & 0.867& 0.992& 1.000\\ \midrule
    \multirow{2}{*}{$\Resize$}&
        $I_w$ & 78232780& 86925312& 95617843\\   \cmidrule{2-5}
         &TPR&  0.966& 0.992& 1.000\\
         \bottomrule
    \end{tabular}
    
    \label{tab:le_mult_test}
\end{table}

\subsection{Overhead Comparison}
\begin{table}[!t]
\centering
\caption{Overhead (ms) of \skscheme ciphertext watermarking.}
\label{tab:time_overhead}
\begin{tabular}{ccccc}
\toprule
 & Encrypt & Embed & Decrypt & Extract \\ \midrule
\AddRWatermark & 282.77 & 0.83 & 254.74 & 255.71 \\
\MultRWatermark & 271.83 & 0.77 & 267.00 & 267.75 \\ \bottomrule
\end{tabular}
\end{table}

We show the average overhead of watermarking a single \skscheme ciphertext in \cref{tab:time_overhead}. 
The embedding and extraction overheads of \AddRWatermark and \MultRWatermark are comparable. 
The embedding operation takes less than 1 ms, indicating that the additional cost introduced by watermark embedding is negligible. 
The extraction operation incurs a latency close to that of decryption, as it requires decrypting the ciphertext before verifying the embedded watermark information, with only a few additional milliseconds for watermark confirmation. 
Furthermore, our watermarking schemes introduce no additional computational overhead for homomorphic evaluations.

\section{Conclusion}

In this work, we introduce the watermarking technique into the domain of HE to enable ciphertext provenance, ownership protection, and tracing of ciphertexts throughout homomorphic computations. Homomorphic ciphertext watermarking provides protection for HE ciphertexts from both the data provider's and the evaluator's perspectives, thereby extending the application scenarios of HE and facilitating its practical deployment.
We provide a formal definition of homomorphic ciphertext watermarking and propose two concrete watermarking schemes for RLWE ciphertexts, demonstrating the feasibility of homomorphic ciphertext watermarking. 
However, the two proposed schemes have their limitations. \AddRWatermark is not robust under homomorphic multiplication, whereas \MultRWatermark has a relatively small embedding capacity. 
In future work, we aim to address these limitations and design more practical homomorphic ciphertext watermarking schemes. Additionally, we plan to explore broader application scenarios for homomorphic ciphertext watermarking to provide stronger security guarantees for various privacy-preserving computation tasks.

\bibliographystyle{plainurl}
\bibliography{ref.bib}
\appendix
\section{Ethical Considerations}

This work proposes a watermarking technique for RLWE-based HE ciphertexts, aiming to provide copyright protection, provenance tracking, and computation supervision for privacy-preserving computation. The proposed approach does not access or reveal the underlying plaintexts, and the watermark embedding process is designed to preserve the security and correctness of the underlying HE scheme.

As with other watermarking and attribution technologies, our approach may have dual-use implications. For example, while ciphertext watermarking can help prevent unauthorized use and support accountability in outsourced computation, it could potentially be misused for unauthorized tracking or attribution. We believe that the deployment of such techniques should follow appropriate transparency, consent, and governance practices. This work focuses on the technical feasibility and security properties of ciphertext watermarking and does not involve collecting user data, interacting with human subjects, or evaluating sensitive applications.





    
    
    



\section{Additional Description of \AHE}
\label{appendix:ahe}
\subsection{\pkscheme of \AHE}
The main difference between \pkscheme and \skscheme is that \pkscheme requires generating a public key from the secret key. Encryption is then performed using the public key instead of the secret key. The remaining procedures of \pkscheme are identical to those of \skscheme.
\begin{itemize}
    \item $\pkGen(\sk)$: Sample $\mathbf{a}$ uniformly from $\mathbf{R}_q$ and $\mathbf{e}$ from error distribution $\chi_e$. Compute $\mathbf{b}=\mathbf{a}\cdot \mathbf{s}+p\mathbf{e}$. Output the public key $\pk=(\mathbf{b},-\mathbf{a})$. The public key is actually a \skscheme ciphertext of zero.
    
    \item $\pkEn(\pk,\mathbf{m})$: Let $\pk=(\mathbf{k}_1,\mathbf{k}_2)$. Sample $\mathbf{u}$ randomly from secret key distribution $\chi_s$. Sample $\mathbf{e}_1$ and $\mathbf{e}_2$ from error distribution $\chi_e$. Output the ciphertext
    $(\mathbf{k}_1\cdot \mathbf{u}+p\mathbf{e}_1+\mathbf{m},\mathbf{k}_2\cdot \mathbf{u}+p\mathbf{e}_2)$.
\end{itemize}

Next, we extend \AHE to support a bounded number of homomorphic multiplications.

\begin{itemize}
    \item $\EvalMult(c_t,c_t^\prime)$: Let $c_t=(\mathbf{c}_1,\mathbf{c}_2)$ be the ciphertext of $\mathbf{m}_1$ and $c_t^\prime=(\mathbf{c}^\prime_1,\mathbf{c}^\prime_2)$ be the ciphertext of $\mathbf{m}_2$. Output the ciphertext $(\mathbf{c}_1 \cdot \mathbf{c}^\prime_1,\mathbf{c}_1 \cdot \mathbf{c}^\prime_2+\mathbf{c}^\prime_1 \cdot \mathbf{c}_2, \mathbf{c}_2 \cdot \mathbf{c}^\prime_2)$, which is a ciphertext of the product of $\mathbf{m}_1$ and $\mathbf{m}_2$.

    \item $\DeT(\sk,c_t)$: Let $c_t=(\mathbf{d}_1,\mathbf{d}_2,\mathbf{d}_3)$. Output the plaintext $\mathbf{m}=(\mathbf{d}_1+\mathbf{s}\cdot \mathbf{d}_2+ \mathbf{s}^2 \cdot \mathbf{d}_3) \mod p$.
\end{itemize}

The relinearization technique proposed in~\cite{brakerski2014leveled} can be applied to reduce the size of the ciphertext after multiplication to the same size as the original ciphertext, allowing decryption using $\De$. As long as the noise remains sufficiently small, multiple homomorphic multiplications can be supported.

\begin{itemize}
    \item $\EvkGen(\sk)$: Let $T$ be a positive integer, generate $\lfloor \log_T q \rfloor +1$ ciphertexts. Sample $\mathbf{a}_i$ uniformly from $\mathbf{R}_q$, sample $\mathbf{e}_i$ from error distribution $\chi_e$, 
    and compute $c_{t,i} = (T^i\mathbf{a}_i \cdot \mathbf{s}+p\mathbf{e}_i+T^i\mathbf{s}^2,-T^i\mathbf{a}_i)$, where $i=1,2,\cdots, \lfloor \log_T q \rfloor +1$. Output the evaluation key $\evk=(c_{t,0},c_{t,1},\cdots, c_{t,{\lfloor \log_T q \rfloor +1}})$.

    \item $\Resize(\evk, (\mathbf{d}_1,\mathbf{d}_2,\mathbf{d}_3))$: Decompose $\mathbf{d}_3$ into its base-$T$ representation:
    $\mathbf{d}_3=\sum_{i=0}^{\lfloor \log_T q \rfloor +1}T^i\mathbf{d}_{3,i}$,
    where the coefficients of $\mathbf{d}_{3,i}$ have absolute values less than $T$. 
    Compute $\mathbf{c}^\prime_1=\mathbf{d}_1+\sum_{i=0}^{\lfloor \log_T q \rfloor +1}\mathbf{d}_{3,i}\cdot c_{t,i}[1], \mathbf{c}^\prime_2=\mathbf{d}_2+\sum_{i=0}^{\lfloor \log_T q \rfloor +1}\mathbf{d}_{3,i}\cdot c_{t,i}[2]$. Output the new ciphertext $(\mathbf{c}^\prime_1,\mathbf{c}^\prime_2)$.
\end{itemize}

\subsection{Security of \AHE}
\label{sec:HE_secure_proof}

We focus on security against chosen-plaintext attacks (CPA security). We therefore first review the corresponding definition below~\cite{katz2015introduction}.

\begin{definition}
\label{def:CPA}
Assume that $b$ is sampled uniformly at random from $\{0,1\}$.
Let $c_t$ denote a ciphertext of $\mathbf{m}_b$ generated by an encryption scheme. The challenger gives $c_t$ to an adversary. The adversary is allowed to query the encryption oracle on any plaintext. 
If every probabilistic polynomial-time adversary can correctly identify b with probability at most \(1/2 + \epsilon(\lambda)\), where \(\epsilon(\lambda)\) is negligible, then the encryption scheme is CPA-secure.
\end{definition}

The security of \AHE is based on the hardness of the RLWE problem. We state the following theorem.

\begin{theorem}
\label{th:CPA-secure}
    The encryption scheme \AHE defined in \cref{sec:homomorphic_encryption} is CPA-secure under the hardness assumption of the RLWE problem.
\end{theorem}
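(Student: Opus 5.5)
We prove \cref{th:CPA-secure} by a standard hybrid argument that reduces any CPA distinguisher to an RLWE distinguisher in the sense of \cref{def:RLWE}. We treat \skscheme and \pkscheme separately, since the adversary's oracle differs in the two cases.

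\textbf{Secret-key case.} An \skscheme ciphertext is $(\mathbf{a}\mathbf{s}+p\mathbf{e}+\mathbf{m},-\mathbf{a})$, and all encryption-oracle answers have the same form. We first observe that the RLWE problem with error $p\mathbf{e}$ is equivalent to RLWE with error $\mathbf{e}$. Since $\gcd(p,q)=1$, the map $(\mathbf{a},\mathbf{b})\mapsto(p\mathbf{a},p\mathbf{b})$ sends the sample $(\mathbf{a},\mathbf{a}\mathbf{s}+\mathbf{e})$ to $(\mathbf{a}',\mathbf{a}'\mathbf{s}+p\mathbf{e})$ with $\mathbf{a}'=p\mathbf{a}$ still uniform. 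The same map sends a uniform pair to a uniform pair.

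Given $m$ RLWE samples $(\mathbf{a}_i,\mathbf{b}_i)$, transformed this way, the reduction runs the CPA adversary as follows:
\begin{itemize}
\item it answers the $i$-th oracle query $\mathbf{m}$ with $(\mathbf{b}_i+\mathbf{m},-\mathbf{a}_i)$;
\item it answers the challenge with $(\mathbf{b}_j+\mathbf{m}_b,-\mathbf{a}_j)$ for a fresh sample $j$.
\end{itemize}
If the samples are RLWE samples, this simulates the real experiment exactly. If they are uniform, every ciphertext first component is uniform and independent of $\mathbf{m}_b$, so the adversary's advantage is $0$. The adversary's advantage therefore lower-bounds the RLWE distinguishing advantage, which is negligible by assumption.

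\textbf{Public-key case.} There are two hybrids.
\begin{itemize}
\item \emph{Hybrid 1.} Replace $\pk=(\mathbf{a}_k\mathbf{s}+p\mathbf{e}_k,-\mathbf{a}_k)$ with a uniform pair $(\mathbf{k}_1,\mathbf{k}_2)$. This step is indistinguishable by a single RLWE sample with secret $\mathbf{s}$; the oracle needs nothing beyond $\pk$.
\item \emph{Hybrid 2.} With $\pk$ uniform, the challenge is $(\mathbf{k}_1\mathbf{u}+p\mathbf{e}_1+\mathbf{m}_b,\ \mathbf{k}_2\mathbf{u}+p\mathbf{e}_2)$. This is two RLWE samples with secret $\mathbf{u}\sim\chi_s$ and public elements $\mathbf{k}_1,\mathbf{k}_2$. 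Replace them by uniform $(\mathbf{u}_1,\mathbf{u}_2)$, again using the $p$-scaling trick.
\end{itemize}
In the final hybrid the challenge equals $(\mathbf{u}_1+\mathbf{m}_b,\mathbf{u}_2)$, which is uniform and independent of $b$. The triangle inequality over the hybrids then bounds the advantage by a sum of two RLWE advantages, which is negligible.

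\textbf{Main obstacle.} The delicate step is matching the paper's simplified RLWE definition. Its $\mathbf{a}$ is uniform and its error is $\mathbf{e}$ rather than $p\mathbf{e}$, and in Hybrid 2 the secret $\mathbf{u}$ plays the role of $\mathbf{s}$, so we need $\chi_s$ to be a valid RLWE secret distribution. We handle this by making explicit that the assumption is taken for $\chi_s$ ternary, and we cite the $\gcd(p,q)=1$ invertibility of $p$ modulo $q$ for the scaling argument. We also note that encryption-oracle queries in the secret-key case consume fresh RLWE samples, so polynomially many samples $m$ suffice.
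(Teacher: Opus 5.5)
Your proposal is correct and takes essentially the same route as the paper. The shared ingredients are the $\gcd(p,q)=1$ rescaling that absorbs the factor $p$ into the RLWE error and a direct RLWE-to-uniform step for \skscheme; for \pkscheme, both first replace $\pk$ by a uniform pair and then read the challenge as two RLWE samples with secret $\mathbf{u}\sim\chi_s$. Your version is more careful than the paper's: the paper only argues that a single challenge ciphertext is indistinguishable from uniform, whereas you explicitly simulate the encryption oracle with fresh RLWE samples and account for the losses across the hybrids.
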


Our proof consists of two steps. We first show that \skscheme is CPA-secure. Based on this result, we then prove that \pkscheme is CPA-secure. 




The proof of the security of \skscheme is given as follows:
\begin{proof}
    Let $c_{t,0}=\skEn(\sk,\mathbf{m}_0)=(\mathbf{a}_0 \cdot \mathbf{s}+p\mathbf{e}_0+\mathbf{m}_0,-\mathbf{a}_0)$ and  $c_{t,1}=\skEn(\sk,\mathbf{m}_1)=(\mathbf{a}_1 \cdot \mathbf{s}+p\mathbf{e}_1+\mathbf{m}_1,-\mathbf{a}_1)$.
    Let $(\mathbf{b}_r,\mathbf{a}_r)$ be uniformly random.

    Since $\gcd(p,q)=1$, the inverse $p^{-1}$ exists modulo $q$ and is a constant. Therefore, $\mathbf{a}_0$ and $p^{-1}\mathbf{a}_0$ follow the same distribution $\chi_u$. Because of the hardness of RLWE, the two pairs $(p^{-1}\mathbf{a}_0 \cdot \mathbf{s}+\mathbf{e},-p^{-1}\mathbf{a})$ and $(\mathbf{b}_r,\mathbf{a}_r)$ are indistinguishable. Because $p$ and $\mathbf{m}_0$ are constants, $(\mathbf{a}_0 \cdot \mathbf{s}+p\mathbf{e}_0+\mathbf{m}_0,-\mathbf{a}_0)$ and $(\mathbf{b}_r,\mathbf{a}_r)$ are indistinguishable. That is to say $c_{t,0}$ is indistinguishable from $(\mathbf{b}_r,\mathbf{a}_r)$. The same argument shows that $c_{t,1}$ is indistinguishable from $(\mathbf{b}_r,\mathbf{a}_r)$. Thus, $c_{t,0}$ and $c_{t,1}$ are indistinguishable, which means the encryption scheme \skscheme defined in \cref{sec:homomorphic_encryption} is CPA-secure.
\end{proof}

We next prove that \pkscheme is CPA-secure:
\begin{proof}
    The public key $\pk$ is a ciphertext of \skscheme. Because \skscheme is CPA-secure, the public key $\pk$ is indistinguishable from uniformly randomly sampled $(\mathbf{b}_r,\mathbf{a}_r)$.  
    Suppose that $c_{t,0}=\pkEn(\pk,\mathbf{m}_0)=(\mathbf{k}_0\cdot \mathbf{u}+p\mathbf{e}_1+\mathbf{m}_0,\mathbf{k}_1\cdot \mathbf{u}+p\mathbf{e}_2)$, where $\mathbf{u} \sim \chi_s$ and  $\mathbf{e}_1,\mathbf{e}_2 \sim \chi_e$. Actually, the public key $\pk$ and $c_{t,0}$ are two ciphertexts of \skscheme.
    $(\mathbf{u}+p\mathbf{e}_1+\mathbf{m},-\mathbf{k}_0)$,$(\mathbf{k}_1\cdot \mathbf{u}+p\mathbf{e}_2,-\mathbf{k}_1)$. 
    Hence, we have $c_{t,0}$ is indistinguishable from $(\mathbf{b}_r,\mathbf{a}_r)$.
    Let $c_{t,1}=\pkEn(\pk,\mathbf{m}_1)$. We can prove that $c_{t,1}$ is indistinguishable from $(\mathbf{b}_r,\mathbf{a}_r)$ following the same steps. Thus, the two ciphertexts $c_{t,0}$ and $c_{t,1}$ are indistinguishable. Hence, the encryption scheme \pkscheme defined in \cref{sec:homomorphic_encryption} is CPA-secure.
\end{proof}

\section{Lemmas}
\label{appendix:lemmas}
\begin{lemma}
\label{lm:norm_mul}
Let $\mathbf{x} \in \mathbf{R}_q$, $\mathbf{y} \in \mathbf{R}_q$. Then
$\Vert \mathbf{x}\cdot \mathbf{y} \Vert \leq N\Vert \mathbf{x} \Vert \cdot \Vert \mathbf{y} \Vert$.
\end{lemma}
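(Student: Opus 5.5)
We argue coefficient-wise, using the negacyclic structure of multiplication in $\mathbf{R}=\mathbb{Z}[x]/(x^N+1)$. Write $\mathbf{x}=\sum_{i=0}^{N-1}x_i x^i$ and $\mathbf{y}=\sum_{j=0}^{N-1}y_j x^j$, with the coefficients taken as their centered representatives in $(-\frac{q}{2},\frac{q}{2}]$. The norms $\Vert\mathbf{x}\Vert$ and $\Vert\mathbf{y}\Vert$ are computed from these representatives.

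First form the product over $\mathbb{Z}[x]$ and reduce with $x^N\equiv -1$. This gives, for each $0\le k\le N-1$,
\begin{equation*}
(\mathbf{x}\cdot\mathbf{y})[k]=\sum_{i+j=k}x_iy_j-\sum_{i+j=k+N}x_iy_j .
\end{equation*}
Every pair $(i,j)\in\{0,\dots,N-1\}^2$ with $i+j\equiv k \pmod N$ occurs in exactly one of the two sums. For each fixed $i$ there is exactly one such $j$, so the total number of terms is exactly $N$.

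Next apply the triangle inequality. Each term satisfies $|x_iy_j|\le \Vert\mathbf{x}\Vert\cdot\Vert\mathbf{y}\Vert$, so the integer value of the $k$-th coefficient is at most $N\Vert\mathbf{x}\Vert\cdot\Vert\mathbf{y}\Vert$ in absolute value. Taking the maximum over $k$ gives the bound for the product computed in $\mathbf{R}$.

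The only delicate step is the reduction modulo $q$. Reducing an integer $z$ to $z \bmod q\in(-\frac{q}{2},\frac{q}{2}]$ never increases its absolute value: if $|z|\le q/2$ the centered representative is $z$ itself or $-z$, and otherwise its absolute value is at most $q/2<|z|$. Hence the centered representative of each coefficient of $\mathbf{x}\cdot\mathbf{y}$ in $\mathbf{R}_q$ has absolute value at most the integer bound above, and the stated inequality holds in $\mathbf{R}_q$ as well.
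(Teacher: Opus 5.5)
Your proof is correct and takes the same route as the paper: expand the negacyclic product, note that each coefficient is a signed sum of exactly $N$ products $x_iy_j$, and apply the triangle inequality. Your extra remark that centered reduction modulo $q$ never increases absolute value is left implicit in the paper; it is what makes the bound hold for the coefficients in $\mathbf{R}_q$, not only for the integer product.
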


\begin{proof}
Multiplication in $\mathbf{R}_q$ is performed modulo the polynomial $x^N + 1$, hence $x^N = -1$ in $\mathbf{R}_q$.

Let $\mathbf{z} = \mathbf{x} \cdot \mathbf{y}$. According to the arithmetic in $\mathbf{R}_q$, the $i$-th coefficient of $\mathbf{z}$ can be written as
$\mathbf{z}[i] = \sum_{j+k=i} \mathbf{x}[j]\mathbf{y}[k] - \sum_{j+k=N+i} \mathbf{x}[j]\mathbf{y}[k]$,
where $i,j,k \in \{0,1,\dots,N-1\}$.

For a fixed $i$, the equation $j+k=i$ admits $i+1$ valid pairs $(j,k)$. Since $j,k < N$, the equation $j+k=N+i$ admits $N-i-1$ valid pairs, namely $j \in \{i+1,\dots,N-1\}$. Therefore, each coefficient $\mathbf{z}[i]$ is a sum of exactly $N$ products of the form $\mathbf{x}[j]\mathbf{y}[k]$.

It follows that
$|\mathbf{z}[i]| \leq N \cdot \max_j |\mathbf{x}[j]| \cdot \max_k |\mathbf{y}[k]|$.
Consequently,
$\Vert \mathbf{z} \Vert \leq N\Vert \mathbf{x} \Vert \cdot \Vert \mathbf{y} \Vert$.
\end{proof}

\begin{lemma}
\label{lm:xy_e_v}
    Let $x$ and $y$ be two independent random variables. If $\E(x)=\E(y)=0$, then $\E(xy)=0,\var(xy)=\var(x)\cdot \var(y)$.
\end{lemma}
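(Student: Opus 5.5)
The plan is the standard one for a product of independent centered variables: compute first and second moments directly from independence. For the expectation, independence of $x$ and $y$ gives factorization of the joint expectation, so $\E(xy)=\E(x)\E(y)=0\cdot 0=0$. This step needs only that $\E(x)$ and $\E(y)$ exist; the bounded distributions used in the paper, $\chi_e$ and $\chi_w$, satisfy this.

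For the variance, I will write $\var(xy)=\E(x^2y^2)-(\E(xy))^2=\E(x^2y^2)$, using the first step. Independence of $x$ and $y$ implies independence of $x^2$ and $y^2$, since these are measurable functions of independent variables. Hence $\E(x^2y^2)=\E(x^2)\E(y^2)$. Because both means vanish, $\E(x^2)=\var(x)+(\E x)^2=\var(x)$, and likewise $\E(y^2)=\var(y)$. Combining these gives $\var(xy)=\var(x)\var(y)$.

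The only point requiring care, and the one I expect to be the main obstacle, is justifying that second moments exist and that independence passes to $x^2$ and $y^2$. I will handle the first by noting that all variables used later are bounded (by $B_e$), so every moment is finite. The second is the standard fact that functions of independent random variables are independent. With these in place, the computation above is complete.
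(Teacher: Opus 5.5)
Your proof is correct and follows the paper's argument exactly. Like the paper, you factor $\E(xy)=\E(x)\E(y)=0$, then use independence of $x^2$ and $y^2$ to get $\var(xy)=\E(x^2y^2)=\E(x^2)\E(y^2)=\var(x)\var(y)$. Your remark on finiteness of second moments is harmless but not needed: the statement already presupposes that $\var(x)$ and $\var(y)$ exist, which suffices, and the paper leaves this implicit.
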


\begin{proof}
    Since $x$ and $y$ are independent, we have $\E(xy)=\E(x)\cdot \E(y)=0$.
    Moreover, $x^2$ and $y^2$ are also independent, so $\E(x^2y^2)=\E(x^2)\cdot \E(y^2)$.
    The variance of $xy$ is:
    $\var(xy)=\E(x^2y^2)-\E(xy)^2
        =\E(x^2)\cdot \E(y^2)
        =(\E(x^2)-0)\cdot (\E(y^2)-0)
        =\var(x)\cdot \var(y).
    $
\end{proof}

\begin{lemma}
\label{lm:xy_e}
Let $\mathbf{x},\mathbf{y} \sim \chi_e$, and assume that $\mathbf{x}$ and $\mathbf{y}$ are independent. Then
$\E(\langle \mathbf{x},\mathbf{y} \rangle)=0$ and $\var(\langle \mathbf{x},\mathbf{y} \rangle)=N\sigma^4$.
\end{lemma}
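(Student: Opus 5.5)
The plan is to expand the inner product coefficient-wise, writing $\langle\mathbf{x},\mathbf{y}\rangle=\sum_{i=1}^{N}\mathbf{x}[i]\mathbf{y}[i]$. Throughout, the coefficients are treated as integers in the centered range, with no wrap-around modulo $q$; this is justified because samples from $\chi_e$ are bounded by $B_e$ and $NB_e^2\ll q$. By the definition of $\chi_e$, each polynomial has coefficients $\mathbf{x}[i]$ that are i.i.d. with mean $0$ and variance $\sigma^2$, and the same holds for $\mathbf{y}$. Since $\mathbf{x}$ and $\mathbf{y}$ are independent, every $\mathbf{x}[i]$ is independent of every $\mathbf{y}[j]$.

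For the expectation, linearity reduces the claim to showing $\E(\mathbf{x}[i]\mathbf{y}[i])=0$ for each $i$. This follows directly from \cref{lm:xy_e_v}. Summing over $i$ gives $\E(\langle\mathbf{x},\mathbf{y}\rangle)=0$.

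For the variance, I will first show that the summands $Z_i=\mathbf{x}[i]\mathbf{y}[i]$ are pairwise uncorrelated. For $i\neq j$ we have
\begin{equation*}
\E(Z_iZ_j)=\E(\mathbf{x}[i])\,\E(\mathbf{x}[j])\,\E(\mathbf{y}[i])\,\E(\mathbf{y}[j])=0,
\end{equation*}
because the four factors are mutually independent. Since each $Z_i$ has mean zero, this gives $\cov(Z_i,Z_j)=0$. Indeed, $Z_1,\dots,Z_N$ are fully independent, since they are functions of disjoint sets of independent coordinates. It follows that $\var(\sum_i Z_i)=\sum_i\var(Z_i)$. By \cref{lm:xy_e_v}, $\var(Z_i)=\var(\mathbf{x}[i])\var(\mathbf{y}[i])=\sigma^4$, so the total variance is $N\sigma^4$.

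The only delicate point is the modeling step: the coefficients of $\chi_e$ must be independent across indices, and the reduction modulo $q$ must be ignorable. I will state both explicitly, consistent with the paper's convention of treating $\chi_e$ as coefficient-wise i.i.d. with variance $\sigma^2$ and bounded by $B_e$. The remaining calculations are routine.
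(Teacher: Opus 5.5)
Your proof is correct and follows the paper's route: you expand the inner product coefficient-wise, apply \cref{lm:xy_e_v} to each product $\mathbf{x}[i]\mathbf{y}[i]$, and sum. You also state two facts the paper uses silently: independence across indices, which lets you add the variances, and the condition under which reduction modulo $q$ can be ignored.
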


\begin{proof}
The components $\mathbf{x}[i]$ and $\mathbf{y}[i]$ are mutually independent, with zero mean and variance $\sigma^2$. By \cref{lm:xy_e_v}, we have $\E(\mathbf{x}[i]\cdot \mathbf{y}[i])=0$ and $\var(\mathbf{x}[i]\cdot \mathbf{y}[i])=\sigma^4$.
Hence, we have
$\E(\langle \mathbf{x},\mathbf{y} \rangle)=\E\left(\sum_i \mathbf{x}[i]\cdot \mathbf{y}[i]\right)
=\sum_i \E(\mathbf{x}[i]\cdot \mathbf{y}[i])
=0$
and
$\var(\langle \mathbf{x},\mathbf{y} \rangle)=\var\left(\sum_i \mathbf{x}[i]\cdot \mathbf{y}[i]\right)
=\sum_i \var(\mathbf{x}[i]\cdot \mathbf{y}[i])
= N\sigma^4$.
\end{proof}

\begin{lemma}
\label{lm:xx_e}
Let $\mathbf{x}\sim \chi_e$. Then $\E(\langle \mathbf{x},\mathbf{x}\rangle)=N\sigma^2$ and $\var(\langle \mathbf{x},\mathbf{x}\rangle)=2N\sigma^4$.
\end{lemma}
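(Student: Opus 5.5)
The statement to prove is \cref{lm:xx_e}. The plan is to reduce it to per-coefficient moments of the discrete Gaussian and then use independence across coefficients.

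First write $\langle \mathbf{x},\mathbf{x}\rangle=\sum_i \mathbf{x}[i]^2$. Under $\chi_e$ the coefficients $\mathbf{x}[1],\dots,\mathbf{x}[N]$ are independent and identically distributed, each with mean $0$ and variance $\sigma^2$. As in the other lemmas, the coefficients are viewed as centered integers rather than residues mod $q$; because $\Vert\mathbf{x}\Vert\le B_e\ll q$, no reduction ever occurs.

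\textbf{Expectation.} By linearity, $\E(\langle\mathbf{x},\mathbf{x}\rangle)=\sum_i\E(\mathbf{x}[i]^2)$. Since each coefficient has mean zero, $\E(\mathbf{x}[i]^2)=\var(\mathbf{x}[i])=\sigma^2$, so the sum is $N\sigma^2$.

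\textbf{Variance.} The squares $\mathbf{x}[i]^2$ are independent because they are functions of independent variables. Hence
\[
\var(\langle\mathbf{x},\mathbf{x}\rangle)=\sum_i\var(\mathbf{x}[i]^2),
\qquad
\var(\mathbf{x}[i]^2)=\E(\mathbf{x}[i]^4)-\sigma^4 .
\]
The whole claim therefore reduces to the fourth-moment identity $\E(\mathbf{x}[i]^4)=3\sigma^4$. With that identity, each term equals $2\sigma^4$ and the total is $2N\sigma^4$.

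\textbf{Main obstacle.} The fourth-moment identity is exact only for a continuous Gaussian. For that case I would prove it directly via the moment generating function $e^{\sigma^2t^2/2}$, reading off the $t^4$ coefficient $\sigma^4/8$ and multiplying by $4!$ to get $3\sigma^4$. A bounded discrete Gaussian agrees with this only approximately, because discretization and tail truncation perturb the moments. I would handle this the way the paper treats rounding error: the paper models $\chi_e$ as Gaussian with variance $\sigma^2$, so I would state explicitly that the continuous Gaussian fourth moment is used. I would also add a remark that for $\sigma=3.2$ the discretization error is negligible and that truncation at $B_e$ only decreases the fourth moment slightly. With that remark, the identity $2N\sigma^4$ holds up to this standard approximation.
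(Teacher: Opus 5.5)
Your proof is correct and is essentially the paper's argument: the paper normalizes by $\sigma$ and cites the mean $N$ and variance $2N$ of a chi-square variable with $N$ degrees of freedom. Your per-coefficient computation ($\E(\mathbf{x}[i]^2)=\sigma^2$, $\E(\mathbf{x}[i]^4)=3\sigma^4$, plus independence) simply unpacks those chi-square facts. Your caveat is also apt: the paper likewise just asserts $\mathbf{x}[i]\sim\mathcal{N}(0,\sigma^2)$, so for the bounded discrete Gaussian $\chi_e$ the value $2N\sigma^4$ holds only up to this continuous-Gaussian idealization in both arguments.
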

\begin{proof}
Each component $\mathbf{x}[i]\sim \mathcal{N}(0,\sigma^2)$, hence $\frac{\mathbf{x}[i]}{\sigma}$ follows the standard normal distribution. Therefore, $\langle \frac{\mathbf{x}}{\sigma},\frac{\mathbf{x}}{\sigma}\rangle$ follows a chi-square distribution with $N$ degrees of freedom.
We have
$\E(\langle\frac{\mathbf{x}}{\sigma},\frac{\mathbf{x}}{\sigma}\rangle)=N$ and $\var(\langle \frac{\mathbf{x}}{\sigma},\frac{\mathbf{x}}{\sigma}\rangle)=2N$.
Thus,
$\E(\langle \mathbf{x},\mathbf{x}\rangle)=\E(\sigma^2 \cdot \langle \frac{\mathbf{x}}{\sigma},\frac{\mathbf{x}}{\sigma}\rangle )=\sigma^2 \cdot \E(\langle \frac{\mathbf{x}}{\sigma},\frac{\mathbf{x}}{\sigma}\rangle)=N\sigma^2.$
Similarly,
$\var(\langle\mathbf{x},\mathbf{x}\rangle)=\var(\sigma^2 \cdot \langle\frac{\mathbf{x}}{\sigma},\frac{\mathbf{x}}{\sigma}\rangle )=\sigma^4 \cdot \var(\langle\frac{\mathbf{x}}{\sigma},\frac{\mathbf{x}}{\sigma}\rangle)=2N\sigma^4$.
\end{proof}

\begin{lemma}
\label{lm:cov}
Let $x,y$ be independent, and $x,z$ be independent. If $\E(x)=\E(y)=0$, then $\cov(xy,yz)=0$. 
\end{lemma}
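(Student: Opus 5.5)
The plan is to expand the covariance directly and reduce everything to expectations of products of independent, centered factors. By definition,
\begin{equation*}
\cov(xy,yz)=\E(xy\cdot yz)-\E(xy)\,\E(yz)=\E(xy^2z)-\E(xy)\,\E(yz).
\end{equation*}
The two terms are handled separately using the independence hypotheses and $\E(x)=0$. The hypothesis $\E(y)=0$ will turn out not to be essential, although it is consistent with the intended use in \cref{th:sk_e_v}.

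First, since $x$ and $y$ are independent, $\E(xy)=\E(x)\E(y)=0$. This is the same argument as in \cref{lm:xy_e_v}, and it kills the second term regardless of the value of $\E(yz)$.

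Second, for $\E(xy^2z)$, I want to factor out $x$, that is, to write $\E(x\cdot y^2z)=\E(x)\,\E(y^2z)=0$. This needs $x$ to be independent of the pair $(y,z)$, not just pairwise independent of $y$ and of $z$. This is the main obstacle. Pairwise independence does not in general imply that $x$ is independent of $(y,z)$, so the lemma as literally stated needs the natural strengthening.

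I would state explicitly that $x$ is independent of $(y,z)$ jointly. This holds in the application: in \cref{th:sk_e_v}, $x$ corresponds to coefficients of the fresh encryption noise $\mathbf{e}_1$, while $y$ and $z$ are both coefficients of the key $\mathbf{k}_w$, sampled independently of $\mathbf{e}_1$. Under this reading, $y^2z$ is a measurable function of $(y,z)$, hence independent of $x$, and the factorization goes through. Boundedness of $\chi_e$ (samples bounded by $B_e$) guarantees that all the moments involved are finite.

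Combining the two steps gives $\cov(xy,yz)=0-0=0$. Finally, I would remark how the lemma is applied coefficientwise. Writing $\langle\mathbf{e}_1,\mathbf{k}_w\rangle=\sum_i\mathbf{e}_1[i]\mathbf{k}_w[i]$ and $\langle\mathbf{k}_w,\mathbf{k}_w\rangle=\sum_j\mathbf{k}_w[j]^2$, every cross covariance $\cov(\mathbf{e}_1[i]\mathbf{k}_w[i],\mathbf{k}_w[j]\mathbf{k}_w[j])$ vanishes by the same argument, since $\E(\mathbf{e}_1[i])=0$ and $\mathbf{e}_1$ is independent of $\mathbf{k}_w$. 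Bilinearity of covariance then yields the zero covariance used in \cref{th:sk_e_v}.
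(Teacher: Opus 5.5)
Your argument is the paper's own: both kill $\E(xy)$ using independence of $x$ and $y$ (you rightly note that $\E(x)=0$ alone suffices there) and then pull $\E(x)=0$ out of $\E(xy^2z)$. You are also right that this last factorization needs $x$ to be independent of the pair $(y,z)$; the paper's step $\E(xy[yz-\E(yz)])=\E(x)\E(y[yz-\E(yz)])$ cites only independence of $x$ and $z$, and it silently uses exactly the joint independence you add. That strengthening is genuinely necessary and not just convenient, because the literal statement is false: let $P(y=0)=\tfrac12$ and $P(y=\pm1)=\tfrac14$, let $z$ be uniform on $\{-1,1\}$ and independent of $y$, and set $x=(2y^2-1)z$. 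Then $x$ is independent of $y$ and of $z$ with $\E(x)=\E(y)=0$, yet $\cov(xy,yz)=\E(y^2)=\tfrac12$.
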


\begin{proof}
Since $x$ and $y$ are independent, we have $\E(xy)=\E(x)\cdot \E(y)=0$.
Moreover, since $x$ and $z$ are independent, $\E(xy[yz-\E(yz)])=\E(x)\cdot \E(y[yz-\E(yz)])=0$. Hence,
$\cov(xy,yz)=\E([xy-\E(xy)]\cdot[yz-\E(yz)])
=\E(xy\cdot[yz-\E(yz)])
=\E(x)\cdot \E(y\cdot[yz-\E(yz)])=0$.
\end{proof}

\begin{lemma}
\label{lm:r_xy}
Let $\mathbf{x} \sim \chi_s$ and $\mathbf{y} \sim \chi_e$. Let $\mathbf{z}=\mathbf{x}\cdot \mathbf{y}$. Then $\E(\mathbf{z}[i])=0$ and $\var(\mathbf{z}[i])=\frac{2N\sigma^2}{3}$.
\end{lemma}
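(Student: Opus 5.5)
We need the mean and variance of a single coefficient $\mathbf{z}[i]$ of $\mathbf{z}=\mathbf{x}\cdot\mathbf{y}$, where $\mathbf{x}\sim\chi_s$ has coefficients i.i.d.\ uniform on $\{-1,0,1\}$ and $\mathbf{y}\sim\chi_e$ has coefficients i.i.d.\ with mean $0$ and variance $\sigma^2$, with $\mathbf{x}$ independent of $\mathbf{y}$. The plan is to write $\mathbf{z}[i]$ explicitly as a signed sum of $N$ products and then apply \cref{lm:xy_e_v} termwise.

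First, I would reuse the coefficient formula from the proof of \cref{lm:norm_mul}:
\[
\mathbf{z}[i]=\sum_{j+k=i}\mathbf{x}[j]\mathbf{y}[k]-\sum_{j+k=N+i}\mathbf{x}[j]\mathbf{y}[k].
\]
As shown there, this contains exactly $N$ terms $\pm\mathbf{x}[j]\mathbf{y}[k]$. For each fixed $j$ there is exactly one $k$ in the sum, so the indices $j$ are all distinct and the indices $k$ are all distinct.

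Next, I would compute the moments of a single term. For $\mathbf{x}[j]$ uniform on $\{-1,0,1\}$ we have $\E(\mathbf{x}[j])=0$ and $\var(\mathbf{x}[j])=\frac{2}{3}$. Since $\mathbf{x}[j]$ and $\mathbf{y}[k]$ are independent with zero means, \cref{lm:xy_e_v} gives $\E(\mathbf{x}[j]\mathbf{y}[k])=0$ and $\var(\mathbf{x}[j]\mathbf{y}[k])=\frac{2\sigma^2}{3}$. A sign flip changes neither fact. By linearity, $\E(\mathbf{z}[i])=0$.

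For the variance, the one step needing care is showing the $N$ terms are pairwise uncorrelated. Take two distinct terms $\mathbf{x}[j]\mathbf{y}[k]$ and $\mathbf{x}[j']\mathbf{y}[k']$, so $j\neq j'$ and $k\neq k'$. All four factors are then independent with zero mean, so
\[
\E(\mathbf{x}[j]\mathbf{y}[k]\mathbf{x}[j']\mathbf{y}[k'])=\E(\mathbf{x}[j])\,\E(\mathbf{x}[j'])\,\E(\mathbf{y}[k])\,\E(\mathbf{y}[k'])=0,
\]
and every covariance vanishes. The terms need only be pairwise uncorrelated for the variance to add, not mutually independent. Therefore $\var(\mathbf{z}[i])=N\cdot\frac{2\sigma^2}{3}=\frac{2N\sigma^2}{3}$.

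I expect no real obstacle. The only subtlety is to state explicitly that distinct terms use distinct indices of both $\mathbf{x}$ and $\mathbf{y}$, which is what makes the cross terms vanish.
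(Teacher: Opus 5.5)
Your proof is correct and follows the same route as the paper: write $\mathbf{z}[i]$ as a signed sum of $N$ products $\mathbf{x}[j]\mathbf{y}[k]$, apply \cref{lm:xy_e_v} termwise, and add the variances once the cross terms are shown to be uncorrelated. Your justification of that last step is more accurate than the paper's appeal to \cref{lm:cov}, which concerns products sharing a common factor; you note that distinct terms use distinct indices of both $\mathbf{x}$ and $\mathbf{y}$, so all four factors are independent with zero mean. The only thing to add, as the paper does, is that reduction modulo $q$ changes nothing because $\Vert\mathbf{z}\Vert\le NB_e<q/2$ by \cref{lm:norm_mul}.
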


\begin{proof}
    The polynomial modulus of $\mathbf{R}_q$ is $x^N+1$, which implies that $x^N=-1$ in $\mathbf{R}_q$. Hence,
    $\mathbf{z}[i]=\sum_{j+k=i}\mathbf{x}[j]\cdot \mathbf{y}[k]-\sum_{j+k=i+N}\mathbf{x}[j]\cdot \mathbf{y}[k]$.
    Since the ciphertext modulus $q$ is much larger than $B_e$, modular reduction modulo $q$ can be ignored in the analysis.
    Because $\mathbf{x}[j]$ and $\mathbf{y}[k]$ are independent, by \cref{lm:xy_e_v}, we have
    $\E(\mathbf{x}[j]\cdot \mathbf{y}[k])=0$
    and
    $\var(\mathbf{x}[j]\cdot \mathbf{y}[k])=\frac{2\sigma^2}{3}$.
    Hence, $\E(\mathbf{z}[i])=\E(\sum_{j+k=i}\mathbf{x}[j]\cdot \mathbf{y}[k]-\sum_{j+k=i+N}\mathbf{x}[j]\cdot \mathbf{y}[k])
    =\sum_{j+k=i}\E(\mathbf{x}[j]\cdot \mathbf{y}[k])-\sum_{j+k=i+N}\E(\mathbf{x}[j]\cdot \mathbf{y}[k])
    =0$.
    By \cref{lm:cov}, when $j+k \neq j^\prime+k^\prime$,
    $\cov(\mathbf{x}[j]\cdot \mathbf{y}[k],\mathbf{x}[j^\prime]\cdot \mathbf{y}[k^\prime])=0$.
    Hence,
    $\var(\mathbf{z}[i])=\var(\sum_{j+k=i}\mathbf{x}[j]\cdot \mathbf{y}[k]-\sum_{j+k=i+N}\mathbf{x}[j]\cdot \mathbf{y}[k])
    =\sum_{j+k=i}\var(\mathbf{x}[j]\cdot \mathbf{y}[k])+\sum_{j+k=i+N}\var(\mathbf{x}[j]\cdot \mathbf{y}[k])
    =\frac{2N\sigma^2}{3}$.
\end{proof}

\section{Other Analysis of \AddRWatermark for \skscheme}
\label{appendix:other_analysis_add}
\subsection{Security-Preserving}
Our embedding scheme does not affect the security of the underlying HE scheme. Formally, we have the following theorem:
\begin{theorem}
\label{th:add_passive_attack}
    The watermarking scheme \AddRWatermark defined in \cref{sec:add_watermark}  is security-preserving.
\end{theorem}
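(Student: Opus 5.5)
We reduce security-preservation of \AddRWatermark to the CPA security of \skscheme (\cref{th:CPA-secure}) by a hybrid argument. The key observation is that \Embed, as defined in \cref{eq:s_insert_watermark}, is a public, key-independent affine map on ciphertexts. Given the embedding key $\mathbf{k}_w$, the intensity $I_w$ and the watermark value $w_e$, it sends $(\mathbf{c}_1,\mathbf{c}_2)\mapsto(\mathbf{c}_1+pw_eI_w\mathbf{k}_w,\mathbf{c}_2)$. It never uses $\sk$. In the security-preserving game the adversary may know $\emk$, $w$ and the algorithm, so we simply give all of these to the reduction as well.

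The reduction runs as follows. Let $\mathcal{A}$ be a PPT adversary that, given $c_{t,b}'=\Embed(\emk,w,c_{t,b})$, guesses $b$ with advantage $\epsilon$. Build $\mathcal{B}$ against the CPA game of \cref{def:CPA}:
\begin{itemize}
\item $\mathcal{B}$ samples $\mathbf{k}_w\sim\chi_w$ itself and fixes $I_w$ and $w$.
\item It forwards $\mathcal{A}$'s two equal-length messages to its challenger, receives $c_t^*$ (an encryption of $\mathbf{m}_b$), and hands $\mathcal{A}$ the value $\Embed(\emk,w,c_t^*)$.
\item Embedding queries on any ciphertext other than $c_{t,0},c_{t,1}$ are answered locally by applying the public affine map; no oracle is needed. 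If $\mathcal{A}$ also needs fresh encryptions, these come from $\mathcal{B}$'s encryption oracle.
\item $\mathcal{B}$ outputs $\mathcal{A}$'s guess.
\end{itemize}
The simulation is perfect, so $\mathcal{B}$'s advantage equals $\epsilon$, and $\epsilon$ must be negligible by \cref{th:CPA-secure}.

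A more direct, distributional phrasing gives the same conclusion. From the proof of \cref{th:CPA-secure}, $c_{t,b}$ is computationally indistinguishable from a uniform pair $(\mathbf{b}_r,\mathbf{a}_r)$. Translating $\mathbf{b}_r$ by the fixed (or independently sampled) element $pw_eI_w\mathbf{k}_w$ preserves the uniform distribution on $\mathbf{R}_q$. Hence both $c_{t,0}'$ and $c_{t,1}'$ are indistinguishable from uniform, and therefore from each other. This also covers the requirement that watermarked ciphertexts be indistinguishable from non-watermarked ones (the case $w_e=0$).

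The main obstacle is making the game well-defined relative to the paper's informal definition, in particular the embedding-oracle restriction and the role of \pkscheme. If the plain distributional argument is not accepted as enough, I would handle \pkscheme by the same translation argument, using the fact that its ciphertexts are likewise indistinguishable from uniform. I would also check that distinguishing the translation $\mathbf{b}_r\mapsto\mathbf{b}_r+\text{const}$ cannot be helped by knowing $\emk$; it cannot, since a bijection applied to the uniform distribution gives the uniform distribution.
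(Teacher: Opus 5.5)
Your proof is correct. Your second, distributional paragraph is essentially the paper's own argument. The paper takes from \cref{th:CPA-secure} that $c_{t,b}$ is indistinguishable from a uniform pair $(\mathbf{b}_r,\mathbf{a}_r)$. It then notes that adding the fixed element $pI_w\mathbf{k}_w$ to the first component preserves this, so both embedded challenges look uniform.

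Your primary route is genuinely different. It is a black-box reduction resting only on the fact that $\Embed$ is a public map that never uses $\sk$, so a CPA adversary can simulate the whole watermarking game. This buys several things the paper's proof lacks. It uses \cref{th:CPA-secure} exactly as stated, rather than the stronger pseudorandomness of ciphertexts that appears only inside its proof. It explicitly handles the embedding oracle of the definition and the adversary's knowledge of $\emk$ and $w$, which the paper's proof never mentions. It also transfers verbatim to any embedding computable without $\sk$, not just additive shifts.

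Conversely, the distributional route gives something the reduction cannot: the informal requirement that watermarked ciphertexts be indistinguishable from unwatermarked ones. CPA security alone does not imply this; a CPA-secure scheme with a fixed, publicly recognizable component would expose a known shift. So pseudorandomness is genuinely needed there, and you use it correctly. Your justification of the translation step, that a fixed shift is a bijection preserving the uniform distribution on the pair, is also cleaner than the paper's. The paper argues about $\mathbf{c}_{b,1}$ on its own before concluding about the pair.

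Finally, the reading of the game you adopt is forced rather than a convenience: the adversary sees only the embedded challenge, not $c_{t,0},c_{t,1}$. Suppose the adversary saw $c_{t,0}$. Then knowing $\emk$ and $w$ would let it recompute the embedded challenge and win. Even without them, one embedding query on any other ciphertext $c$ reveals the shift $\Embed(\emk,w,c)-c=(pw_eI_w\mathbf{k}_w,0)$. So the theorem holds only under the interpretation that both you and the paper implicitly use.
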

\begin{proof}
    Let $c_{t,0}=(\mathbf{c}_{0,1},\mathbf{c}_{0,2})$ and $c_{t,1}=(\mathbf{c}_{1,1},\mathbf{c}_{1,2})$ be two \AHE ciphertexts. Let $(\mathbf{b}_r,\mathbf{a}_r)$ be uniformly random. Let $\mathbf{k}_w$ be the $\emk$, $w_e$ be the encoded watermark and $I_w$ be the embedding intensity.
    
    According to \cref{th:CPA-secure}, we know that $c_{t,0}$ and $c_{t,1}$ are indistinguishable from $(\mathbf{b}_r,\mathbf{a}_r)$. So, it is enough to prove that $(\mathbf{c}_{b,1}+pI_w\mathbf{k}_w, \mathbf{c}_{b,2})$ is indistinguishable from $(\mathbf{b}_r,\mathbf{a}_r)$, where $b$ is $0$ or $1$. Since $\mathbf{c}_{b,1}$ is  indistinguishable from $\mathbf{b}_r$ and $pI_w\mathbf{k}_w$ is a constant element of $\mathbf{R}_q$, we have that $\mathbf{c}_{b,1}+pI_w\mathbf{k}_w$ is indistinguishable from $\mathbf{c}_{b,1}$. Hence, we have that $(\mathbf{c}_{b,1}+pI_w\mathbf{k}_w, \mathbf{c}_{b,2})$ is indistinguishable from $(\mathbf{b}_r,\mathbf{a}_r)$.
\end{proof}

In the following, we focus on the watermark embedded in \skscheme ciphertexts and defer the analysis of watermark embedded in \pkscheme ciphertexts to Appendix~\cref{appendix:analysis_of_pk_watermark}.

\subsection{Fidelity}



For ciphertext watermark embedding in \skscheme, the following theorem holds:
\begin{theorem}
\label{th:sk_noise_bound}
When the watermark carrier is a ciphertext of \skscheme, and the embedding intensity satisfies $I_w<\frac{q}{2pB_e}-2$, the homomorphic ciphertext watermarking scheme \AddRWatermark described in \cref{sec:add_watermark} achieves fidelity.
\end{theorem}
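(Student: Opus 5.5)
We show directly that the watermarked \skscheme ciphertext still decrypts to $\mathbf{m}$: compute $\langle c_t^\prime,\sk\rangle$ exactly, then check that the centered reduction modulo $q$ does not wrap around. Write the cover ciphertext as $c_t=(\mathbf{a}\cdot\mathbf{s}+p\mathbf{e}_1+\mathbf{m},-\mathbf{a})$ with $\mathbf{e}_1\sim\chi_e$, so $\Vert\mathbf{e}_1\Vert\le B_e$. By \cref{eq:s_insert_watermark}, the watermarked ciphertext is $c_t^\prime=(\mathbf{a}\cdot\mathbf{s}+p\mathbf{e}_1+\mathbf{m}+pw_eI_w\mathbf{k}_w,-\mathbf{a})$. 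Applying $\De$, the $\mathbf{a}\cdot\mathbf{s}$ terms cancel exactly as in the proof of the \MultRWatermark correctness theorem:
\begin{equation*}
\mathbf{c}^\prime_1+\mathbf{s}\cdot\mathbf{c}^\prime_2 = p(\mathbf{e}_1+w_eI_w\mathbf{k}_w)+\mathbf{m} \mod q .
\end{equation*}

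If the right-hand side, viewed as an integer polynomial, already has every coefficient in $(-\frac{q}{2},\frac{q}{2}]$, then the mod-$q$ reduction leaves it unchanged. Reducing the true integer value modulo $p$ then kills $p(\mathbf{e}_1+w_eI_w\mathbf{k}_w)$ and returns $\mathbf{m}$. The unwatermarked ciphertext decrypts to the same $\mathbf{m}$ under the standard correctness condition $pB_e+p/2<q/2$, which is implied by the bound derived below. Hence $\De(\sk,c_t^\prime)=\De(\sk,c_t)$, as \cref{def:fidelity} requires.

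The main step is the no-wraparound bound. Since $\chi_w=\chi_e$, we have $\Vert\mathbf{k}_w\Vert\le B_e$, and $\vert w_e\vert\le1$. With $\Vert\mathbf{m}\Vert\le p/2$ and the triangle inequality on the infinity norm,
\begin{equation*}
\Vert p(\mathbf{e}_1+w_eI_w\mathbf{k}_w)+\mathbf{m}\Vert \le pB_e+pI_wB_e+\frac{p}{2}=pB_e(I_w+1)+\frac{p}{2}.
\end{equation*}
We need this to be less than $q/2$, i.e.\ $I_w<\frac{q}{2pB_e}-1-\frac{1}{2B_e}$. Because $B_e\ge1$, we have $1+\frac{1}{2B_e}\le 2$, so the hypothesis $I_w<\frac{q}{2pB_e}-2$ suffices. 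The constant $-2$ in the statement is just a convenient slack that absorbs these additive terms; the only thing to verify is that this slack is enough for $B_e\ge1$.

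A small subtlety is the case $w_e=0$, where nothing is embedded and fidelity is trivial. We also rely on the paper's convention that "mod $p$" is applied to the centered representative modulo $q$. With that convention, the argument above completes the proof.
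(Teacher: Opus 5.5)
Your proposal is correct and follows the same route as the paper. Both compute the decryption inner product $p\mathbf{e}_1+\mathbf{m}+pw_eI_w\mathbf{k}_w$, bound its infinity norm by the triangle inequality using $\Vert\mathbf{e}_1\Vert,\Vert\mathbf{k}_w\Vert\le B_e$, and conclude it stays below $q/2$ under $I_w<\frac{q}{2pB_e}-2$. Your version adds minor refinements without changing the argument: you use $\Vert\mathbf{m}\Vert\le p/2$ instead of the paper's $p$, you make explicit that the $-2$ slack suffices because $B_e\ge 1$ (which the paper also uses implicitly), and you check that the unwatermarked ciphertext also decrypts to $\mathbf{m}$.
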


\begin{proof}
Let the ciphertext encrypted under \skscheme be
$c_t=(\mathbf{a}\cdot \mathbf{s}+p\mathbf{e}_1+\mathbf{m},-\mathbf{a})$,
where $\mathbf{a} \sim \chi_u$ and $\mathbf{e}_1 \sim \chi_e$.
According to the embedding rule in \cref{eq:s_insert_watermark}, the ciphertext after embedding $w_e$ is given by
$c_t^\prime=(\mathbf{a}\cdot \mathbf{s}+p\mathbf{e}_1+\mathbf{m}+pw_eI_w\mathbf{k}_w,-\mathbf{a})$.

The inner product of $c_t^\prime$ with the secret key $\sk$ is:
\begin{equation}
\label{eq:noise_in_sk}
    \begin{aligned}
        \langle \sk,c_t^\prime\rangle
        &=\mathbf{a}\cdot \mathbf{s}+p\mathbf{e}_1+\mathbf{m}+pw_eI_w\mathbf{k}_w-\mathbf{a}\cdot \mathbf{s}\\
        &=p\mathbf{e}_1+\mathbf{m}+pw_eI_w\mathbf{k}_w.
    \end{aligned}
\end{equation}

By applying the norm inequality, we have
$\left\lVert \langle \sk,c_t^\prime\rangle \right\rVert
= \left\lVert p\mathbf{e}_1 + pw_eI_w\mathbf{k}_w + \mathbf{m} \right\rVert
\leq pB_e + pI_wB_e + p$.
Since $p \ll q$, and $B_e$ is a small integer determined by the security parameter, when $I_w<\frac{q}{2pB_e}-2$, it holds that
$\left\lVert \langle \sk,c_t^\prime\rangle \right\rVert < \frac{q}{2}$.
Therefore, $c_t^\prime$ can be correctly decrypted, and we obtain
$\mathbf{m}=\langle \sk,c_t^\prime\rangle \mod p$.
\end{proof}






\section{Other Analysis of \MultRWatermark}
\label{appendix:other_analysis_mult}
According to the construction method of the linear equation system described in \cref{sec:A_le_construct}, we can control the norm of the initially embedded solution. Therefore, we assume that the solution norm satisfies $\Vert X_{s,i} \Vert < B_e$, i.e., its norm satisfies the same requirement as $\mathbf{k}_w$ in \AddRWatermark.
Since \MultRWatermark is a special case of \AddRWatermark, it provides the same security and fidelity guarantees. Note that we only consider \skscheme ciphertexts in this setting.
The security analysis of \AddRWatermark does not depend on $\mathbf{k}_w$, while the fidelity analysis only relies on the norm bound of $\mathbf{k}_w$.
Therefore, \MultRWatermark also satisfies the conclusions of \cref{th:add_passive_attack} and \cref{th:sk_noise_bound}.
Accordingly, in the following, we focus only on the robustness of \MultRWatermark.

\subsection{Robustness}

When $I_w$ is larger than the number of homomorphic additions, we have $\lfloor \frac{\mathbf{m}}{I_wp}\rceil=0$. Therefore, the plaintext hidden in the ciphertext does not affect the extracted correlation value. In the following, we consider adding a ciphertext with a watermark embedded using a different key.

\begin{theorem}
\label{th:sk_r_2}
Let \skscheme ciphertext $c_{t,0}=(\mathbf{a}_0\cdot \mathbf{s}+p\mathbf{e}_0+\mathbf{m}_0+pw_eI_w\mathbf{k}_w,-\mathbf{a}_0)$ be the ciphertext after embedding $w_e$ with embedding key $\mathbf{k}_w$, and let $c_{t,1}=(\mathbf{a}_1\cdot \mathbf{s}+p\mathbf{e}_1+\mathbf{m}_1+pw_e^\prime I_w^\prime \mathbf{k}_w^\prime,-\mathbf{a}_1)$ be the ciphertext after embedding $w_e^\prime$ with embedding key $\mathbf{k}_w^\prime$.
The embedding key $\mathbf{k}_w$ is independent of $\mathbf{k}_w^\prime$.
Let $\rho_{sk}$ denote the correlation value of $c_{t,0}$, and let $\rho^\prime_{sk}$ denote the correlation value of $c_{t,0}+c_{t,1}$ when taking $\mathbf{k}_w$ as extracting key.
Then, $\E(\rho_{sk})=\E(\rho^\prime_{sk})$, and
$
\var(\rho^\prime_{sk})=\frac{\sigma^4}{NI_w^2}+\frac{w_e^{\prime 2} I_w^{\prime 2}\sigma^4}{NI_w^2}+\var(\rho_{sk}).
$
\end{theorem}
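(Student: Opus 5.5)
Following the proof of \cref{th:sk_e_v}, I will first write down the decryption noise of the sum and then expand the correlation as a sum of inner products. The sum is
\[
c_{t,0}+c_{t,1}=\bigl((\mathbf{a}_0+\mathbf{a}_1)\cdot\mathbf{s}+p(\mathbf{e}_0+\mathbf{e}_1)+\mathbf{m}_0+\mathbf{m}_1+pw_eI_w\mathbf{k}_w+pw_e^\prime I_w^\prime\mathbf{k}_w^\prime,\;-(\mathbf{a}_0+\mathbf{a}_1)\bigr).
\]
Taking the inner product with $\sk$ removes the $\mathbf{a}$ terms. As in \cref{eq:noise_in_sk}, the plaintext contributes nothing after division by $pI_w$ and rounding, and the rounding error is ignored. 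The extracted noise is therefore $\frac{1}{I_w}(\mathbf{e}_0+\mathbf{e}_1+w_e^\prime I_w^\prime\mathbf{k}_w^\prime)+w_e\mathbf{k}_w$. Correlating with $\mathbf{k}_w$ gives
\[
\rho^\prime_{sk}=\rho_{sk}+\frac{1}{NI_w}\langle\mathbf{e}_1,\mathbf{k}_w\rangle+\frac{w_e^\prime I_w^\prime}{NI_w}\langle\mathbf{k}_w^\prime,\mathbf{k}_w\rangle,
\]
where $\rho_{sk}=\frac{1}{NI_w}(\langle\mathbf{e}_0,\mathbf{k}_w\rangle+w_eI_w\langle\mathbf{k}_w,\mathbf{k}_w\rangle)$ is exactly the quantity analyzed in \cref{th:sk_e_v}.

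\textbf{Expectation.} The vectors $\mathbf{e}_1$ and $\mathbf{k}_w^\prime$ are independent of $\mathbf{k}_w$, and all three are distributed as $\chi_e$. By \cref{lm:xy_e}, both added inner products have mean $0$. Linearity of expectation then gives $\E(\rho^\prime_{sk})=\E(\rho_{sk})=w_e\sigma^2$.

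\textbf{Variance.} By \cref{lm:xy_e}, $\var(\langle\mathbf{e}_1,\mathbf{k}_w\rangle)=\var(\langle\mathbf{k}_w^\prime,\mathbf{k}_w\rangle)=N\sigma^4$. After scaling, the two new terms contribute $\frac{\sigma^4}{NI_w^2}$ and $\frac{w_e^{\prime2}I_w^{\prime2}\sigma^4}{NI_w^2}$, which matches the claimed formula. It remains to show that all cross covariances vanish. The four summands are $\langle\mathbf{e}_0,\mathbf{k}_w\rangle$, $\langle\mathbf{k}_w,\mathbf{k}_w\rangle$, $\langle\mathbf{e}_1,\mathbf{k}_w\rangle$ and $\langle\mathbf{k}_w^\prime,\mathbf{k}_w\rangle$.

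\textbf{Main obstacle: the cross covariances.} These are not between independent variables, because every term shares $\mathbf{k}_w$. I will expand each inner product coordinatewise and use bilinearity of covariance.
\begin{itemize}
\item For pairs $\langle\mathbf{x},\mathbf{k}_w\rangle$ and $\langle\mathbf{y},\mathbf{k}_w\rangle$ with $\mathbf{x},\mathbf{y}$ distinct, independent of each other and of $\mathbf{k}_w$, and of mean zero: each coordinate covariance is $\E(\mathbf{x}[i]\mathbf{k}_w[i]\mathbf{y}[j]\mathbf{k}_w[j])$. This factors to include $\E(\mathbf{x}[i])=0$.
\item For pairs involving $\langle\mathbf{k}_w,\mathbf{k}_w\rangle$: I apply \cref{lm:cov} coordinatewise with $x$ the independent zero-mean coordinate and $y,z$ coordinates of $\mathbf{k}_w$, as was already done for $\mathbf{e}_0$ in \cref{th:sk_e_v}.
\end{itemize}
The care needed is to check that the independence hypotheses of \cref{lm:cov} hold, i.e.\ that $\mathbf{e}_0,\mathbf{e}_1,\mathbf{k}_w^\prime$ are mutually independent and independent of $\mathbf{k}_w$. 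Once that is verified, the variance is the sum of the individual variances, and regrouping the $\mathbf{e}_0$ and $\langle\mathbf{k}_w,\mathbf{k}_w\rangle$ parts into $\var(\rho_{sk})$ yields the statement.
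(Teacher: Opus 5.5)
Your proposal is correct and follows essentially the same route as the paper. Like the paper, you decompose $\rho^\prime_{sk}$ as $\rho_{sk}+\frac{1}{NI_w}\langle\mathbf{e}_1,\mathbf{k}_w\rangle+\frac{w_e^\prime I_w^\prime}{NI_w}\langle\mathbf{k}_w^\prime,\mathbf{k}_w\rangle$ and apply \cref{lm:xy_e} to the two new terms. The one difference is that you explicitly verify that the cross covariances vanish, which the paper assumes without comment when it adds the variances. Factoring out $\E(\mathbf{e}_1[i])=0$ or $\E(\mathbf{k}_w^\prime[i])=0$, as in your first bullet, handles every such pair, including the off-diagonal pairs against $\langle\mathbf{k}_w,\mathbf{k}_w\rangle$ where \cref{lm:cov} does not literally apply.
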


\begin{proof}
    The ciphertext sum is given by$c_{t,0}+c_{t,1}=((\mathbf{a}_0+\mathbf{a}_1)\cdot \mathbf{s}+p(\mathbf{e}_0+\mathbf{e}_1)+\mathbf{m}_0+\mathbf{m}_1+pw_eI_w\mathbf{k}_w+pw_e^\prime I_w^\prime \mathbf{k}_w^\prime,-(\mathbf{a}_0+\mathbf{a}_1)).$ 
    Using \cref{eq:extract_noise}, the extracted noise leads to the following correlation value:
    \begin{equation*}
         \rho=\frac{1}{NI_w}\cdot \langle \mathbf{e}_0+\mathbf{e}_1+w_eI_w\mathbf{k}_w+w_e^\prime I_w^\prime \mathbf{k}_w^\prime,\mathbf{k}_w\rangle.
    \end{equation*}

    Note that $\mathbf{e}_1$ is independent of $\mathbf{k}_w$.
    Based on this observation, the expectation of $\rho^\prime_{sk}$ can be computed as
    \begin{equation*}
        \begin{aligned}
        \E(\rho^\prime_{sk})&=\frac{1}{NI_w}\E(\langle\mathbf{e}_0+\mathbf{e}_1+w_e^\prime I_w^\prime \mathbf{k}_w^\prime+w_eI_w\mathbf{k}_w,\mathbf{k}_w\rangle)\\
        &=\frac{\E(\langle\mathbf{e}_1,\mathbf{k}_w\rangle)}{NI_w}+\frac{w_e^\prime I_w^\prime }{NI_w}\E(\langle\mathbf{k}_w^\prime,\mathbf{k}_w\rangle)+\E(\rho_{sk})\\
        &=\E(\rho_{sk}).    
        \end{aligned}
    \end{equation*}
    
Similarly, the variance of $\rho^\prime_{sk}$ is given by
\begin{equation*}
    \begin{aligned}
    \var(\rho^\prime_{sk})&=\frac{1}{N^2I_w^2}\var(\langle\mathbf{e}_0+\mathbf{e}_1+w_e^\prime I_w^\prime \mathbf{k}_w^\prime+w_eI_w\mathbf{k}_w,\mathbf{k}_w\rangle)\\
    &=\frac{\var(\langle\mathbf{e}_1,\mathbf{k}_w\rangle)}{N^2I_w^2}+\frac{w_e^{\prime 2} I_w^{\prime 2} }{N^2I_w^2}\var(\langle\mathbf{k}_w^\prime,\mathbf{k}_w\rangle)+\var(\rho_{sk})\\
    &=\frac{\sigma^4}{NI_w^2}+\frac{w_e^{\prime 2} I_w^{\prime 2}\sigma^4}{NI_w^2}+\var(\rho_{sk}).   
    \end{aligned}
\end{equation*}

\end{proof}




Therefore, for \AddRWatermark watermarked \skscheme ciphertexts, adding a ciphertext that also contains a watermark $w_e^\prime$ increases the variance of $\rho$ by $\frac{\sigma^4}{NI_w^2}+\frac{w_e^{\prime 2} I_w^{\prime 2}\sigma^4}{NI_w^2}$.
If the added ciphertext does not contain a watermark, i.e., $w_e^\prime=0$, the variance still increases by $\frac{\sigma^4}{NI_w^2}$.

\section{Analysis of \AddRWatermark for \pkscheme}
\label{appendix:analysis_of_pk_watermark}
\subsection{Fidelity}
For ciphertext watermark embedding in \pkscheme, we state the following theorem:
\begin{theorem}
\label{th:pk_noise_bound}
When the watermark carrier is a ciphertext of \pkscheme, and the embedding intensity satisfies $I_w < \frac{q}{2pB_e} - 2N - 2$, the homomorphic ciphertext watermarking scheme \AddRWatermark described in \cref{sec:add_watermark} achieves fidelity.
\end{theorem}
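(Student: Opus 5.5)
The plan mirrors the proof of \cref{th:sk_noise_bound}. The only new ingredient is bounding the larger noise of a \pkscheme ciphertext via \cref{lm:norm_mul}.

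First, I write the \pkscheme ciphertext using $\pk=(\mathbf{a}_k\cdot\mathbf{s}+p\mathbf{e}_k,-\mathbf{a}_k)$ and $\mathbf{u}\sim\chi_s$, $\mathbf{e}_1,\mathbf{e}_2\sim\chi_e$:
$c_t=(\mathbf{k}_1\mathbf{u}+p\mathbf{e}_1+\mathbf{m},\mathbf{k}_2\mathbf{u}+p\mathbf{e}_2)$.
After embedding by \cref{eq:s_insert_watermark}, the first component gains $pw_eI_w\mathbf{k}_w$. I then compute $\langle\sk,c_t^\prime\rangle=\mathbf{c}_1^\prime+\mathbf{s}\cdot\mathbf{c}_2^\prime$. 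The terms $\mathbf{a}_k\mathbf{s}\mathbf{u}$ cancel, which leaves
$$\langle\sk,c_t^\prime\rangle=p(\mathbf{e}_k\mathbf{u}+\mathbf{e}_1+\mathbf{s}\mathbf{e}_2)+\mathbf{m}+pw_eI_w\mathbf{k}_w.$$
This is the \pkscheme analogue of \cref{eq:noise_in_sk}.

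Next, I bound each term in the infinity norm. Since $\Vert\mathbf{u}\Vert,\Vert\mathbf{s}\Vert\le 1$ and $\Vert\mathbf{e}_k\Vert,\Vert\mathbf{e}_2\Vert\le B_e$, \cref{lm:norm_mul} gives $\Vert\mathbf{e}_k\mathbf{u}\Vert\le NB_e$ and $\Vert\mathbf{s}\mathbf{e}_2\Vert\le NB_e$. I also have $\Vert\mathbf{e}_1\Vert\le B_e$, $\Vert\mathbf{k}_w\Vert\le B_e$ (since $\chi_w=\chi_e$), $|w_e|\le1$ and $\Vert\mathbf{m}\Vert\le p/2$. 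Together these give
$$\Vert\langle\sk,c_t^\prime\rangle\Vert\le pB_e(2N+1)+pI_wB_e+p.$$

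Finally, I show this bound is below $q/2$ under the hypothesis $I_w<\frac{q}{2pB_e}-2N-2$. The hypothesis gives $pI_wB_e<\frac q2-2NpB_e-2pB_e$, so
$$\Vert\langle\sk,c_t^\prime\rangle\Vert<\frac q2-pB_e+p\le\frac q2,$$
using $B_e\ge1$. The small constant slack from $\mathbf{m}$ is absorbed exactly as in \cref{th:sk_noise_bound}. With no wraparound modulo $q$, the reduction modulo $p$ returns $\mathbf{m}$, because every other term is a multiple of $p$. Hence $\De(\sk,c_t^\prime)=\mathbf{m}=\De(\sk,c_t)$, which is fidelity.

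The main obstacle is the bookkeeping in the norm bound: I must make sure the two ring products each contribute at most $NB_e$, so that the extra $2N$ in the hypothesis is exactly consumed, and the remaining constants ($\mathbf{e}_1$ and $\mathbf{m}$) must fit in the leftover slack of $2pB_e$. If that slack is too tight for the plaintext term, I would use $\Vert\mathbf{m}\Vert\le p/2\le pB_e$ to close the gap.
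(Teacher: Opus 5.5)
Your proposal is correct and is essentially the paper's proof. You expand $\langle\sk,c_t^\prime\rangle$ so that the $\mathbf{a}_k\mathbf{s}\mathbf{u}$ terms cancel, bound $\mathbf{e}_k\mathbf{u}$ and $\mathbf{s}\mathbf{e}_2$ by $NB_e$ via \cref{lm:norm_mul}, and absorb the plaintext term using $B_e\ge 1$, so the hypothesis on $I_w$ gives $\Vert\langle\sk,c_t^\prime\rangle\Vert<q/2$. Your closing hedge is unnecessary, since $\frac q2-pB_e+p\le\frac q2$ already settles it, and along the way you correctly use $\mathbf{u}\sim\chi_s$ and $\Vert\mathbf{s}\mathbf{e}_2\Vert\le NB_e$ where the paper's text has small typos ($\chi_u$ and $N\Vert\mathbf{e}_k\Vert$).
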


\begin{proof}

Let a ciphertext of \pkscheme be
$c_t = (\mathbf{k}_1 \cdot \mathbf{u} + p\mathbf{e}_1 + \mathbf{m},\ \mathbf{k}_2 \cdot \mathbf{u} + p\mathbf{e}_2)$,
where $\mathbf{u} \sim \chi_u$, $\mathbf{e}_1 \sim \chi_e$, and $\mathbf{e}_2 \sim \chi_e$.
After embedding the watermark $w_e$ using \AddRWatermark, the resulting ciphertext is
$c_t^\prime = (\mathbf{k}_1 \cdot \mathbf{u} + p\mathbf{e}_1 + \mathbf{m} + p w_e I_w \mathbf{k}_w,\ \mathbf{k}_2 \cdot \mathbf{u} + p\mathbf{e}_2)$.
Then, the inner product between $c_t^\prime$ and  $\sk$ is given as follows:
\begin{equation}
\label{eq:noise_in_pk}
\begin{aligned}
    \langle\sk,c_t^\prime\rangle&=\mathbf{k}_1\cdot \mathbf{u}+p\mathbf{e}_1+\mathbf{m}+pw_eI_w\mathbf{k}_w +\mathbf{k}_2\cdot \mathbf{u} \cdot \mathbf{s} \\ &+p\mathbf{e}_2\cdot \mathbf{s}\\
    &=(\mathbf{a}_k\cdot \mathbf{s}+p\mathbf{e}_k)\cdot \mathbf{u}+p\mathbf{e}_1+\mathbf{m}+pw_eI_w\mathbf{k}_w\\
    &-\mathbf{a}_k\cdot \mathbf{u} \cdot \mathbf{s}+p\mathbf{e}_2\cdot \mathbf{s} \\
    &=p\mathbf{e}_k\cdot \mathbf{u}+p\mathbf{e}_1+pw_eI_w\mathbf{k}_w+p\mathbf{e}_2\cdot \mathbf{s}+\mathbf{m}.
\end{aligned}  
\end{equation}

By \cref{lm:norm_mul}, we have
$\Vert \mathbf{e}_k \cdot \mathbf{u} \Vert \leq N \Vert \mathbf{e}_k \Vert \cdot \Vert \mathbf{u} \Vert$.
Since $\mathbf{u} \sim \chi_s$ and $\mathbf{s} \sim \chi_s$, it follows that
$\Vert \mathbf{u} \Vert \leq 1$ and $\Vert \mathbf{s} \Vert \leq 1$.
Therefore,
$\Vert \mathbf{e}_k \cdot \mathbf{u} \Vert \leq N \Vert \mathbf{e}_k \Vert$,
and
$\Vert \mathbf{e}_2 \cdot \mathbf{s} \Vert \leq N \Vert \mathbf{e}_k \Vert$.

Hence, we obtain the following inequality:
\begin{equation*}
    \begin{aligned}
     \Vert \langle\sk,c_t^\prime\rangle \Vert
    &=\Vert p\mathbf{e}_k\cdot \mathbf{u}+p\mathbf{e}_1+pw_eI_w\mathbf{k}_w+p\mathbf{e}_2\cdot \mathbf{s} +\mathbf{m} \Vert \\ 
    & \leq pN\Vert \mathbf{e}_k \Vert  + p\Vert \mathbf{e}_1 \Vert+pI_w\Vert \mathbf{k}_w \Vert+pN\Vert \mathbf{e}_2 \Vert +p \\ 
    &\leq 2pNB_e+pI_wB_e+2pB_e.     
    \end{aligned}
\end{equation*}
When $I_w < \frac{q}{2pB_e} - 2N - 2$, we have $\Vert \langle \sk, c_t' \rangle \Vert < \frac{q}{2}$, which implies correct decryption.
\end{proof}

The noise introduced by watermark embedding is mainly affected by $p$, $q$, $I_w$, and $B_e$.
Typically, $p \ll q$, which provides a certain amount of redundancy for watermark embedding and allows $I_w$ to take relatively large values. Furthermore, based on the above analysis, we observe that ciphertexts of \skscheme provide a larger embedding space for watermarking than ciphertexts of \pkscheme.

\subsection{Correctness}

\begin{theorem}
\label{th:pk_e_v}
Let a ciphertext of \pkscheme be
$c_t=(\mathbf{k}_1\cdot \mathbf{u}+p\mathbf{e}_1+\mathbf{m},\mathbf{k}_2\cdot \mathbf{u}+p\mathbf{e}_2)$,
where $\mathbf{u} \sim \chi_u$ and $\mathbf{e}_1,\mathbf{e}_2 \sim \chi_e$.
After embedding the watermark into $c_t$, the expectation of the correlation value $\rho_{pk}$ is $w_e\sigma^2$, and the variance is $\frac{4\sigma^4}{3I_w^2}+\frac{\sigma^4}{NI_w^2}+\frac{2w_e^2\sigma^4}{N}$.
\end{theorem}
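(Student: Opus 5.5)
The plan follows the proof of \cref{th:sk_e_v}, with the extra noise terms of \pkscheme from \cref{eq:noise_in_pk} handled as well.

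\textbf{Step 1: the extracted noise.} By \cref{eq:noise_in_pk} and \cref{eq:extract_noise}, and ignoring $\mathbf{m}$ and rounding as before, the extracted noise is
\[
\mathbf{e}' = \tfrac{1}{I_w}\left(\mathbf{e}_k\cdot\mathbf{u}+\mathbf{e}_1+\mathbf{e}_2\cdot\mathbf{s}\right)+w_e\mathbf{k}_w .
\]
Hence
\[
\rho_{pk}=\frac{1}{NI_w}\left(\langle \mathbf{e}_k\mathbf{u},\mathbf{k}_w\rangle+\langle\mathbf{e}_2\mathbf{s},\mathbf{k}_w\rangle+\langle\mathbf{e}_1,\mathbf{k}_w\rangle\right)+\frac{w_e}{N}\langle\mathbf{k}_w,\mathbf{k}_w\rangle .
\]
The analysis takes $\mathbf{u}\sim\chi_s$, as in \pkEn; the statement's ``$\chi_u$'' appears to be a typo.

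\textbf{Step 2: expectation.} Every cross term is an inner product of $\mathbf{k}_w$ with a vector that is independent of $\mathbf{k}_w$ and has zero mean. For $\mathbf{e}_k\mathbf{u}$ and $\mathbf{e}_2\mathbf{s}$ the zero mean is \cref{lm:r_xy}. Each cross term therefore has zero expectation by \cref{lm:xy_e_v} applied coordinatewise. \Cref{lm:xx_e} gives $\E(\langle\mathbf{k}_w,\mathbf{k}_w\rangle)=N\sigma^2$, so $\E(\rho_{pk})=w_e\sigma^2$.

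\textbf{Step 3: variance, term by term.} The four summands should be pairwise uncorrelated. Each pair of cross terms has covariance $\sum_i\E(\mathbf{k}_w[i]^2)\,\E(\mathbf{x}[i])\,\E(\mathbf{y}[i])$, which vanishes because the noise vectors are mutually independent with zero mean. Each cross term is uncorrelated with $\langle\mathbf{k}_w,\mathbf{k}_w\rangle$ by \cref{lm:cov}. The individual variances are as follows.
\begin{itemize}
\item $\var(\langle\mathbf{e}_1,\mathbf{k}_w\rangle)=N\sigma^4$, by \cref{lm:xy_e}.
\item For $\langle \mathbf{e}_k\mathbf{u},\mathbf{k}_w\rangle$, \cref{lm:r_xy} gives each coordinate of $\mathbf{e}_k\mathbf{u}$ mean $0$ and variance $\frac{2N\sigma^2}{3}$. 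These coordinates are uncorrelated, being sums over disjoint index pairs, as in the proof of \cref{lm:r_xy}. Multiplying by the independent $\mathbf{k}_w[i]$ and applying \cref{lm:xy_e_v} yields a variance of $N\cdot\frac{2N\sigma^2}{3}\cdot\sigma^2=\frac{2N^2\sigma^4}{3}$.
\item The same computation gives $\frac{2N^2\sigma^4}{3}$ for $\langle\mathbf{e}_2\mathbf{s},\mathbf{k}_w\rangle$.
\item $\var(\langle\mathbf{k}_w,\mathbf{k}_w\rangle)=2N\sigma^4$, by \cref{lm:xx_e}.
\end{itemize}
Summing with the prefactors $\frac{1}{N^2I_w^2}$ and $\frac{w_e^2}{N^2}$ gives
\[
\frac{4N^2\sigma^4/3+N\sigma^4}{N^2I_w^2}+\frac{2w_e^2\sigma^4}{N}
=\frac{4\sigma^4}{3I_w^2}+\frac{\sigma^4}{NI_w^2}+\frac{2w_e^2\sigma^4}{N},
\]
which is the claimed variance.

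\textbf{Main obstacle.} The delicate step is justifying that the coordinates of $\mathbf{e}_k\mathbf{u}$ are pairwise uncorrelated, so that the variance of the inner product is the sum of coordinate variances. The plan is to expand $\cov(\mathbf{z}[i],\mathbf{z}[i'])$ for $i\neq i'$ into terms $\E(\mathbf{e}_k[j]\mathbf{u}[l]\,\mathbf{e}_k[j']\mathbf{u}[l'])$. A term can be nonzero only if $j=j'$ and $l=l'$, and that forces $i=i'$. Everything is then routine bookkeeping, with modular reduction ignored as in \cref{lm:r_xy}.
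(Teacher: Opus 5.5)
Your proposal is correct and follows the paper's proof: you use the same decomposition of $\rho_{pk}$ into the \skscheme terms plus $\langle \mathbf{e}_k\mathbf{u},\mathbf{k}_w\rangle$ and $\langle \mathbf{e}_2\mathbf{s},\mathbf{k}_w\rangle$, each given variance $\frac{2N^2\sigma^4}{3}$ via \cref{lm:r_xy}, and you are right that $\mathbf{u}$ must come from $\chi_s$ for that lemma to apply. Your covariance checks fill in steps the paper leaves implicit, but your ``main obstacle'' is not actually needed: the coordinates of $\mathbf{k}_w$ are independent, zero-mean, and independent of $\mathbf{z}=\mathbf{e}_k\mathbf{u}$, so the terms $\mathbf{z}[i]\mathbf{k}_w[i]$ are pairwise uncorrelated whatever the correlations among the $\mathbf{z}[i]$.
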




\begin{proof}
According to \cref{eq:noise_in_pk} and \cref{eq:extract_noise}, the correlation value extracted from the ciphertext $c_t$ is given by
\begin{equation*}
    \rho_{pk}=\frac{1}{NI_w}\cdot \langle \mathbf{e}_k\cdot \mathbf{u}+\mathbf{e}_1+w_eI_w\mathbf{k}_w+\mathbf{e}_2\cdot \mathbf{s} , \mathbf{k}_w\rangle.
\end{equation*}


By \cref{lm:r_xy}, the expectations of $\mathbf{e}_k\cdot \mathbf{u}$ and $\mathbf{e}_2\cdot \mathbf{s}$ are both $0$, and their variances are both $\frac{2N\sigma^2}{3}$.
Therefore, we have
\begin{equation*}
    \begin{aligned}
        \E(\rho_{pk})&=\E(\frac{1}{NI_w}\cdot \langle \mathbf{e}_k\cdot \mathbf{u}+\mathbf{e}_1+w_eI_w\mathbf{k}_w+\mathbf{e}_2\cdot \mathbf{s} , \mathbf{k}_w \rangle)\\
        &=\frac{1}{NI_w}(\E(\langle\mathbf{e}_k\cdot \mathbf{u},\mathbf{k}_w\rangle)+\E(\langle\mathbf{e}_2\cdot \mathbf{s},\mathbf{k}_w\rangle))\\
        &+\E(\rho_{sk})\\
        &=w_e\sigma^2,    
    \end{aligned}
\end{equation*}
and
\begin{equation*}
\begin{aligned}
    \var(\rho_{pk})&=\var(\frac{1}{NI_w}\cdot \langle \mathbf{e}_k\cdot \mathbf{u}+\mathbf{e}_1+w_eI_w\mathbf{k}_w+\mathbf{e}_2\cdot \mathbf{s} , \mathbf{k}_w \rangle)\\
    &=\frac{1}{N^2I_w^2}(\var(\langle\mathbf{e}_k\cdot \mathbf{u},\mathbf{k}_w\rangle)+\var(\langle\mathbf{e}_2\cdot \mathbf{s},\mathbf{k}_w\rangle))\\
    &+\var(\rho_{sk})\\
    &=\frac{1}{N^2I_w^2}(\frac{2N^2\sigma^4}{3}+\frac{2N^2\sigma^4}{3})+\frac{\sigma^4}{NI_w^2}+\frac{2w_e^2\sigma^4}{N}\\
    &=\frac{4\sigma^4}{3I_w^2}+\frac{\sigma^4}{NI_w^2}+\frac{2w_e^2\sigma^4}{N}.
\end{aligned}
\end{equation*}


\end{proof}

Based on the above analysis, we observe that the variance of the correlation value in \pkscheme is significantly larger than that in \skscheme.
This implies that, to make the extracted correlation values more concentrated, \pkscheme requires a larger embedding intensity $I_w$.

\subsection{Robustness}
\begin{theorem}
\label{th:pk_r}
Let the \pkscheme ciphertext 
$c_{t,0}=(\mathbf{k}_1\cdot \mathbf{u}_0+p\mathbf{e}_1+\mathbf{m}_0+pw_eI_w\mathbf{k}_w,\mathbf{k}_2 \cdot \mathbf{u}_0+p\mathbf{e}_2)$ 
be the ciphertext embedded with watermark $w_e$ with embedding key $\mathbf{k}_w$, and let 
$c_{t,1}=(\mathbf{k}_1\cdot \mathbf{u}_1+p\mathbf{e}_3+\mathbf{m}_1+pw_e^\prime I_w^\prime \mathbf{k}_w^\prime,\mathbf{k}_2 \cdot \mathbf{u}_1+p\mathbf{e}_4)$ be the ciphertext embedded with watermark $w_e^\prime$ with embedding key $\mathbf{k}_w^\prime$. 
The embedding key $\mathbf{k}_w$ is independent of $\mathbf{k}_w^\prime$.  
Let $\rho_{pk}$ denote the correlation value of $c_{t,0}$ and $\rho^\prime_{pk}$ denote the correlation value computed from the ciphertext $c_{t,0}+c_{t,1}$ when taking $\mathbf{k}_w$ as extracting key. Then,
$\E(\rho_{pk})=\E(\rho^\prime_{pk}),$
and
$\var(\rho^\prime_{pk})=\frac{\sigma^4}{NI_w^2}+\frac{w_e^{\prime2} I_w^{\prime2}\sigma^4}{NI_w^2}+\frac{4\sigma^4}{3I_w^2}+\var(\rho_{pk}).$
\end{theorem}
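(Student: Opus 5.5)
We follow the template of the proof of \cref{th:sk_r_2}, replacing the \skscheme noise with the \pkscheme noise of \cref{eq:noise_in_pk}. First we write out $c_{t,0}+c_{t,1}$ and compute $\langle \sk, c_{t,0}+c_{t,1}\rangle$. As in \cref{eq:noise_in_pk}, the $\mathbf{a}_k$ terms cancel, which leaves
\begin{equation*}
p\mathbf{e}_k\cdot(\mathbf{u}_0+\mathbf{u}_1)+p(\mathbf{e}_1+\mathbf{e}_3)+p(\mathbf{e}_2+\mathbf{e}_4)\cdot\mathbf{s}+pw_eI_w\mathbf{k}_w+pw_e^\prime I_w^\prime\mathbf{k}_w^\prime+\mathbf{m}_0+\mathbf{m}_1 .
\end{equation*}
Ignoring the plaintext and rounding, as done throughout, \cref{eq:extract_noise} and \cref{eq:compute_sm} give
\begin{equation*}
\rho^\prime_{pk}=\rho_{pk}+\frac{1}{NI_w}\langle \mathbf{e}_k\cdot\mathbf{u}_1+\mathbf{e}_3+\mathbf{e}_4\cdot\mathbf{s}+w_e^\prime I_w^\prime\mathbf{k}_w^\prime,\mathbf{k}_w\rangle .
\end{equation*}
So $\rho^\prime_{pk}$ is $\rho_{pk}$ plus three extra correlation terms.

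\textbf{Expectation.} Each extra term has zero mean by the same reasoning as before. The vector $\mathbf{e}_3$ is independent of $\mathbf{k}_w$, so \cref{lm:xy_e} applies; $\mathbf{k}_w^\prime$ is also independent of $\mathbf{k}_w$, so \cref{lm:xy_e} applies again. For the products $\mathbf{e}_k\cdot\mathbf{u}_1$ and $\mathbf{e}_4\cdot\mathbf{s}$, \cref{lm:r_xy} shows every coefficient has mean $0$, and each is independent of $\mathbf{k}_w$. Hence $\E(\rho^\prime_{pk})=\E(\rho_{pk})$.

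\textbf{Variance.} We argue that all extra terms are pairwise uncorrelated and uncorrelated with $\rho_{pk}$. Each is of the form $\langle \mathbf{x},\mathbf{k}_w\rangle$ with $\mathbf{x}$ independent of $\mathbf{k}_w$ and zero mean. \cref{lm:cov} then kills the cross covariances with $\langle\mathbf{k}_w,\mathbf{k}_w\rangle$, exactly as in the proofs of \cref{th:sk_e_v} and \cref{th:pk_e_v}. With the cross terms gone, we sum the individual variances, each divided by $N^2I_w^2$:
\begin{itemize}
\item $\var(\langle\mathbf{e}_3,\mathbf{k}_w\rangle)=N\sigma^4$ by \cref{lm:xy_e}, contributing $\sigma^4/(NI_w^2)$;
\item $\var(\langle\mathbf{k}_w^\prime,\mathbf{k}_w\rangle)=N\sigma^4$ by \cref{lm:xy_e}, contributing $w_e^{\prime2}I_w^{\prime2}\sigma^4/(NI_w^2)$;
\item $\var(\langle \mathbf{e}_k\cdot\mathbf{u}_1+\mathbf{e}_4\cdot\mathbf{s},\mathbf{k}_w\rangle)$, computed via \cref{lm:r_xy} and \cref{lm:xy_e_v} as in \cref{th:pk_e_v}, giving $\tfrac{2N^2\sigma^4}{3}$ per product and $\tfrac{4N^2\sigma^4}{3}$ in total, which contributes $4\sigma^4/(3I_w^2)$.
\end{itemize}
Adding $\var(\rho_{pk})$ yields the stated formula.

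\textbf{Main obstacle.} The delicate point is the independence assumptions. The extra terms $\mathbf{e}_k\cdot\mathbf{u}_1$ and $\mathbf{e}_k\cdot\mathbf{u}_0$ share the public-key noise $\mathbf{e}_k$, and both $\mathbf{e}_2\cdot\mathbf{s}$ and $\mathbf{e}_4\cdot\mathbf{s}$ share $\mathbf{s}$. So the new term is not literally independent of $\rho_{pk}$. We will argue that these cross covariances still vanish because $\mathbf{u}_0,\mathbf{u}_1$ (respectively $\mathbf{e}_2,\mathbf{e}_4$) are independent and zero mean. Concretely, $\E(\mathbf{e}_k[j]\mathbf{u}_0[l]\,\mathbf{e}_k[j^\prime]\mathbf{u}_1[l^\prime]\,\mathbf{k}_w[\cdot]^2)$ factors through $\E(\mathbf{u}_1[l^\prime])=0$, in the spirit of \cref{lm:cov}. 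This is the step to check carefully. It is also the step where we adopt the paper's level of rigor, treating coefficients of $\mathbf{e}_k\cdot\mathbf{u}$ as uncorrelated with the needed variances.
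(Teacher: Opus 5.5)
Your proposal is correct and follows the paper's proof. Like the paper, you write $\rho^\prime_{pk}$ as $\rho_{pk}$ plus the correlation of $\mathbf{e}_k\cdot\mathbf{u}_1+\mathbf{e}_3+w_e^\prime I_w^\prime\mathbf{k}_w^\prime+\mathbf{e}_4\cdot\mathbf{s}$ with $\mathbf{k}_w$, show this extra term has zero mean, and add its variance $\frac{\sigma^4}{NI_w^2}+\frac{w_e^{\prime2}I_w^{\prime2}\sigma^4}{NI_w^2}+\frac{4\sigma^4}{3I_w^2}$ to $\var(\rho_{pk})$; your explicit check that the cross covariances created by the shared $\mathbf{e}_k$ and $\mathbf{s}$ vanish (they factor through $\E(\mathbf{u}_1[l])=0$ and $\E(\mathbf{e}_4[l])=0$) is sound and fills a step the paper's proof passes over silently.
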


\begin{proof}
After extracting the noise from $c_{t,0}+c_{t,1}$ using \cref{eq:extract_noise}, we obtain
\begin{equation*}
    \begin{aligned}
    \rho&=\frac{1}{NI_w}\cdot \langle \mathbf{e}_k \cdot \mathbf{u}_0+\mathbf{e}_1+w_eI_w\mathbf{k}_w+\mathbf{e}_2\cdot \mathbf{s}\\
    &+\mathbf{e}_k \cdot \mathbf{u}_1+\mathbf{e}_3+w_e^\prime I_w^\prime \mathbf{k}_w^\prime+\mathbf{e}_4\cdot \mathbf{s},\mathbf{k}_w\rangle.
    \end{aligned}
\end{equation*}

Because $\mathbf{k}_w^\prime$ and $\mathbf{k}_w$ are independent and the have zero expectation, we have
\begin{equation*}
    \begin{aligned}
    \E(\rho^\prime_{pk})&=\frac{1}{NI_w}\E(\langle \mathbf{e}_k \cdot \mathbf{u}_0+\mathbf{e}_1+w_eI_w\mathbf{k}_w+\mathbf{e}_2\cdot \mathbf{s}\\
    &+\mathbf{e}_k \cdot \mathbf{u}_1+\mathbf{e}_3+w_e^\prime I_w^\prime \mathbf{k}_w^\prime+\mathbf{e}_4\cdot \mathbf{s},\mathbf{k}_w\rangle)\\
    &=\frac{1}{NI_w}\E(\langle\mathbf{e}_k \cdot \mathbf{u}_1+\mathbf{e}_3+w_e^\prime I_w^\prime \mathbf{k}_w^\prime+\mathbf{e}_4\cdot \mathbf{s},\mathbf{k}_w\rangle)\\
    &+\E(\rho_{pk})\\
    &=\E(\rho_{pk}),    
    \end{aligned}
\end{equation*}
and
\begin{equation*}
    \begin{aligned}
    \var(\rho^\prime_{pk})&=\frac{1}{N^2I_w^2}\var(\langle \mathbf{e}_k \cdot \mathbf{u}_0+\mathbf{e}_1+w_eI_w\mathbf{k}_w+\mathbf{e}_2\cdot \mathbf{s}\\
    &+\mathbf{e}_k \cdot \mathbf{u}_1+\mathbf{e}_3+w_e^\prime I_w^\prime \mathbf{k}_w^\prime+\mathbf{e}_4\cdot \mathbf{s},\mathbf{k}_w\rangle)\\
    &=\frac{1}{N^2I_w^2}\var(\langle \mathbf{e}_k \cdot \mathbf{u}_1+\mathbf{e}_3+w_e^\prime I_w^\prime \mathbf{k}_w^\prime+\mathbf{e}_4\cdot \mathbf{s},\mathbf{k}_w\rangle)\\
    &+\var(\rho_{pk})\\
    &=\frac{\sigma^4}{NI_w^2}+\frac{w_e^{\prime2} I_w^{\prime2}\sigma^4}{NI_w^2}+\frac{4\sigma^4}{3I_w^2}+\var(\rho_{pk}).    
    \end{aligned}
\end{equation*}

\end{proof}

Therefore, for \AddRWatermark watermarked \pkscheme ciphertexts, adding another ciphertext embedded with watermark $w_e^\prime$ increases the variance of $\rho$ by
$
\frac{\sigma^4}{NI_w^2}+\frac{w_e^{\prime2} I_w^{\prime2}\sigma^4}{NI_w^2}+\frac{4\sigma^4}{3I_w^2}.
$
As the variance increases, the distributions of $\rho$ corresponding to different values of $w_e$ are more likely to overlap, which leads to a lower TPR. However, the robustness of the watermark can be improved by increasing the dimension $N$ and the embedding intensity $I_w$.

\end{document}